\def\onlymaintext{0}
\def\draft{1}
\newcommand{\supp}{\textup{\textsf{supp}}}
\newcommand{\lpd}{\textup{\texttt{LPD}}}
\newcommand{\mps}{\textup{\texttt{MPS}}}
\newcommand{\bigO}{\mathcal{O}}
\newcommand{\ii}{\textup{i}}
\newcommand{\ob}{O}
\newcommand{\eps}{\epsilon}
\newcommand{\poly}{\textup{poly}}
\newcommand{\dt}{\textup{d} t}
\newcommand{\scrE}{\mathscr{E}}
\newcommand{\statee}{\mathscr{E}}
\newcommand{\bbE}{\mathbb{E}}
\newcommand{\vbx}{\vb{x}}
\newcommand{\vertiii}[1]{{\left\vert\kern-0.25ex\left\vert\kern-0.25ex\left\vert #1
		\right\vert\kern-0.25ex\right\vert\kern-0.25ex\right\vert}}
\newcommand{\Vertiii}[1]{{\vert\kern-0.25ex\vert\kern-0.25ex\vert #1
		\vert\kern-0.25ex\vert\kern-0.25ex\vert}}
\newcommand{\ceil}[1]{\left\lceil{#1}\right\rceil}
\newcommand{\Input}{\textup{input}}
\newcommand{\nfnorm}[1]{\norm{#1}_{\bar{2}}}
\newcommand{\cumnorm}{\mathcal{N}}
\newcommand{\ko}{k_o}
\newcommand{\kh}{k_h}
\newcommand{\fLambda}{\Lambda_{\bar{2}}}
\newcommand{\pbasis}{\mathbb{P}}
\newcommand{\npbasis}{\mathbb{\overline{P}}}
\newcommand{\opnorm}[1]{\norm{#1}_{\mathrm{\infty}}}
\newcommand{\psie}{\psi_{S}}
\newcommand{\pf}{\tilde{U}}
\newcommand{\tO}{\tilde{O}}
\newcommand{\haar}{\mathrm{Haar}}
    \newcommand{\jue}[2][0]{%
      \ifnum#1=0%
        \textcolor{violet}{({\bf Jue:} #2)}%
      \else%
        \textcolor{violet}{[\footnotesize {\bf Jue:} \sout{#2}]}%
      \fi%
    }
    \newcommand{\jue}[1]{\textcolor{brown}{\vphantom{\hphantom{(Jue: #1)}}}}
\definecolor{chuColor1}{HTML}{EE6969}
\begin{document}
\let\oldacl\addcontentsline
\renewcommand{\addcontentsline}[3]{}

\newcommand{\mytitle}{Classical Simulation of Noiseless Quantum Dynamics without Randomness}

\title{\mytitle}
\author{Jue Xu}
\affiliation{QICI Quantum Information and Computation Initiative, Department of Computer Science, School of Computing and Data Science, The University of Hong Kong, Pokfulam Road, Hong Kong SAR, China}
\author{Chu Zhao}
\affiliation{Department of Electrical and Computer Engineering, Duke University, Durham, NC 27708, USA}
\author{Xiangran Zhang}
\affiliation{QICI Quantum Information and Computation Initiative, Department of Computer Science, School of Computing and Data Science, The University of Hong Kong, Pokfulam Road, Hong Kong SAR, China}
\author{Shuchen Zhu}
\affiliation{Department of Mathematics, Duke University, Durham, NC 27708, USA}
\author{Qi Zhao}
\email{zhaoqi@cs.hku.hk}
\affiliation{QICI Quantum Information and Computation Initiative, Department of Computer Science, School of Computing and Data Science, The University of Hong Kong, Pokfulam Road, Hong Kong SAR, China}

\date{\today}
\begin{abstract}

Simulating noiseless quantum dynamics classically faces a fundamental dilemma: 
tensor-network methods become inefficient as entanglement saturates, 
while Pauli-truncation approaches typically rely on noise or randomness.
To close the gap, we propose the Low-weight Pauli Dynamics (LPD) algorithm that efficiently approximates local observables for short-time dynamics in the absence of noise.
We prove that the truncation error admits an average-case bound without assuming randomness, 
provided that the state is sufficiently entangled. 
Counterintuitively, entanglement--usually an obstacle for classical simulation--alleviates classical simulation error.
We further show that such entangled states can be generated either by tensor-network classical simulation or near-term quantum devices.
Thus, our results establish a rigorous synergy between existing classical simulation methods and provide a complementary route to quantum simulation that reduces circuit depth for long-time dynamics, 
thereby extending the accessible regime of quantum dynamics.

\end{abstract}
\maketitle

\mysection[1]{Introduction}
Local observables such as correlators of Hamiltonian dynamics 
serve as key indicators of intriguing quantum phenomena 
including quantum phase transitions \cite{martinezRealtimeDynamicsLattice2016, landsmanVerifiedQuantumInformation2019,bernienProbingManybodyDynamics2017,ebadiQuantumPhasesMatter2021}.
Nevertheless, direct simulation of generic Hamiltonian dynamics is intractable for classical computers due to the exponential growth of complexity with system size.
This difficulty motivated Feynman's proposal of a quantum computer \cite{feynmanSimulatingPhysicsComputers1982, feynmanQuantumMechanicalComputers1985},
making Hamiltonian simulation a primary candidate for practical quantum advantage
\cite{jordanQuantumAlgorithmsQuantum2012,ciracGoalsOpportunitiesQuantum2012,georgescuQuantumSimulation2014,heylQuantumLocalizationBounds2019, altmanQuantumSimulatorsArchitectures2021,daleyPracticalQuantumAdvantage2022,huangLearningManyBodyHamiltonians2023, kimEvidenceUtilityQuantum2023, wangRealizationFractionalQuantum2024, guoSiteresolvedTwodimensionalQuantum2024,abaninObservationConstructiveInterference2025}.
Alongside the rapid advancement of quantum simulation \cite{berryEfficientQuantumAlgorithms2007,berrySimulatingHamiltonianDynamics2015,lowOptimalHamiltonianSimulation2017}, 
classical simulations of Hamiltonian dynamics have also improved significantly for specific cases, challenging quantum simulation advantages, as shown in \cref{fig:venn} and \cref{tab:comparison}.
For example,
the tensor network methods \cite{vidalEfficientClassicalSimulation2003,vidalEfficientSimulationOnedimensional2004,markovSimulatingQuantumComputation2008,zhouWhatLimitsSimulation2020} can efficiently approximate quantum states and operators of low entanglement.
And the cluster expansion methods \cite{wildClassicalSimulationShortTime2023,wuEfficientClassicalAlgorithm2024,mannAlgorithmicClusterExpansions2024} 
offer solutions for short time evolution with product initial states.
Yet, long-time Hamiltonian dynamics typically induces large entanglement that hinders classical simulation.

In the NISQ era, noise in quantum circuits renders quantum simulation \cite{kimEvidenceUtilityQuantum2023} more amenable to classical simulation \cite{begusicFastConvergedClassical2024, tindallEfficientTensorNetwork2024, fontanaClassicalSimulationsNoisy2025}. 
Among these classical simulation techniques, the Pauli truncation style algorithms
have received significant attention for their theoretical guarantee,
since first introduced for noisy random circuit sampling \cite{bremnerAchievingQuantumSupremacy2017,gaoEfficientClassicalSimulation2018,rallSimulationQubitQuantum2019,odonnellAnalysisBooleanFunctions2021,aharonovPolynomialTimeClassicalAlgorithm2023}.
Here, \emph{Pauli truncation} is a general term for the classical algorithms that approximate Hamiltonian dynamics or quantum circuits by truncating Pauli operators obeying certain criteria \cite{rudolphPauliPropagationComputational2025}.
Later, Refs.~\cite{rudolphClassicalSurrogateSimulation2023,shaoSimulatingNoisyVariational2024} proved that truncating high-weight Pauli paths can approximate expectation values of noisy parameterized circuits with bounded average error over parameter space.
Moreover, Pauli truncation was extended to more general cases, such as the arbitrary noises \cite{martinezEfficientSimulationParametrized2025,angrisaniSimulatingQuantumCircuits2025}, any circuits with random input states \cite{schusterPolynomialtimeClassicalAlgorithm2025}, and the worst-case scenarios \cite{gonzalez-garciaPauliPathSimulations2025}.
Nevertheless, their error bounds crucially rely on the damping effect of noise channels as well as the randomness of circuits or states.

\begin{figure}[!t]
    \centering
    \includegraphics[width=0.96\linewidth]{./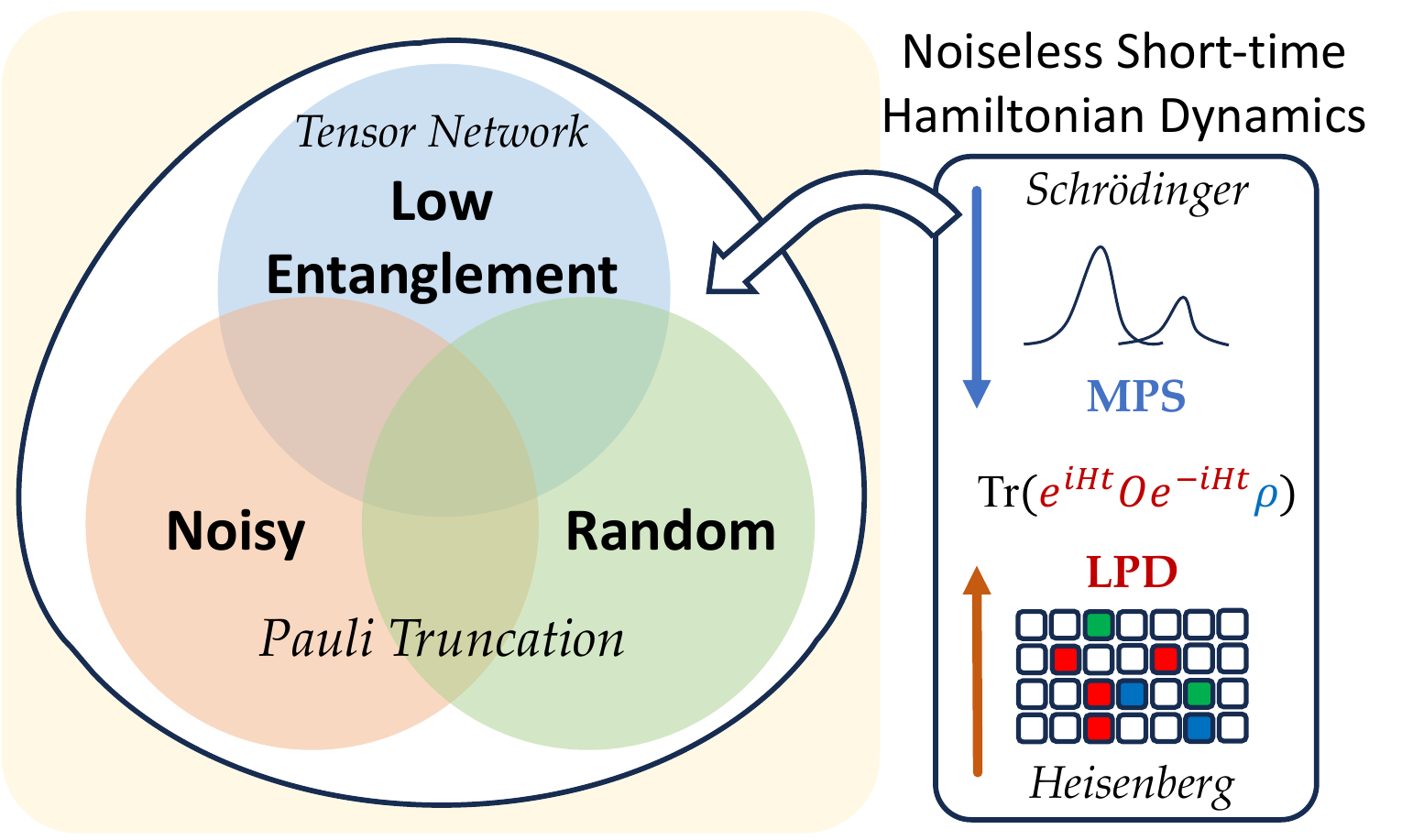}
    \caption{
    Regimes of efficient classical simulations for Hamiltonian dynamics.
    Tensor network methods efficiently represent quantum states with low entanglement,
    whereas existing theoretical guarantees for Pauli-truncation methods rely on noise or randomness.
    Our work shows the state entanglement actually suppresses the Pauli truncation error in the noiseless case, thereby broadening the regime of classical simulation.
    }
    \label{fig:venn}
\end{figure}

\begin{figure*}[!t]
    \centering
    \includegraphics[width=0.99\linewidth]{./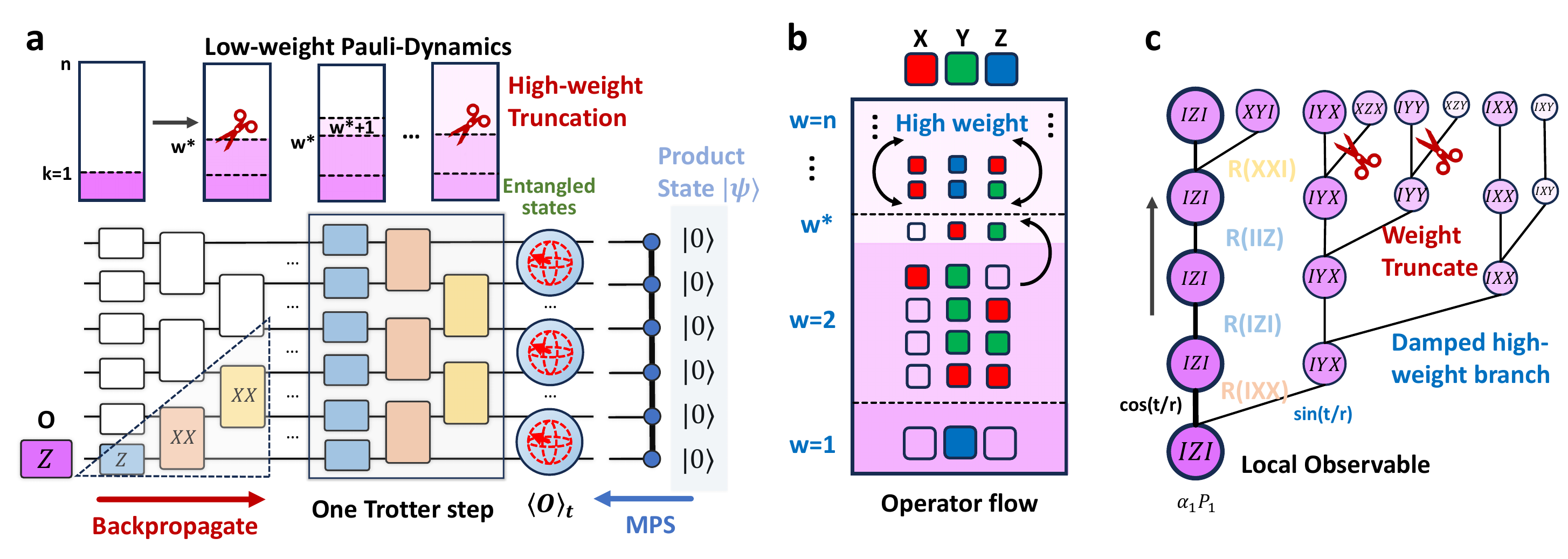}
    \caption{
    The illustration of the $\lpd$ (Low-weight Pauli Dynamics) algorithm.
    (a) The Trotterized backward evolution of local observable with high-weight truncation and light-cone argument.
    The initial product state is evolved to a sufficiently entangled state by the MPS method.
    The expectation value is evaluated by the low-weight observable and the entangled state.
    (b) The local operator flow under local Pauli rotations. 
    The norm of Paulis with certain weight flowing to higher weights is damped by small angle $\dt$.
    The darker color represents a larger norm contribution of Paulis with certain weight to the norm of the evolved observable.
    (c) Truncation of high-weight Pauli operators in one step.
    The colors and the sizes of the bubbles represent the magnitudes of the Pauli coefficients.
    In this example, the Pauli operators with weight larger than $w^*=2$ are truncated.
    }
    \label{fig:main}
\end{figure*}

Surprisingly, even without noise and randomness,
the Pauli truncation algorithms still perform well for certain Hamiltonian dynamics.
Begušić et al. \cite{begusicSimulatingQuantumCircuit2025,begusicRealTimeOperatorEvolution2025} proposed the heuristics by truncating small-coefficient Paulis in Trotterized evolution,
while Angrisani et al. \cite{angrisaniClassicallyEstimatingObservables2025} proved truncating high-weight Paulis in noiseless locally scrambling circuits yields bounded average error and it is empirically effective for some Hamiltonians.
These results motivate a systematic study of the theoretical guarantees and limitations of Pauli truncation for noiseless Hamiltonian dynamics without randomness assumptions.

To bridge this gap,
we propose a provably efficient classical algorithm called \emph{Low-weight Pauli Dynamics} ($\lpd$) for short-time Hamiltonian dynamics,
which approximates observables by truncating high-weight Pauli operators at each Trotter step.
Owing to locality of Paulis and light-cone, the growth of Pauli weights are constrained, and the contributions of high-weight Paulis to expectation values are suppressed by small rotation angles even in the absence of noise.
Moreover, we prove that if the input state has sufficient entanglement entropy, 
the Pauli truncation error in expectation satisfies a nontrivial average-case bound even without randomness.
We show that the sufficiently entangled states are tractable to tensor network methods
\cite{vidalEfficientSimulationOnedimensional2004} 
and they can be prepared by near-term quantum  simulators \cite{zhaoEntanglementAcceleratesQuantum2025}.
Naturally, we combine $\lpd$ with the tensor network state simulation to effectively extend accessible classical simulation time.
On the other hand, our classically evolved low-weight observables essentially reduce the quantum simulation circuit depth for evolving quantum states.
We numerically show the effectiveness of our algorithm with the choatic quantum mixed-field Ising Hamiltonian.

\mysection[1]{Problem Setup and LPD Algorithm}
This work focuses on calculating the expectation of local observables evolved by Hamiltonian dynamics.
Formally,
    given a $\kh$-local 
    \footnote{In this paper, the word ``local'' means that all Paulis in the local operator have constant (system-size independent) weights, not necessarily being geometrically local.}
    $n$-qubit Hamiltonian $H=\sum_l^L \alpha_l G_l$ with
    Paulis $G_l$ and $\alpha_l\in\bigO(1)$, 
    evolution time $t\in\mathbb{R}$, 
    a $\ko$-local observable $O=\sum_j^J \beta_j P_j$ with Paulis $P_j$
    and $\beta_j\in\bigO(1)$,
    and a state $\rho$,
    the \emph{Hamiltonian dynamics expectation} 
    problem is to approximate
    \begin{equation}\label{eq:objective}
        \mu(H,t,O,\rho):=\Tr(e^{\ii Ht} O e^{-\ii Ht} \rho),
    \end{equation}
where the local observable is evolved in the Heisenberg picture to exploit constrained light-cones \cite{bravyiClassicalAlgorithmsQuantum2021, bravyiClassicalSimulationPeaked2024}.

Our $\lpd$ classical simulation is summarized in Alg.~\ref{alg:noiseless}.
The first step is the Trotter approximation, 
where the dynamics evolution $U:= e^{-\ii Ht}$ is approximated 
by the Trotter formula $\pf^r$ and $r$ is the number of repeated Trotter steps.
For example, we can approximate the expectation \cref{eq:objective} by the first-order Trotter formula $\pf_1$
\begin{equation}\label{eq:trotter_formula}
    \tilde{\mu}:=\Tr(\rho \pf^{\dagger r}_1 O \pf^{r}_1), \quad
    \pf_1:=\prod_{l=1}^L e^{-\ii\alpha_l G_l t/r}
    .
\end{equation}
Crucially, to suppress the light-cone spread,
the Hamiltonian $H= \sum_{l=1}^L\alpha_l G_l $ is usually regrouped into $H=\sum_{\gamma=1}^{\Gamma} H_\gamma$ where each $e^{-\ii H_\gamma t/r }$ is the product of Pauli rotations that can be implemented simultaneously.
For example, the one-step Trotter evolution of the nearest-neighbor Hamiltonian can be arranged into $\Gamma\in\bigO(1)$ layers of Pauli rotations
as the brickwork circuit in \cref{fig:main}(a).

After Trotterization of the dynamics, 
each unitary is a Pauli rotation $e^{-\ii G_l\dt}$ generated by a $\kh$-local Pauli operator
with weight at most $\kh$ and $\dt:=\alpha t/r$ is the small rotation angle.
In the Heisenberg picture \cite{gottesmanHeisenbergRepresentationQuantum1998}, 
a Pauli rotation acting on a Pauli $P$ is always either non-branching or 2-branching, 
depending on their commutation relation
\begin{equation}\label{eq:pauli_rotation_branch}
     e^{\ii G\frac{\dt}{2}} P e^{-\ii G\frac{\dt}{2}} = 
    \begin{cases}
        P, &[P,G] = 0 \\
        \cos(\dt) P + \ii \sin(\dt) GP, &\qty{P,G}=0
    \end{cases}
\end{equation}
where $GP=[G,P]/2$ 
is a new Pauli string with weight changed by at most $\kh-1$ and with a small factor $\sin(\dt)$, 
as shown in \cref{fig:main}(b).

Next, our algorithm approximates the Trotterized evolution of the observable $O$ by low-weight Pauli operators.
    For each Trotter step, 
    we apply the Pauli rotations one-by-one to the Pauli operators in $\tO$ by \cref{eq:pauli_rotation_branch}.
    After applying the last gate in one Trotter step,
    we truncate all high-weight Pauli operators above the threshold $w^*$ as exemplified in \cref{fig:main}(c)
\begin{equation} 
    \forall d\in[r],\; 
    \tilde{O}^{(d)}_{\le w^*} 
    = \sum_{Q:\abs{Q}\le w^*} c_Q Q,    
\end{equation}
where $\tO^{(d)}_{\le w^*}$ denotes the truncated observable after applying the last gate in the $d$-th Trotter step and
$\abs{Q}$ denotes the weight of Pauli $Q$, i.e., the number of non-identity Pauli operators in $Q$.
    After all $r$ Trotter steps, 
    the expectation of the evolved observable is approximated with the state $\rho$ by $\tilde{\mu}_{\le w^*} = \Tr(\rho \tilde{O}^{(r)}_{\le w^*})$.

$\lpd$ is naturally suited to hybrid protocols.
If the input state is a product state,
we evolve it in the Schr\"odinger picture forwardly for time $t_F$ by tensor network methods, such as the Matrix Product States ($\mps$) \cite{vidalEfficientClassicalSimulation2003,vidalEfficientSimulationOnedimensional2004}.
Once the evolved state is sufficiently entangled, we switch to the Heisenberg picture and backward evolve the input observable by $\lpd$ for for $t_B:=t-t_F$.
We will prove the entanglement of the state helps bound the Pauli truncation error of the evolved observable. 
On the other hand, the near-term quantum simulators can also prepare entangled states such that our backward-evolved observable effectively reduces the quantum circuit depth for long-time evolution, equivalently extending the simulation time.
We delay the numerical evidence of the hybrid simulation to \cref{fig:hybrid} and put more details in End Matter.

\begin{algorithm}[!t]
    \label{alg:noiseless}
    \DontPrintSemicolon
    \SetAlCapHSkip{0ex} 
    \caption{Low-weight Pauli Dynamics ($\lpd$)}
    \SetKwInOut{Input}{Input}
    \SetKwInOut{Output}{Output}
    \Input{Local Hamiltonian $H$, local observable $\ob$, time $t$, state $\rho$, error tolerance $\eps$}
    \Output{approximation of $\mu=\Tr(e^{\ii Ht}O e^{-\ii Ht} \rho)$}
    \BlankLine
    Set Trotter steps $r$, let $\dt:=t/r$ \tcp*{\cref{eq:average_trotter_number}} 
    Set Pauli truncation weight $w^*$ \tcp*{\cref{thm:truncation_threshold}}
    \If{input state $\rho$ is a product state}{
        Evolve $\rho$ by $\mps$ with $\chi=2^{\bigO(w^*)}$
    }
    $\tilde{O}^{(0)}_{\le w^*} \leftarrow O$ \tcp*{Initialize $O$}
    \For{\texttt{\textup{each Trotter step}} $d\in[r]$}{ 
        $\tO_{\le w^*}^{(d_{0})}\leftarrow \tO_{\le w^*}^{(d-1)}$ \tcp*{Start of the $d$th step} 
        \For{\texttt{\textup{each rotation}} $g\in\qty{e^{-\ii G_l \dt}}$} { 
            $\tilde{O}_{\le w^*}^{(d_{g})}\leftarrow 0$ \tcp*{Rotation $g$ in the $d$th step}
            \For{\texttt{\textup{each Pauli}} $P\in \tilde{O}_{\le w^*}^{(d_{g-1})}$}{ 
                $\tilde{O}_{\le w^*}^{(d_g)} \leftarrow \tilde{O}_{\le w^*}^{(d_g)} + e^{\ii G_l \dt} P e^{-\ii G_l \dt}$  \tcp*{Update}
            }
        }
            $\tilde{O}_{\le w^*}^{(d)} \leftarrow \sum_{Q:\abs{Q}\le w^*} c_Q Q$  \tcp*{Truncation by weight}
    }
    \Return 
    Expectation
    $\tilde{\mu}_{\le w^*}=\Tr(\rho \tilde{O}_{\le w^*}^{(r)})$
\end{algorithm}

\mysection[1]{Analysis: error and complexity}
Most prior works on Pauli truncation \cite{aharonovPolynomialTimeClassicalAlgorithm2023,shaoSimulatingNoisyVariational2024,angrisaniClassicallyEstimatingObservables2025} require the randomness in quantum circuits to cancel cross terms to obtain a non-trivial error bound.
This trick is suitable for random circuit sampling or parameterized quantum algorithms,
but not directly apply to Trotterized Hamiltonian dynamics.
To overcome it, Schuster et al. \cite{schusterPolynomialtimeClassicalAlgorithm2025} leveraged the randomness of input states to achieve average-case bounds for any circuits. 
We realize an average-case bound with a milder assumption: 
a sufficiently entangled input state rather than an ensemble of random states.
More importantly, we bound the high-weight Pauli contribution in noiseless Hamiltonian dynamics
by exploiting the properties of small-angle local Pauli rotations.

    Firstly, to achieve the \emph{Trotter error} \cite{childsTheoryTrotterError2021,zhaoHamiltonianSimulationRandom2021,chenAverageCaseSpeedupProduct2024,yuObservabledrivenSpeedupsQuantum2025,zhaoEntanglementAcceleratesQuantum2025,mizutaTrotterizationSubstantiallyEfficient2025} in the expectation of a local observable $O$ within error tolerance $\eps$,
        that is, $\abs{\bra{\psi}(\pf^{\dagger r}_p O \pf_p^r - U^\dagger O U)\ket{\psi}}\le \eps$
    for the $p$th-order Trotter formula $\pf_p$, 
    we need Trotter steps 
    \begin{equation}\label{eq:average_trotter_number}
        r \in \bigO\qty((\Lambda\cdot\norm{O}\cdot\eps^{-1})^{1/p}\cdot t^{1+1/p}),
    \end{equation}
    where 
    $\Lambda:=\sum_{\gamma_1,\dots,\gamma_{p+1}=1}^{\Gamma} \norm{[H_{\gamma_p+1},[H_{\gamma_p},...,[H_{\gamma_2},H_{\gamma_1}]]]}$ is the operator norm of the nested commutator \cite{childsTheoryTrotterError2021}
    and $H=\sum_{\gamma=1}^\Gamma H_{\gamma}$.
This worst-case bound works for any input states. %
Furthermore, for random input states
\cite{zhaoHamiltonianSimulationRandom2021,chenAverageCaseSpeedupProduct2024,yuObservabledrivenSpeedupsQuantum2025} or a sufficiently entangled input state \cite{zhaoEntanglementAcceleratesQuantum2025},
the operator norm in \cref{eq:average_trotter_number} can be improved to the Pauli 2-norm.
In this paper, 
we denote the \emph{Pauli 2-norm} 
\footnote{The Pauli 2-norm of a Hermitian operator $A$ is the $l_2$-norm of the Pauli coefficients $\vec{c}$ with $A=\sum_{P\in \pbasis_{n}} \alpha_P P$. It is equivalent to the normalized Schatten 2-norm $\nfnorm{A}:=\sqrt{\Tr(A^\dagger A)/2^n}$ of operator $A$ on $n$ qubits, also known as the normalized Frobenius norm} 
as $\nfnorm{A}:=\qty(\sum_{s\in \npbasis_{n}} \Tr(As)^2)^{1/2}$,
where $\npbasis_{n}$ is the Pauli basis.
We put the details of the Trotter error in
\ifnum\onlymaintext=0
    \cref{apd:trotter_error}.
\else
    the Supplementary Materials \cite{seesm}.
\fi

The key challenge of this work is to bound the
    \emph{Pauli truncation error} in expectation values, that is,
    \begin{equation}\label{eq:truncation_error}
        \Delta \tilde{\mu}_{\le w^*}
        := \abs{\bra{\psi}(\tO^{(r)} - \tO^{(r)}_{\le w^*})\ket{\psi}},
    \end{equation}
    where $\tO_{\le w^*}^{(r)}$ is the low-weight approximation of Trotter $\pf$ evolved observable $\tO^{(r)}:=\pf^{\dagger r} O \pf^{r}$ at the last Trotter step with truncation threshold $w^*$.
In the worst-case analysis \cite{dowlingMagicResourcesHeisenberg2025}, \cref{eq:truncation_error} has a loose upper bound by the operator norm of the observable difference
$\norm{\tO^{(r)} - \tO^{(r)}_{\le w^*}}$.
A finer analysis is the average-case one \cite{fullerImprovedQuantumComputation2025,schusterPolynomialtimeClassicalAlgorithm2025},
if the input states are sampled from the 2-design ensemble $\statee_2$,
the average Pauli truncation error has a tighter Pauli 2-norm bound $\nfnorm{\tO^{(r)} - \tO^{(r)}_{\le w^*}}$
\footnote{
The mean-square expectation values can be bounded by the squared Pauli 2-norm of the observable, that is,
 $ \bbE_{\ket{\psi}\sim \statee_2} \qty[\abs{\bra{\psi} O\ket{\psi}}^2] \le \nfnorm{O}^2$
} than the operator norm.

Interestingly, a similar Pauli 2-norm bound can be achieved without random input assumption as long as the observable is local, 
i.e., the Paulis in the observable have constant weights, 
and the input state is sufficiently entangled on its subsystems. 
\begin{lemma}[Pauli 2-norm upper bound on local observable expectation value with entangled states]\label{thm:entangle_ob_bound}
    Given a local observable $O=\sum_j O_j$, 
    if the input state $\ket{\psie}$ has the subsystem entanglement entropy $S(\rho_{j,j'})\ge\abs{\supp(O^\dagger_j O_{j'})}-\frac{1}{2}\nfnorm{O}^4/(\sum_j\norm{O_j})^4$ for all subsystems $\supp(O^\dagger_j O_{j'})$,
    then the squared expectation value
    $ \abs{\bra{\psie} O \ket{\psie}}^2 \le 2\nfnorm{O}^2.$
\end{lemma}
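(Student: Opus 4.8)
\emph{Proof strategy.} The plan is to convert the squared expectation value into a single operator expectation via Cauchy--Schwarz, expand it over the local pieces $O_j$, and then replace each resulting local term by its maximally mixed value, where the replacement error is controlled by the near-maximal subsystem entanglement of $\ket{\psie}$.

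First I would apply Cauchy--Schwarz, $\abs{\bra{\psie} O\ket{\psie}}^2 \le \bra{\psie} O^\dagger O\ket{\psie}$, and expand $O^\dagger O = \sum_{j,j'} O_j^\dagger O_{j'}$. Since $O^\dagger O \ge 0$ and $\Tr(O^\dagger O)/2^n = \nfnorm{O}^2$, the triangle inequality gives
\begin{equation}
    \bra{\psie} O^\dagger O\ket{\psie} \le \nfnorm{O}^2 + \sum_{j,j'} \abs{\bra{\psie} O_j^\dagger O_{j'}\ket{\psie} - \frac{\Tr(O_j^\dagger O_{j'})}{2^n}},
\end{equation}
so it suffices to show the error sum does not exceed $\nfnorm{O}^2$.

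For a fixed pair $(j,j')$, set $A := \supp(O_j^\dagger O_{j'})$ and let $\rho_A$ be the reduced state of $\ket{\psie}$ on $A$. Because $O_j^\dagger O_{j'}$ acts nontrivially only on $A$, both $\bra{\psie} O_j^\dagger O_{j'}\ket{\psie}$ and $\Tr(O_j^\dagger O_{j'})/2^n$ depend on the state only through $\rho_A$ versus the maximally mixed state $I/2^{\abs{A}}$ on $A$; hence Hölder's inequality yields
\begin{equation}
    \abs{\bra{\psie} O_j^\dagger O_{j'}\ket{\psie} - \frac{\Tr(O_j^\dagger O_{j'})}{2^n}} \le \norm{O_j}\,\norm{O_{j'}}\;\trnorm{\rho_A - I/2^{\abs{A}}}.
\end{equation}
I then invoke Pinsker's inequality, which bounds $\trnorm{\rho_A - I/2^{\abs{A}}}$ by a constant times $\sqrt{\abs{A} - S(\rho_A)}$; the hypothesis $S(\rho_A) \ge \abs{A} - \tfrac12\nfnorm{O}^4/(\sum_j \norm{O_j})^4$, whose factor $\tfrac12$ is precisely what absorbs the Pinsker constant, makes this at most $\nfnorm{O}^2/(\sum_j \norm{O_j})^2$. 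Summing over $j,j'$ and using $\sum_{j,j'}\norm{O_j}\norm{O_{j'}} = (\sum_j \norm{O_j})^2$, the error sum collapses to at most $\nfnorm{O}^2$, and therefore $\abs{\bra{\psie} O\ket{\psie}}^2 \le 2\nfnorm{O}^2$.

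The main obstacle is picking the right first move, not any single estimate. Expanding $\abs{\sum_j \bra{\psie} O_j\ket{\psie}}^2$ directly produces \emph{products} of local expectations $\Tr(O_j\rho_{\supp(O_j)})\,\overline{\Tr(O_{j'}\rho_{\supp(O_{j'})})}$; linearizing such products both degrades the final constant past $2$ and forces an entanglement hypothesis on the individual supports $\supp(O_j)$ rather than on the pair supports $\supp(O_j^\dagger O_{j'})$ that actually appear in the statement. Routing through Cauchy--Schwarz is what makes $O_j^\dagger O_{j'}$, and hence the subsystem $A$, the natural object and keeps the accumulated error at exactly $\nfnorm{O}^2$. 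The remaining care is bookkeeping: matching the Pinsker constant to the entropy convention so the bound lands at $2\nfnorm{O}^2$ rather than something larger.
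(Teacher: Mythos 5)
Your proposal is correct and follows essentially the same route as the paper's proof: Cauchy--Schwarz to pass to $\bra{\psie}O^\dagger O\ket{\psie}$, expansion into the pair terms $O_j^\dagger O_{j'}$ supported on $\supp(O_j^\dagger O_{j'})$, a H\"older bound against the trace distance of the reduced state to the maximally mixed state, and Pinsker's inequality combined with the entropy hypothesis (whose factor $\tfrac12$ indeed absorbs the Pinsker constant) to cap the total correction at $\nfnorm{O}^2$. The only cosmetic difference is that you phrase the splitting as a triangle inequality around the centered terms $\bra{\psie}O_j^\dagger O_{j'}\ket{\psie}-\Tr(O_j^\dagger O_{j'})/2^n$, whereas the paper first rewrites each term via the reduced density matrix and then splits; these are equivalent.
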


We sketch the proof of this lemma in End Matter.
Notably, by virtue of the light-cone, $\lpd$ ensures the observable difference $\tO^{(r)} - \tO^{(r)}_{\le w^*}$ have system-size independent Pauli weights sufficing the locality requirement in \cref{thm:entangle_ob_bound} as long as one Trotter step is shallow. 
Meanwhile, the constant entanglement entropy of the input state suffices the requirement in \cref{thm:entangle_ob_bound} and within the capability of $\mps$.  
So, the Pauli truncation error analysis now reduces to bounding the Pauli 2-norm of the observable difference.
Leveraging the triangle inequality, it
can be further bounded by the sum of the norms of high-weight components over all $r$ Trotter steps as
\begin{equation}\label{eq:truncation_error_triangle_bound}
    \Delta \tilde{\mu}_{\le w^*}  \le
    \nfnorm{O^{(r)}-\tO^{(r)}_{\le w^*}}
    \le 2\sum_{d=1}^{r} \nfnorm{\tO^{(d)}_{\ge w^*+1}},
\end{equation}
where the Pauli 2-norm of high-weight components reads 
$\nfnorm{O_{\ge w^*+1}}:= \sqrt{\sum_{w\ge w^*+1} \nfnorm{O_{=w}}^2}$ and 
we define the weight-$w$ component of $O$ as
$O_{=w}:=\sum_{s\in \npbasis_{n},\abs{s}=w} \Tr(Os)s$.

Then, we only need to bound the norms of high-weight Paulis of one Trotter step.
To tackle it without the damping effect by noise,
our core idea is to track the damped local flow of norms from certain weights to higher ones driven by small-angle Pauli rotations as in \cref{fig:main}(b) and \cref{fig:typical_case_pauli}(c). 
We formalize this intuition as below.
\begin{lemma}[Damped local norm flow]\label{thm:weight_decay}
    Given an initial $\ko$-local observable $O$ 
    and the product of $\kh$-local Pauli rotations $U_g:=\prod_{l=1}^{g} e^{-\ii G_l\cdot \dt}$,
    consider the evolved observable $O^{(g)}:= U_g^\dagger O U_g$,
    the sum of high-weight norms $\cumnorm_{\ge m}^{(g)}$
    satisfies the recursion relation
    \begin{equation}\label{eq:worst_local_flow_k_local}
        \cumnorm_{\ge m}^{(g)}\le 
        \cumnorm_{\ge m}^{(g-1)}+\sin(\dt)\cdot \cumnorm_{\ge m-1}^{(g-1)},
    \end{equation}
    where $\cumnorm_{\ge m}^{(g)}:=\sum_{w>w_{m}} \nfnorm{O_{=w}^{(g)}}$ and $w_m:=\ko+(m-1)(\kh-1)$ denotes the maximum weight that can be achieved by $m-1$ anti-commuting $\kh$-local Pauli rotations from $\ko$.
\end{lemma}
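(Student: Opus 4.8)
The plan is to reduce \cref{eq:worst_local_flow_k_local} to the effect of the single rotation that separates $U_g$ from $U_{g-1}$. Since $U_g = U_{g-1}\,e^{-\ii G_g\dt}$, we have $O^{(g)} = e^{\ii G_g\dt}\,O^{(g-1)}\,e^{-\ii G_g\dt}$, so everything is about one $\kh$-local Pauli rotation. Expanding $O^{(g-1)} = \sum_P c_P P$ in the Pauli basis and applying \cref{eq:pauli_rotation_branch} termwise, the Paulis split according to whether they commute or anticommute with $G_g$: commuting $P$ are untouched, while each anticommuting $P$ becomes $\cos\dt\,P + \ii\sin\dt\,G_gP$. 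Grouping the weight-preserving contributions into $D$ and the weight-shifted daughters into $C := \sum_{\{P,G_g\}=0} c_P\,\ii G_gP$ gives $O^{(g)} = D + \sin\dt\,C$, and the triangle inequality for $\nfnorm{\cdot}$ applied to the high-weight block yields $\nfnorm{O^{(g)}_{>w_m}} \le \nfnorm{D_{>w_m}} + \sin\dt\,\nfnorm{C_{>w_m}}$. It then suffices to bound the first term by $\cumnorm_{\ge m}^{(g-1)}$ and the second by $\cumnorm_{\ge m-1}^{(g-1)}$.

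For the diagonal part, the point is that $D$ moves no Pauli weight: for each $w$, $D_{=w}$ carries coefficient $c_P$ on commuting $P$ of weight $w$ and $\cos\dt\,c_P$ on anticommuting $P$ of weight $w$, and these sit on disjoint Pauli coordinates, so $\nfnorm{D_{=w}}^2 = \sum_{[P,G_g]=0,\,|P|=w} c_P^2 + \cos^2\dt\sum_{\{P,G_g\}=0,\,|P|=w} c_P^2 \le \nfnorm{O^{(g-1)}_{=w}}^2$, using $\cos^2\dt\le 1$. Summing over $w>w_m$ gives $\nfnorm{D_{>w_m}} \le \cumnorm_{\ge m}^{(g-1)}$.

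For the branching part, I would use the locality input from \cref{eq:pauli_rotation_branch}: $G_gP = [G_g,P]/2$ has weight at most $|P|+(\kh-1)$ because $G_g$ is $\kh$-local, so a daughter $\ii G_gP$ can land at weight above $w_m$ only if its parent already has weight above $w_m-(\kh-1) = w_{m-1}$. Combined with the fact that $P\mapsto \ii G_gP$ is an injective, Pauli-2-norm-preserving map on the Paulis anticommuting with $G_g$, one gets $\nfnorm{C_{>w_m}}^2 = \sum_{\{P,G_g\}=0,\ |G_gP|>w_m} c_P^2 \le \sum_{|P|>w_{m-1}} c_P^2 = \nfnorm{O^{(g-1)}_{>w_{m-1}}}^2$, i.e.\ $\nfnorm{C_{>w_m}} \le \cumnorm_{\ge m-1}^{(g-1)}$. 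Plugging both bounds into the triangle inequality from the first step produces \cref{eq:worst_local_flow_k_local}, and iterating it over the gates in a Trotter step is what later feeds \cref{eq:truncation_error_triangle_bound}.

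The step I expect to be the main obstacle is the branching bound, because a single rotation can shift a Pauli's weight by any amount in $[-(\kh-1),\kh-1]$, so one source weight can feed several output weights; a careless slice-by-slice estimate would therefore incur a spurious $\kh$-dependent combinatorial factor. The remedy is to track the whole high-weight block rather than individual weight slices and to use that each anticommuting Pauli spawns exactly one daughter: the squared-coefficient mass sitting above weight $w_m$ after the rotation is simply the mass of parents above weight $w_{m-1}$, without duplication, and since $\sin\dt$ is the only amplitude multiplying $C$, the cross term between $D_{>w_m}$ and $C_{>w_m}$ is absorbed by the triangle inequality rather than needing a separate estimate.
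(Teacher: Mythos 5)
Your proposal runs on the same engine as the paper's proof in \cref{apd:pauli_error}: reduce to the single rotation $e^{-\ii G_g\dt}$, split the conjugated observable via \cref{eq:pauli_rotation_branch} into a weight-preserving part and a branching part carrying the factor $\sin(\dt)$, bound the weight shift of a daughter $G_gP$ by $\kh-1$, and use that $P\mapsto G_gP$ is injective and norm-preserving on the anticommuting Paulis. The genuine difference is the quantity you track. The lemma's $\cumnorm_{\ge m}^{(g)}$ is the \emph{sum of per-weight norms} $\sum_{w>w_m}\nfnorm{O^{(g)}_{=w}}$, and the paper's proof proceeds slice by slice in $w$; you instead track the \emph{block} norm $\nfnorm{O^{(g)}_{>w_m}}=\bigl(\sum_{w>w_m}\nfnorm{O^{(g)}_{=w}}^2\bigr)^{1/2}$. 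Since the block norm is a \emph{lower} bound on $\cumnorm_{\ge m}^{(g)}$, your recursion does not literally establish \cref{eq:worst_local_flow_k_local} as stated --- that is the one mismatch you should acknowledge. It is, however, a well-motivated and self-consistent substitute: the recursion closes entirely in block norms, $\nfnorm{O^{(g)}_{>w_m}}\le\nfnorm{O^{(g-1)}_{>w_m}}+\sin(\dt)\,\nfnorm{O^{(g-1)}_{>w_{m-1}}}$, the induction to $\binom{g}{m}\sin^m(\dt)\nfnorm{O}$ goes through with an even cleaner base case (no $\ko$ prefactor), and the only downstream consumer, \cref{eq:truncation_error_triangle_bound}, needs precisely the block norm $\nfnorm{\tO^{(d)}_{\ge w^*+1}}$. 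The obstacle you flag is also real: in the slice-by-slice version a single parent weight class can feed up to $2\kh-1$ daughter weight classes, so regrouping the sum of square roots from daughter weights to parent weights (the paper's ``Relax'' step) implicitly costs a $\sqrt{\bigO(\kh)}$ factor that the paper does not surface; your block-norm bookkeeping sidesteps this entirely. In short: same mechanism, a tighter and arguably more rigorous formulation of the recursion, at the price of proving a variant of the stated inequality rather than the inequality verbatim.
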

Thus, for an initially low-weight Pauli to become a high-weight one, 
it must undergo many small-angle Pauli rotations 
and thereby accrue a significant damping factor. 
By applying \cref{thm:weight_decay} recursively, 
we can bound the cumulation of high-weight norms as
$    \cumnorm_{\ge m}^{(g)}\le \ko \binom{g}{m} \sin^m(\dt) \nfnorm{O}$,
such that we can use it 
to bound the total truncation error over all Trotter steps \cref{eq:truncation_error_triangle_bound}.
See the complete proofs in 
\ifnum\onlymaintext=0
    \cref{apd:pauli_error}.
\else
    the Supplementary Materials \cite{seesm}.
\fi
Once we have the total truncation error in terms of the truncation threshold $w^*$,
one can determine the required $w^*$ to achieve precision $\eps$.
We formalize it as \cref{thm:truncation_threshold} in End Matter.

\begin{figure}[!t]
    \centering
    \includegraphics[width=0.99\linewidth]{./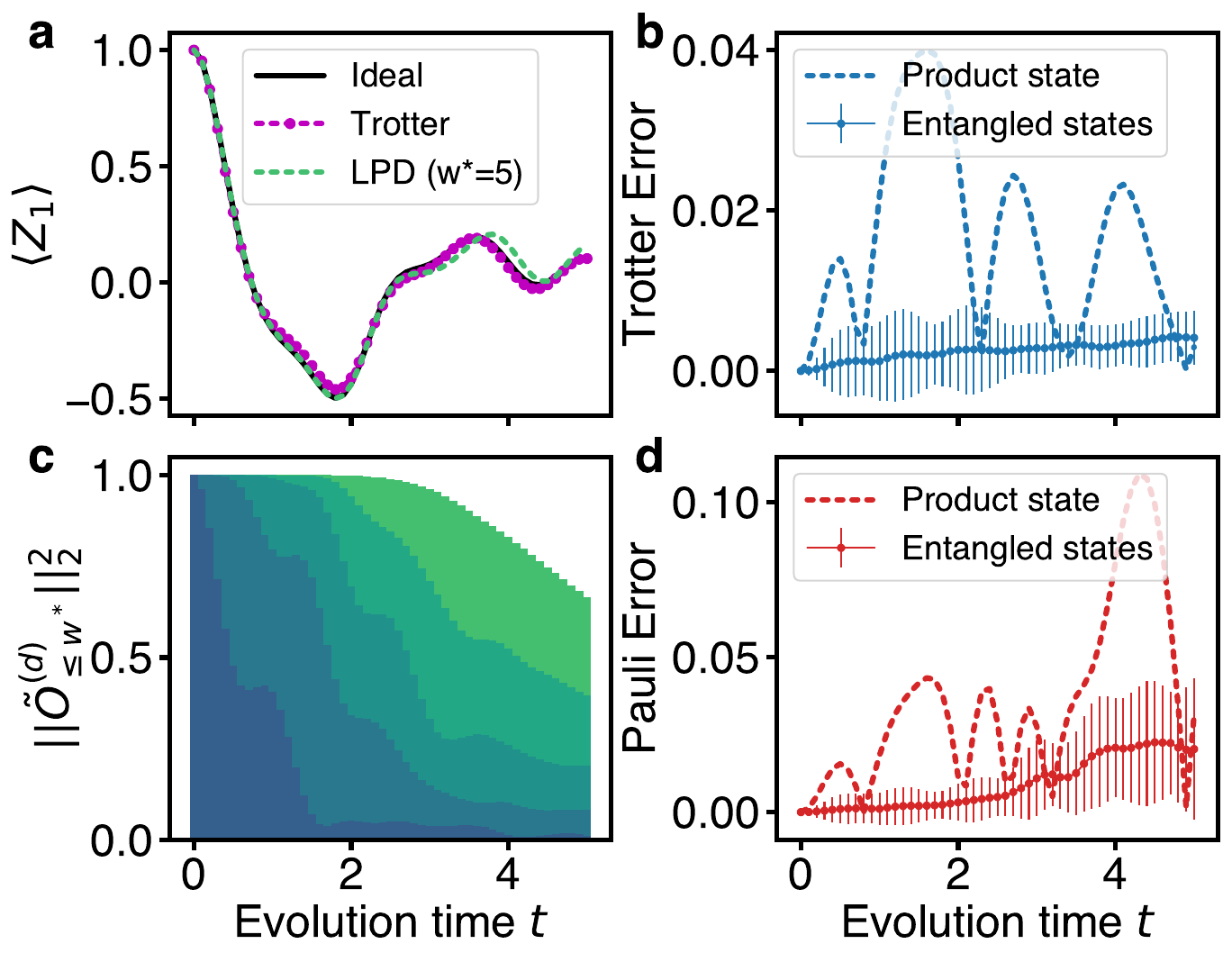}
    \caption{
    The numerical results of Alg.~\ref{alg:noiseless} for the QMFI model 
    with $n=10$ qubits.
    We set evolution time $t=5$ and the second-order Trotter formula with $r=50$ steps.
    (a) The expectation values at Trotter every step.
    The solid line represents the ideal expectation value.
    The purple dotted line is the Trotter expectation without truncation.
    The dashed line is the one with the truncation.
    (b) 
    The dashed line is Trotter error with the product input state,
    while the line with error bars is the average Trotter error over 100 Haar random states that are mostly entangled states.
    (c) In the short-time regime, the low-weight Paulis (dark green) dominate the norm distribution over weights, while the high-weight Paulis (light green) gradually accumulate in longer time.
    The empty regime is the norm lost by the Pauli truncation.
    (d) 
    The dashed line is the Pauli truncation error in expectation values
    with the product input state,
    while the line with error bars is the average Pauli truncation error over Haar random (entangled) states.
    }
    \label{fig:typical_case_pauli}
\end{figure}

If the truncation threshold $w^*$ is independent of $n$ and one Trotter step is shallow  
(i.e., Pauli rotations can be arranged into $\Gamma\in\bigO(1)$ layers),
there will be polynomially many $\bigO(n^{w^*})$ Pauli operators throughout the low-weight Pauli dynamics. 
At last,
the expectation value of the final low-weight observable can also be computed efficiently 
as long as the Pauli decomposition of the state can be efficiently computed.
Thus, we have the following polynomial algorithm runtime. 
\begin{theorem}[Runtime of Alg.~\ref{alg:noiseless}, informal]\label{thm:runtime}
    Given a local Hamiltonian $H=\sum_{\gamma}^{\Gamma} H_{\gamma}$ with $\Gamma\in\bigO(1)$, 
    any local observable $O$, 
    evolution time $t< t_0\in\bigO(1)$, 
    and a sufficiently entangled input state $\ket{\psi_S}$ as required in \cref{thm:entangle_ob_bound},
    assume  Pauli coefficients of $H$, $O$, and $\rho=\op{\psie}$ can be efficiently computed. 
    There exists a classical algorithm $\lpd$ to approximate the expectation value 
    $\Tr(\rho e^{\ii Ht} O e^{-\ii Ht})$ 
    within error tolerance $2\eps$.
    The runtime is 
        $\bigO\qty(n^{ w^*+c})$ 
        where $w^*=\bigO\qty(\frac{\log(t/\eps)}{\log(1/t)})$ is the truncation threshold 
        and $c$ is a constant from the Trotter formula.
\end{theorem}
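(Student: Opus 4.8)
\emph{Overview and error budget.}
The plan is to control the two error sources of Alg.~\ref{alg:noiseless} separately and then count the Pauli operators it tracks. First I would write $\abs{\mu-\tilde\mu_{\le w^*}}\le\abs{\mu-\tilde\mu}+\abs{\tilde\mu-\tilde\mu_{\le w^*}}$ and allocate $\eps$ to each summand. For the Trotter part, \cref{eq:average_trotter_number} with $r\in\bigO((\Lambda\,\norm{O}\,\eps^{-1})^{1/p}t^{1+1/p})$ gives $\abs{\mu-\tilde\mu}\le\eps$; because $\ket{\psi_S}$ is sufficiently entangled, the operator norm $\Lambda$ (and the $\norm{O}$ factor) may be replaced by the corresponding Pauli $2$-norm of nested commutators, so $r$ is at most a fixed power of $n$ (times $\poly(1/\eps)$), a contribution later absorbed into $c$. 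For the truncation part, \cref{eq:truncation_error_triangle_bound} reduces $\Delta\tilde\mu_{\le w^*}$ to $2\sum_{d=1}^{r}\nfnorm{\tO^{(d)}_{\ge w^*+1}}$: the light cone of one shallow Trotter step keeps every Pauli in $\tO^{(d)}-\tO^{(d)}_{\le w^*}$ of $n$-independent weight, so \cref{thm:entangle_ob_bound} legitimately converts the expectation error into this Pauli $2$-norm (each constant-size subsystem meets the entanglement-entropy hypothesis since the threshold there is $\bigO(1)$). Iterating \cref{thm:weight_decay} to $\cumnorm_{\ge m}^{(g)}\le\ko\binom{g}{m}\sin^m(\dt)\nfnorm{O}$ and summing over the $r$ steps bounds this sum; requiring it to be $\le\eps$ fixes $w^*$ as in \cref{thm:truncation_threshold}, and the two pieces add to $2\eps$.

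\emph{Threshold scaling.}
The scaling $w^*=\bigO(\log(t/\eps)/\log(1/t))$ is obtained by inverting the truncation bound. Since one Trotter step advances the light cone by only $\bigO(1)$, the effective number $g$ of anticommuting rotations that can act on a fixed low-weight Pauli over all $r$ steps is $\bigO(r)$ with $r\dt=t$, so upon reaching weight $w_m=\ko+(m-1)(\kh-1)$ the high-weight norm is $\lesssim \ko(\bigO(t))^m/m!\cdot\nfnorm{O}$. Demanding this (summed over steps) be $\le\eps$, and using $t<t_0\in\bigO(1)$ with $t_0<1$—which is exactly what makes $\log(1/t)>0$ so the series is effectively geometric—forces $m=\bigO(\log(t/\eps)/\log(1/t))$, hence $w^*=\ko+(m-1)(\kh-1)$ has the stated $n$-independent size.

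\emph{Counting Paulis and runtime.}
With $w^*$ independent of $n$, at most $\bigO(n^{w^*})$ Pauli strings have weight $\le w^*$, so after each step's truncation the list has $\bigO(n^{w^*})$ entries. Within a step the $\Gamma\in\bigO(1)$ layers of $\kh$-local rotations can transiently raise weights by at most a constant $c'=c'(\Gamma,\kh)$, so the pre-truncation list never exceeds $\bigO(n^{w^*+c'})$ entries; applying one rotation via \cref{eq:pauli_rotation_branch} costs $\bigO(1)$ per stored Pauli (a commutation check, and merging repeated Paulis through a hash table). Thus each of the $r$ steps costs $\poly(n)\cdot\bigO(n^{w^*+c'})$, and multiplying by the polynomially many steps $r$ keeps the total at a fixed power $\bigO(n^{w^*+c})$. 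Finally $\tilde\mu_{\le w^*}=\sum_{\abs{Q}\le w^*}c_Q\Tr(\rho Q)$ has $\bigO(n^{w^*})$ terms, and by hypothesis each $\Tr(\rho Q)$ is efficiently computable—e.g.\ by contracting the weight-$\le w^*$ Pauli $Q$ against the $\mps$ produced with bond dimension $2^{\bigO(w^*)}$—adding one more fixed power of $n$, with $c=c(p,\Gamma,\kh)$ the constant from the Trotter formula.

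\emph{Main obstacle.}
The error split and the final expectation evaluation are routine; the delicate step is the threshold derivation together with the transient-weight control. One must verify that the damping factors $\sin(\dt)$ accumulated along the light-cone-restricted chains of anticommuting rotations genuinely beat the combinatorial $\binom{g}{m}$ growth, so that the sum over Trotter steps converges and yields an $n$-independent $w^*$, and simultaneously that the pre-truncation blow-up inside a shallow step remains polynomial—not exponential—in $n$. Both hinge on the joint use of \cref{thm:weight_decay}, \cref{thm:entangle_ob_bound}, the light-cone argument, and the assumptions $t<t_0<1$, $\Gamma\in\bigO(1)$, and $\kh=\bigO(1)$.
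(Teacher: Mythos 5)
Your proposal is correct and follows essentially the same route as the paper: split the error into an $\eps$ Trotter budget and an $\eps$ truncation budget, use the entanglement-based Pauli 2-norm bound (\cref{thm:entangle_ob_bound}) to control both, invert the damped-flow bound $\cumnorm_{\ge m}^{(g)}\le \ko\binom{g}{m}\sin^m(\dt)\nfnorm{O}$ to fix $w^*$, and count $\bigO(n^{w^*})$ Paulis per truncated step. The only (harmless) deviation is that you bound the pre-truncation list by $\bigO(n^{w^*+c'})$ via the transient weight cap, whereas the paper keeps it at $\bigO(n^{w^*})$ by noting each rotation at most doubles the Pauli list so the within-step proliferation factor is a constant $2^{\bigO(w^*\Gamma^2)}$; both land on the stated $\bigO(n^{w^*+c})$ runtime.
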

    Specifically, the constant $c=0$ for a 1D nearest-neighbor Hamiltonian because its short-time Trotter error is independent of system-size $n$ for local observables \cite{yuObservabledrivenSpeedupsQuantum2025}.
    If we require precision $\eps\sim 1/\poly(n)$,
    then the algorithm needs quasi-polynomial runtime because $w^*$ becomes $\bigO(\log(n))$.
We put the complete analysis of the runtime and memory complexity in
\ifnum\onlymaintext=0
    \cref{apd:sec:proof_runtime}.
\else
    the Supplementary Materials \cite{seesm}.
\fi

\mysection[1]{Numerical results}
To substantiate our theoretical result, we provide concrete numerical evidence, 
while Refs.~\cite{shaoSimulatingNoisyVariational2024,angrisaniClassicallyEstimatingObservables2025,begusicRealTimeOperatorEvolution2025,shaoPauliPropagationSimulating2025} have demonstrated the effectiveness of other Pauli truncation algorithms.
We test on the one-dimensional quantum mixed-field Ising (QMFI) model 
$H_{\mathrm{MFI}} = \sum_{j=1}^n X_j X_{j+1} + h_x\sum_{j=1}^n X_j + h_y\sum_{j=1}^n Y_j$
with $h_x=0.8,h_y=0.9$ and the periodic boundary condition,
of which the dynamics has rapidly growing entanglement \cite{kimBallisticSpreadingEntanglement2013} and operator magic \cite{dowlingMagicResourcesHeisenberg2025}.
We choose the local observable $O=Z_1$ (Pauli $Z$ on the first qubit) and the truncation threshold $w^*=5$.
\cref{fig:typical_case_pauli}(a, b) shows that 
the Trotter errors in expectation values are small and 
the truncation errors with $w^*=5$ are small for short evolution time. 
\cref{fig:typical_case_pauli}(d) demonstrates the average truncation errors $\mathbb{E}_{\ket{\psi_0}\sim\mathrm{Haar}}\abs{\bra{\psi_0}\tO^{(d)} - \tO_{\le w^*}^{(d)}\ket{\psi_0}}$ of entangled states,
which are much smaller than the truncation errors for the specific product state $\ket{\psi_0}=\ket{01\cdots 01}$.

\begin{figure}[!t]
    \centering
    \includegraphics[width=0.99\linewidth]{./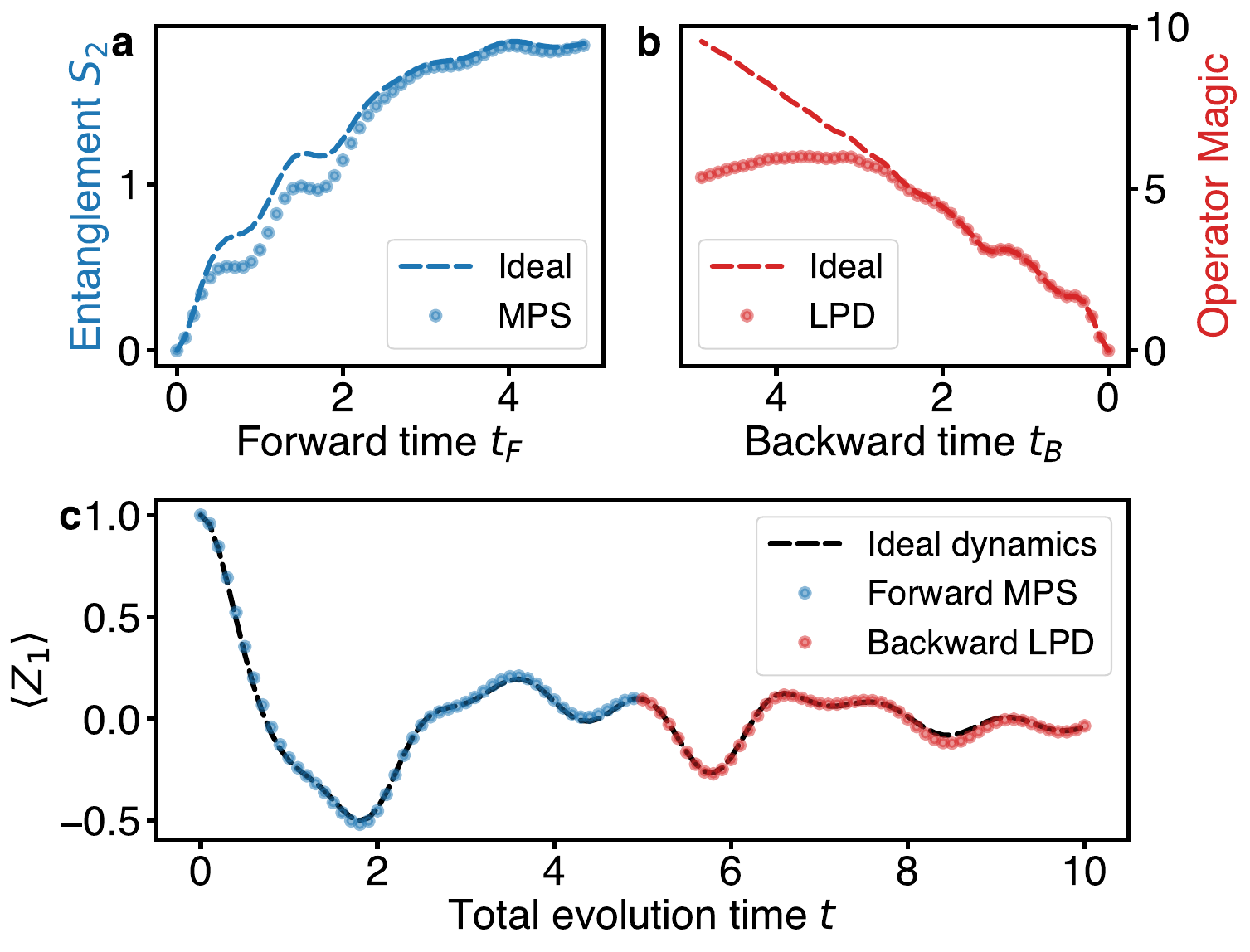}
    \caption{
    $\lpd$ complements $\mps$ for the dynamics of QMFI.
    (a) The entanglement entropy $S_2$ ($2$ qubits subsystem) of the forward-simulated state by $\mps$.
    The dashed line indicates the entanglement entropy of the ideally evolved state,
    while the dots are the entanglement entropy of the state simulated by $\mps$ with bond dimension $\chi=32$.
    (b) The entropy of Pauli coefficients of the backward-evolved observable $O(t)$ (also known as operator magic).
    The dashed line indicates the operator magic of the ideally evolved observable,
    while the dots are the operator magic of the observable simulated by $\lpd$ with truncation threshold $w^*=5$.
    (c) The hybrid simulation by $\lpd$ (blue) and $\mps$ (red) with a total evolution time $t=10$ matches the exact expectation value (black) well.
    }
    \label{fig:hybrid}
\end{figure}

Additionally, we numerically demonstrate the hybrid classical simulation combining our low-weight observable simulation $\lpd$ with the matrix product state simulation $\mps$.
In \cref{fig:hybrid}(a, b), we exhibit the rapid growth of two resources \cite{tarabungaNonstabilizernessMatrixProduct2024,qianAugmentingDensityMatrix2024,dowlingBridgingEntanglementMagic2025} associated with these two classical algorithms: 
the entanglement entropy of the forward-evolved state and the operator magic 
\footnote{Here, we use the Shannon entropy of the (normalized) squared Pauli coefficients of an observable $O=\sum_P \alpha_P P$, 
that is, $\mathcal{M}(O) = -\sum_{P\in \pbasis} \abs{\alpha_P}^2 \log \qty(\abs{\alpha_P}^2)$
as the operator magic (nonstabilizerness) measure of $O$ \cite{dowlingMagicResourcesHeisenberg2025}.}
of the backward-evolved observable.
It implies that if one only uses either $\mps$ or $\lpd$,
the classical simulation would lose its theoretical guarantee after its resource saturates.
Instead, \cref{fig:hybrid}(c) shows that the $\lpd$-$\mps$ hybrid simulation effectively extends the simulation time while keeping the error small.
We put more details about entanglement and magic in End Matter.
Furthermore, the input state can also be from a real quantum device, 
then $\lpd$ can enhance quantum simulation as well \cite{fullerImprovedQuantumComputation2025},
i.e., reducing quantum simulation circuit depth virtually.

\mysection{Discussion and Outlook}
Our work provides theoretical guarantees for the empirical success of the Pauli truncation style algorithms in the noiseless Hamiltonian dynamics, without resorting to randomness assumptions.
Surprisingly, the $\lpd$ algorithm seamlessly with entangled input states that are believed unfavorable for classical simulation.
Meanwhile, the short-time limitation of $\lpd$ does not rule out the possibility of practical quantum advantage via quantum dynamics simulation.
Instead, our algorithm can help reduce quantum simulation circuit depths.
At the same time, the full power of $\lpd$ remains explored.
For instance, our theoretical bound could be improved
given specific Hamiltonian information, such as geometric locality and commutation relation. 
It is of practical interest to identify typical good states \cite{granetDilutionErrorDigital2025},
where high-weight Paulis have little contribution to expectation
such that the actual truncation error is much smaller than the theoretical bound.
To accelerate $\lpd$, it is feasible to truncate the high-weight Paulis along with the Paulis of other properties, e.g., small coefficients \cite{begusicRealTimeOperatorEvolution2025,shaoPauliPropagationSimulating2025} or symmetries \cite{tengLeveragingSymmetryMerging2025}.
It also remains to extend our techniques to non-local observables and fermionic systems \cite{millerSimulationFermionicCircuits2025,facelliFastConvergenceMajorana2026}.

\emph{Acknowledgements.}--
J.X. thanks Daochen Wang, Wenjun Yu, Tianfeng Feng, and Zhongxia Shang for helpful discussions.
We acknowledge the \href{https://github.com/tbegusic/spd/blob/main/spd/SparsePauliDynamics.py}{SPD} python package by Begušić \cite{begusicRealTimeOperatorEvolution2025} 
and also appreciate the review paper on Pauli propagation by Rudolph et al. \cite{rudolphPauliPropagationComputational2025} for the  \textsf{Julia} package \href{https://github.com/MSRudolph/PauliPropagation.jl}{PauliPropagation.jl}.
We also acknowledge Claude.ai for assistance.
J.X. and Q.Z. acknowledge funding from Innovation Program for Quantum Science and Technology via Project 2024ZD0301900, National Natural Science Foundation of China (NSFC) via Project No. 12347104 and No. 12305030, Guangdong Basic and Applied Basic Research Foundation via Project 2023A1515012185, Hong Kong Research Grant Council (RGC) via No. 27300823, N\_HKU718/23, and R6010-23, Guangdong Provincial Quantum Science Strategic Initiative No. GDZX2303007, HKU Seed Fund for Basic Research for New Staff via Project 2201100596.

\bibliographystyle{style/truncate_ref}
\bibliography{bib/ref_aps, bib/seesm}

\begin{thebibliography}{77}%
\makeatletter
\providecommand \@ifxundefined [1]{%
 \@ifx{#1\undefined}
}%
\providecommand \@ifnum [1]{%
 \ifnum #1\expandafter \@firstoftwo
 \else \expandafter \@secondoftwo
 \fi
}%
\providecommand \@ifx [1]{%
 \ifx #1\expandafter \@firstoftwo
 \else \expandafter \@secondoftwo
 \fi
}%
\providecommand \natexlab [1]{#1}%
\providecommand \enquote  [1]{``#1''}%
\providecommand \bibnamefont  [1]{#1}%
\providecommand \bibfnamefont [1]{#1}%
\providecommand \citenamefont [1]{#1}%
\providecommand \href@noop [0]{\@secondoftwo}%
\providecommand \href [0]{\begingroup \@sanitize@url \@href}%
\providecommand \@href[1]{\@@startlink{#1}\@@href}%
\providecommand \@@href[1]{\endgroup#1\@@endlink}%
\providecommand \@sanitize@url [0]{\catcode `\\12\catcode `\$12\catcode `\&12\catcode `\#12\catcode `\^12\catcode `\_12\catcode `\%12\relax}%
\providecommand \@@startlink[1]{}%
\providecommand \@@endlink[0]{}%
\providecommand \url  [0]{\begingroup\@sanitize@url \@url }%
\providecommand \@url [1]{\endgroup\@href {#1}{\urlprefix }}%
\providecommand \urlprefix  [0]{URL }%
\providecommand \Eprint [0]{\href }%
\providecommand \doibase [0]{https://doi.org/}%
\providecommand \selectlanguage [0]{\@gobble}%
\providecommand \bibinfo  [0]{\@secondoftwo}%
\providecommand \bibfield  [0]{\@secondoftwo}%
\providecommand \translation [1]{[#1]}%
\providecommand \BibitemOpen [0]{}%
\providecommand \bibitemStop [0]{}%
\providecommand \bibitemNoStop [0]{.\EOS\space}%
\providecommand \EOS [0]{\spacefactor3000\relax}%
\providecommand \BibitemShut  [1]{\csname bibitem#1\endcsname}%
\let\auto@bib@innerbib\@empty
\bibitem [{\citenamefont {Martinez}\ \emph {et~al.}(2016)\citenamefont {Martinez}, \citenamefont {Muschik}, \citenamefont {Schindler}, \citenamefont {Nigg}, \citenamefont {Erhard}, \citenamefont {Heyl}, \citenamefont {Hauke}, \citenamefont {Dalmonte}, \citenamefont {Monz}, \citenamefont {Zoller},\ and\ \citenamefont {Blatt}}]{martinezRealtimeDynamicsLattice2016}%
  \BibitemOpen
  \bibfield  {author} {\bibinfo {author} {\bibfnamefont {E.~A.}\ \bibnamefont {Martinez}}, \bibinfo {author} {\bibfnamefont {C.~A.}\ \bibnamefont {Muschik}}, \bibinfo {author} {\bibfnamefont {P.}~\bibnamefont {Schindler}}, \emph {et~al.},\ }\href {https://doi.org/10.1038/nature18318} {\bibfield  {journal} {\bibinfo  {journal} {Nature}\ }\textbf {\bibinfo {volume} {534}},\ \bibinfo {pages} {516} (\bibinfo {year} {2016})},\ \Eprint {https://arxiv.org/abs/1605.04570} {arXiv:1605.04570} \BibitemShut {NoStop}%
\bibitem [{\citenamefont {Landsman}\ \emph {et~al.}(2019)\citenamefont {Landsman}, \citenamefont {Figgatt}, \citenamefont {Schuster}, \citenamefont {Linke}, \citenamefont {Yoshida}, \citenamefont {Yao},\ and\ \citenamefont {Monroe}}]{landsmanVerifiedQuantumInformation2019}%
  \BibitemOpen
  \bibfield  {author} {\bibinfo {author} {\bibfnamefont {K.~A.}\ \bibnamefont {Landsman}}, \bibinfo {author} {\bibfnamefont {C.}~\bibnamefont {Figgatt}}, \bibinfo {author} {\bibfnamefont {T.}~\bibnamefont {Schuster}}, \bibinfo {author} {\bibfnamefont {N.~M.}\ \bibnamefont {Linke}}, \bibinfo {author} {\bibfnamefont {B.}~\bibnamefont {Yoshida}}, \bibinfo {author} {\bibfnamefont {N.~Y.}\ \bibnamefont {Yao}},\ and\ \bibinfo {author} {\bibfnamefont {C.}~\bibnamefont {Monroe}},\ }\href {https://doi.org/10.1038/s41586-019-0952-6} {\bibfield  {journal} {\bibinfo  {journal} {Nature}\ }\textbf {\bibinfo {volume} {567}},\ \bibinfo {pages} {61} (\bibinfo {year} {2019})},\ \Eprint {https://arxiv.org/abs/1806.02807} {arXiv:1806.02807} \BibitemShut {NoStop}%
\bibitem [{\citenamefont {Bernien}\ \emph {et~al.}(2017)\citenamefont {Bernien}, \citenamefont {Schwartz}, \citenamefont {Keesling}, \citenamefont {Levine}, \citenamefont {Omran}, \citenamefont {Pichler}, \citenamefont {Choi}, \citenamefont {Zibrov}, \citenamefont {Endres}, \citenamefont {Greiner}, \citenamefont {Vuleti{\'c}},\ and\ \citenamefont {Lukin}}]{bernienProbingManybodyDynamics2017}%
  \BibitemOpen
  \bibfield  {author} {\bibinfo {author} {\bibfnamefont {H.}~\bibnamefont {Bernien}}, \bibinfo {author} {\bibfnamefont {S.}~\bibnamefont {Schwartz}}, \bibinfo {author} {\bibfnamefont {A.}~\bibnamefont {Keesling}}, \emph {et~al.},\ }\href {https://doi.org/10.1038/nature24622} {\bibfield  {journal} {\bibinfo  {journal} {Nature}\ }\textbf {\bibinfo {volume} {551}},\ \bibinfo {pages} {579} (\bibinfo {year} {2017})},\ \Eprint {https://arxiv.org/abs/1707.04344} {arXiv:1707.04344} \BibitemShut {NoStop}%
\bibitem [{\citenamefont {Ebadi}\ \emph {et~al.}(2021)\citenamefont {Ebadi}, \citenamefont {Wang}, \citenamefont {Levine}, \citenamefont {Keesling}, \citenamefont {Semeghini}, \citenamefont {Omran}, \citenamefont {Bluvstein}, \citenamefont {Samajdar}, \citenamefont {Pichler}, \citenamefont {Ho}, \citenamefont {Choi}, \citenamefont {Sachdev}, \citenamefont {Greiner}, \citenamefont {Vuletic},\ and\ \citenamefont {Lukin}}]{ebadiQuantumPhasesMatter2021}%
  \BibitemOpen
  \bibfield  {author} {\bibinfo {author} {\bibfnamefont {S.}~\bibnamefont {Ebadi}}, \bibinfo {author} {\bibfnamefont {T.~T.}\ \bibnamefont {Wang}}, \bibinfo {author} {\bibfnamefont {H.}~\bibnamefont {Levine}}, \emph {et~al.},\ }\href {https://doi.org/10.1038/s41586-021-03582-4} {\bibfield  {journal} {\bibinfo  {journal} {Nature}\ }\textbf {\bibinfo {volume} {595}},\ \bibinfo {pages} {227} (\bibinfo {year} {2021})},\ \Eprint {https://arxiv.org/abs/2012.12281} {arXiv:2012.12281} \BibitemShut {NoStop}%
\bibitem [{\citenamefont {Feynman}(1982)}]{feynmanSimulatingPhysicsComputers1982}%
  \BibitemOpen
  \bibfield  {author} {\bibinfo {author} {\bibfnamefont {R.~P.}\ \bibnamefont {Feynman}},\ }\href {https://doi.org/10.1007/BF02650179} {\bibfield  {journal} {\bibinfo  {journal} {International Journal of Theoretical Physics}\ }\textbf {\bibinfo {volume} {21}},\ \bibinfo {pages} {467} (\bibinfo {year} {1982})}\BibitemShut {NoStop}%
\bibitem [{\citenamefont {Feynman}(1985)}]{feynmanQuantumMechanicalComputers1985}%
  \BibitemOpen
  \bibfield  {author} {\bibinfo {author} {\bibfnamefont {R.~P.}\ \bibnamefont {Feynman}},\ }\href {https://doi.org/10.1364/ON.11.2.000011} {\bibfield  {journal} {\bibinfo  {journal} {Optics News}\ }\textbf {\bibinfo {volume} {11}},\ \bibinfo {pages} {11} (\bibinfo {year} {1985})}\BibitemShut {NoStop}%
\bibitem [{\citenamefont {Jordan}\ \emph {et~al.}(2012)\citenamefont {Jordan}, \citenamefont {Lee},\ and\ \citenamefont {Preskill}}]{jordanQuantumAlgorithmsQuantum2012}%
  \BibitemOpen
  \bibfield  {author} {\bibinfo {author} {\bibfnamefont {S.~P.}\ \bibnamefont {Jordan}}, \bibinfo {author} {\bibfnamefont {K.~S.~M.}\ \bibnamefont {Lee}},\ and\ \bibinfo {author} {\bibfnamefont {J.}~\bibnamefont {Preskill}},\ }\href {https://doi.org/10.1126/science.1217069} {\bibfield  {journal} {\bibinfo  {journal} {Science}\ }\textbf {\bibinfo {volume} {336}},\ \bibinfo {pages} {1130} (\bibinfo {year} {2012})},\ \Eprint {https://arxiv.org/abs/1111.3633} {arXiv:1111.3633} \BibitemShut {NoStop}%
\bibitem [{\citenamefont {Cirac}\ and\ \citenamefont {Zoller}(2012)}]{ciracGoalsOpportunitiesQuantum2012}%
  \BibitemOpen
  \bibfield  {author} {\bibinfo {author} {\bibfnamefont {J.~I.}\ \bibnamefont {Cirac}}\ and\ \bibinfo {author} {\bibfnamefont {P.}~\bibnamefont {Zoller}},\ }\href {https://doi.org/10.1038/nphys2275} {\bibfield  {journal} {\bibinfo  {journal} {Nature Physics}\ }\textbf {\bibinfo {volume} {8}},\ \bibinfo {pages} {264} (\bibinfo {year} {2012})}\BibitemShut {NoStop}%
\bibitem [{\citenamefont {Georgescu}\ \emph {et~al.}(2014)\citenamefont {Georgescu}, \citenamefont {Ashhab},\ and\ \citenamefont {Nori}}]{georgescuQuantumSimulation2014}%
  \BibitemOpen
  \bibfield  {author} {\bibinfo {author} {\bibfnamefont {I.~M.}\ \bibnamefont {Georgescu}}, \bibinfo {author} {\bibfnamefont {S.}~\bibnamefont {Ashhab}},\ and\ \bibinfo {author} {\bibfnamefont {F.}~\bibnamefont {Nori}},\ }\href {https://doi.org/10.1103/RevModPhys.86.153} {\bibfield  {journal} {\bibinfo  {journal} {Reviews of Modern Physics}\ }\textbf {\bibinfo {volume} {86}},\ \bibinfo {pages} {153} (\bibinfo {year} {2014})},\ \Eprint {https://arxiv.org/abs/1308.6253} {arXiv:1308.6253} \BibitemShut {NoStop}%
\bibitem [{\citenamefont {Heyl}\ \emph {et~al.}(2019)\citenamefont {Heyl}, \citenamefont {Hauke},\ and\ \citenamefont {Zoller}}]{heylQuantumLocalizationBounds2019}%
  \BibitemOpen
  \bibfield  {author} {\bibinfo {author} {\bibfnamefont {M.}~\bibnamefont {Heyl}}, \bibinfo {author} {\bibfnamefont {P.}~\bibnamefont {Hauke}},\ and\ \bibinfo {author} {\bibfnamefont {P.}~\bibnamefont {Zoller}},\ }\href {https://doi.org/10.1126/sciadv.aau8342} {\bibfield  {journal} {\bibinfo  {journal} {Science Advances}\ }\textbf {\bibinfo {volume} {5}},\ \bibinfo {pages} {eaau8342} (\bibinfo {year} {2019})},\ \Eprint {https://arxiv.org/abs/1806.11123} {arXiv:1806.11123} \BibitemShut {NoStop}%
\bibitem [{\citenamefont {Altman}\ \emph {et~al.}(2021)\citenamefont {Altman}, \citenamefont {Brown}, \citenamefont {Carleo}, \citenamefont {Carr}, \citenamefont {Demler}, \citenamefont {Chin}, \citenamefont {DeMarco}, \citenamefont {Economou}, \citenamefont {Eriksson}, \citenamefont {Fu}, \citenamefont {Greiner}, \citenamefont {Hazzard}, \citenamefont {Hulet}, \citenamefont {Koll{\'a}r}, \citenamefont {Lev}, \citenamefont {Lukin}, \citenamefont {Ma}, \citenamefont {Mi}, \citenamefont {Misra}, \citenamefont {Monroe}, \citenamefont {Murch}, \citenamefont {Nazario}, \citenamefont {Ni}, \citenamefont {Potter}, \citenamefont {Roushan}, \citenamefont {Saffman}, \citenamefont {{Schleier-Smith}}, \citenamefont {Siddiqi}, \citenamefont {Simmonds}, \citenamefont {Singh}, \citenamefont {Spielman}, \citenamefont {Temme}, \citenamefont {Weiss}, \citenamefont {Vu{\v c}kovi{\'c}}, \citenamefont {Vuleti{\'c}}, \citenamefont {Ye},\ and\ \citenamefont {Zwierlein}}]{altmanQuantumSimulatorsArchitectures2021}%
  \BibitemOpen
  \bibfield  {author} {\bibinfo {author} {\bibfnamefont {E.}~\bibnamefont {Altman}}, \bibinfo {author} {\bibfnamefont {K.~R.}\ \bibnamefont {Brown}}, \bibinfo {author} {\bibfnamefont {G.}~\bibnamefont {Carleo}}, \emph {et~al.},\ }\href {https://doi.org/10.1103/PRXQuantum.2.017003} {\bibfield  {journal} {\bibinfo  {journal} {PRX Quantum}\ }\textbf {\bibinfo {volume} {2}},\ \bibinfo {pages} {017003} (\bibinfo {year} {2021})},\ \Eprint {https://arxiv.org/abs/1912.06938} {arXiv:1912.06938} \BibitemShut {NoStop}%
\bibitem [{\citenamefont {Daley}\ \emph {et~al.}(2022)\citenamefont {Daley}, \citenamefont {Bloch}, \citenamefont {Kokail}, \citenamefont {Flannigan}, \citenamefont {Pearson}, \citenamefont {Troyer},\ and\ \citenamefont {Zoller}}]{daleyPracticalQuantumAdvantage2022}%
  \BibitemOpen
  \bibfield  {author} {\bibinfo {author} {\bibfnamefont {A.~J.}\ \bibnamefont {Daley}}, \bibinfo {author} {\bibfnamefont {I.}~\bibnamefont {Bloch}}, \bibinfo {author} {\bibfnamefont {C.}~\bibnamefont {Kokail}}, \bibinfo {author} {\bibfnamefont {S.}~\bibnamefont {Flannigan}}, \bibinfo {author} {\bibfnamefont {N.}~\bibnamefont {Pearson}}, \bibinfo {author} {\bibfnamefont {M.}~\bibnamefont {Troyer}},\ and\ \bibinfo {author} {\bibfnamefont {P.}~\bibnamefont {Zoller}},\ }\href {https://doi.org/10.1038/s41586-022-04940-6} {\bibfield  {journal} {\bibinfo  {journal} {Nature}\ }\textbf {\bibinfo {volume} {607}},\ \bibinfo {pages} {667} (\bibinfo {year} {2022})}\BibitemShut {NoStop}%
\bibitem [{\citenamefont {Huang}\ \emph {et~al.}(2023)\citenamefont {Huang}, \citenamefont {Tong}, \citenamefont {Fang},\ and\ \citenamefont {Su}}]{huangLearningManyBodyHamiltonians2023}%
  \BibitemOpen
  \bibfield  {author} {\bibinfo {author} {\bibfnamefont {H.-Y.}\ \bibnamefont {Huang}}, \bibinfo {author} {\bibfnamefont {Y.}~\bibnamefont {Tong}}, \bibinfo {author} {\bibfnamefont {D.}~\bibnamefont {Fang}},\ and\ \bibinfo {author} {\bibfnamefont {Y.}~\bibnamefont {Su}},\ }\href {https://doi.org/10.1103/PhysRevLett.130.200403} {\bibfield  {journal} {\bibinfo  {journal} {Physical Review Letters}\ }\textbf {\bibinfo {volume} {130}},\ \bibinfo {pages} {200403} (\bibinfo {year} {2023})},\ \Eprint {https://arxiv.org/abs/2210.03030} {arXiv:2210.03030} \BibitemShut {NoStop}%
\bibitem [{\citenamefont {Kim}\ \emph {et~al.}(2023)\citenamefont {Kim}, \citenamefont {Eddins}, \citenamefont {Anand}, \citenamefont {Wei}, \citenamefont {{van den Berg}}, \citenamefont {Rosenblatt}, \citenamefont {Nayfeh}, \citenamefont {Wu}, \citenamefont {Zaletel}, \citenamefont {Temme},\ and\ \citenamefont {Kandala}}]{kimEvidenceUtilityQuantum2023}%
  \BibitemOpen
  \bibfield  {author} {\bibinfo {author} {\bibfnamefont {Y.}~\bibnamefont {Kim}}, \bibinfo {author} {\bibfnamefont {A.}~\bibnamefont {Eddins}}, \bibinfo {author} {\bibfnamefont {S.}~\bibnamefont {Anand}}, \emph {et~al.},\ }\href {https://doi.org/10.1038/s41586-023-06096-3} {\bibfield  {journal} {\bibinfo  {journal} {Nature}\ }\textbf {\bibinfo {volume} {618}},\ \bibinfo {pages} {500} (\bibinfo {year} {2023})}\BibitemShut {NoStop}%
\bibitem [{\citenamefont {Wang}\ \emph {et~al.}(2024)\citenamefont {Wang}, \citenamefont {Liu}, \citenamefont {Chen}, \citenamefont {Chen}, \citenamefont {Zhao}, \citenamefont {Ying}, \citenamefont {Shang}, \citenamefont {Wang}, \citenamefont {Huo}, \citenamefont {Peng}, \citenamefont {Zhu}, \citenamefont {Lu},\ and\ \citenamefont {Pan}}]{wangRealizationFractionalQuantum2024}%
  \BibitemOpen
  \bibfield  {author} {\bibinfo {author} {\bibfnamefont {C.}~\bibnamefont {Wang}}, \bibinfo {author} {\bibfnamefont {F.-M.}\ \bibnamefont {Liu}}, \bibinfo {author} {\bibfnamefont {M.-C.}\ \bibnamefont {Chen}}, \emph {et~al.},\ }\href {https://doi.org/10.1126/science.ado3912} {\bibfield  {journal} {\bibinfo  {journal} {Science}\ }\textbf {\bibinfo {volume} {384}},\ \bibinfo {pages} {579} (\bibinfo {year} {2024})},\ \Eprint {https://arxiv.org/abs/2401.17022} {arXiv:2401.17022} \BibitemShut {NoStop}%
\bibitem [{\citenamefont {Guo}\ \emph {et~al.}(2024)\citenamefont {Guo}, \citenamefont {Wu}, \citenamefont {Ye}, \citenamefont {Zhang}, \citenamefont {Lian}, \citenamefont {Yao}, \citenamefont {Wang}, \citenamefont {Yan}, \citenamefont {Yi}, \citenamefont {Xu}, \citenamefont {Li}, \citenamefont {Hou}, \citenamefont {Xu}, \citenamefont {Guo}, \citenamefont {Zhang}, \citenamefont {Qi}, \citenamefont {Zhou}, \citenamefont {He},\ and\ \citenamefont {Duan}}]{guoSiteresolvedTwodimensionalQuantum2024}%
  \BibitemOpen
  \bibfield  {author} {\bibinfo {author} {\bibfnamefont {S.-A.}\ \bibnamefont {Guo}}, \bibinfo {author} {\bibfnamefont {Y.-K.}\ \bibnamefont {Wu}}, \bibinfo {author} {\bibfnamefont {J.}~\bibnamefont {Ye}}, \emph {et~al.},\ }\href {https://doi.org/10.1038/s41586-024-07459-0} {\bibfield  {journal} {\bibinfo  {journal} {Nature}\ }\textbf {\bibinfo {volume} {630}},\ \bibinfo {pages} {613} (\bibinfo {year} {2024})},\ \Eprint {https://arxiv.org/abs/2311.17163} {arXiv:2311.17163} \BibitemShut {NoStop}%
\bibitem [{\citenamefont {Abanin}\ \emph {et~al.}(2025)\citenamefont {Abanin}, \citenamefont {Acharya}, \citenamefont {{Aghababaie-Beni}}, \citenamefont {Aigeldinger}, \citenamefont {Ajoy}, \citenamefont {Alcaraz}, \citenamefont {Aleiner}, \citenamefont {Andersen}, \citenamefont {Ansmann}, \citenamefont {Arute}, \citenamefont {Arya}, \citenamefont {Asfaw}, \citenamefont {Astrakhantsev}, \citenamefont {Atalaya}, \citenamefont {Babbush}, \citenamefont {Bacon}, \citenamefont {Ballard}, \citenamefont {Bardin}, \citenamefont {Bengs}, \citenamefont {Bengtsson}, \citenamefont {Bilmes}, \citenamefont {Boixo}, \citenamefont {Bortoli}, \citenamefont {Bourassa}, \citenamefont {Bovaird}, \citenamefont {Bowers}, \citenamefont {Brill}, \citenamefont {Broughton}, \citenamefont {Browne}, \citenamefont {Buchea}, \citenamefont {Buckley}, \citenamefont {Buell}, \citenamefont {Burger}, \citenamefont {Burkett}, \citenamefont {Bushnell}, \citenamefont {Cabrera}, \citenamefont {Campero}, \citenamefont {Chang}, \citenamefont {Chen}, \citenamefont {Chen}, \citenamefont {Chiaro}, \citenamefont {Chih}, \citenamefont {Chik}, \citenamefont {Chou}, \citenamefont {Claes}, \citenamefont {Cleland}, \citenamefont {Cogan}, \citenamefont {Cohen}, \citenamefont {Collins}, \citenamefont {Conner}, \citenamefont {Courtney}, \citenamefont {Crook}, \citenamefont {Curtin}, \citenamefont {Das}, \citenamefont {De~Lorenzo}, \citenamefont {Debroy}, \citenamefont {Demura}, \citenamefont {Devoret}, \citenamefont {Di~Paolo}, \citenamefont {Donohoe}, \citenamefont {Drozdov}, \citenamefont {Dunsworth}, \citenamefont {Earle}, \citenamefont {Eickbusch}, \citenamefont {Elbag}, \citenamefont {Elzouka}, \citenamefont {Erickson}, \citenamefont {Faoro}, \citenamefont {Farhi}, \citenamefont {Ferreira}, \citenamefont {Burgos}, \citenamefont {Forati}, \citenamefont {Fowler}, \citenamefont {Foxen}, \citenamefont {Ganjam}, \citenamefont {Garcia}, \citenamefont {Gasca}, \citenamefont {Genois}, \citenamefont {Giang}, \citenamefont {Gidney}, \citenamefont {Gilboa}, \citenamefont {Gosula}, \citenamefont {Dau}, \citenamefont {Graumann}, \citenamefont {Greene}, \citenamefont {Gross}, \citenamefont {Gu}, \citenamefont {Habegger}, \citenamefont {Hall}, \citenamefont {Hamamura}, \citenamefont {Hamilton}, \citenamefont {Hansen}, \citenamefont {Harrigan}, \citenamefont {Harrington}, \citenamefont {Heslin}, \citenamefont {Heu}, \citenamefont {Higgott}, \citenamefont {Hill}, \citenamefont {Hilton}, \citenamefont {Hong}, \citenamefont {Huang}, \citenamefont {Huff}, \citenamefont {Huggins}, \citenamefont {Ioffe}, \citenamefont {Isakov}, \citenamefont {Iveland}, \citenamefont {Jeffrey}, \citenamefont {Jiang}, \citenamefont {Jin}, \citenamefont {Jones}, \citenamefont {Jordan}, \citenamefont {Joshi}, \citenamefont {Juhas}, \citenamefont {Kabel}, \citenamefont {Kafri}, \citenamefont {Kang}, \citenamefont {Karamlou}, \citenamefont {Kechedzhi}, \citenamefont {Kelly}, \citenamefont {Khaire}, \citenamefont {Khattar}, \citenamefont {Khezri}, \citenamefont {Kim}, \citenamefont {King}, \citenamefont {Klimov}, \citenamefont {Klots}, \citenamefont {Kobrin}, \citenamefont {Korotkov}, \citenamefont {Kostritsa}, \citenamefont {Kothari}, \citenamefont {Kreikebaum}, \citenamefont {Kurilovich}, \citenamefont {Kyoseva}, \citenamefont {Landhuis}, \citenamefont {{Lange-Dei}}, \citenamefont {Langley}, \citenamefont {Laptev}, \citenamefont {Lau}, \citenamefont {Le~Guevel}, \citenamefont {Ledford}, \citenamefont {Lee}, \citenamefont {Lee}, \citenamefont {Lensky}, \citenamefont {Leon}, \citenamefont {Lester}, \citenamefont {Li}, \citenamefont {Lill}, \citenamefont {Liu}, \citenamefont {Livingston}, \citenamefont {Locharla}, \citenamefont {Lucero}, \citenamefont {Lundahl}, \citenamefont {Lunt}, \citenamefont {Madhuk}, \citenamefont {Malone}, \citenamefont {Maloney}, \citenamefont {Mandr{\`a}}, \citenamefont {Manyika}, \citenamefont {Martin}, \citenamefont {Martin}, \citenamefont {Martin}, \citenamefont {Matias}, \citenamefont {Maxfield}, \citenamefont {McClean}, \citenamefont {McEwen}, \citenamefont {Meeks}, \citenamefont {Megrant}, \citenamefont {Mi}, \citenamefont {Miao}, \citenamefont {Mieszala}, \citenamefont {Minev}, \citenamefont {Molavi}, \citenamefont {Molina}, \citenamefont {Montazeri}, \citenamefont {Morvan}, \citenamefont {Movassagh}, \citenamefont {Mruczkiewicz}, \citenamefont {Naaman}, \citenamefont {Neeley}, \citenamefont {Neill}, \citenamefont {Nersisyan}, \citenamefont {Neven}, \citenamefont {Newman}, \citenamefont {Ng}, \citenamefont {Nguyen}, \citenamefont {Nguyen}, \citenamefont {Ni}, \citenamefont {Niu}, \citenamefont {Oas}, \citenamefont {O'Brien}, \citenamefont {Oliver}, \citenamefont {Opremcak}, \citenamefont {Ottosson}, \citenamefont {Petukhov}, \citenamefont {Pizzuto}, \citenamefont {Platt}, \citenamefont {Potter}, \citenamefont {Pritchard}, \citenamefont {Pryadko}, \citenamefont {Quintana}, \citenamefont {Ramachandran}, \citenamefont {Ramanathan}, \citenamefont {Reagor}, \citenamefont {Redding}, \citenamefont {Rhodes}, \citenamefont {Roberts}, \citenamefont {Rosenberg}, \citenamefont {Rosenfeld}, \citenamefont {Roushan}, \citenamefont {Rubin}, \citenamefont {Saei}, \citenamefont {Sank}, \citenamefont {Sankaragomathi}, \citenamefont {Satzinger}, \citenamefont {Schmidhuber}, \citenamefont {Schurkus}, \citenamefont {Schuster}, \citenamefont {Schuster}, \citenamefont {Shearn}, \citenamefont {Shorter}, \citenamefont {Shutty}, \citenamefont {Shvarts}, \citenamefont {Sivak}, \citenamefont {Skruzny}, \citenamefont {Small}, \citenamefont {Smelyanskiy}, \citenamefont {Smith}, \citenamefont {Somma}, \citenamefont {Springer}, \citenamefont {Sterling}, \citenamefont {Strain}, \citenamefont {Suchard}, \citenamefont {Suchsland}, \citenamefont {Szasz}, \citenamefont {Sztein}, \citenamefont {Thor}, \citenamefont {Tomita}, \citenamefont {Torres}, \citenamefont {Torunbalci}, \citenamefont {Vaishnav}, \citenamefont {Vargas}, \citenamefont {Vdovichev}, \citenamefont {Vidal}, \citenamefont {Villalonga}, \citenamefont {Heidweiller}, \citenamefont {Waltman}, \citenamefont {Wang}, \citenamefont {Ware}, \citenamefont {Weber}, \citenamefont {Weidel}, \citenamefont {Westerhout}, \citenamefont {White}, \citenamefont {Wong}, \citenamefont {Woo}, \citenamefont {Xing}, \citenamefont {Yao}, \citenamefont {Yeh}, \citenamefont {Ying}, \citenamefont {Yoo}, \citenamefont {Yosri}, \citenamefont {Young}, \citenamefont {Zalcman}, \citenamefont {Zhang}, \citenamefont {Zhang}, \citenamefont {Zhu}, \citenamefont {Zobrist},\ and\ \citenamefont {{Google Quantum AI and Collaborators}}}]{abaninObservationConstructiveInterference2025}%
  \BibitemOpen
  \bibfield  {author} {\bibinfo {author} {\bibfnamefont {D.~A.}\ \bibnamefont {Abanin}}, \bibinfo {author} {\bibfnamefont {R.}~\bibnamefont {Acharya}}, \bibinfo {author} {\bibfnamefont {L.}~\bibnamefont {{Aghababaie-Beni}}}, \emph {et~al.},\ }\href {https://doi.org/10.1038/s41586-025-09526-6} {\bibfield  {journal} {\bibinfo  {journal} {Nature}\ }\textbf {\bibinfo {volume} {646}},\ \bibinfo {pages} {825} (\bibinfo {year} {2025})},\ \Eprint {https://arxiv.org/abs/2506.10191} {arXiv:2506.10191} \BibitemShut {NoStop}%
\bibitem [{\citenamefont {Berry}\ \emph {et~al.}(2007)\citenamefont {Berry}, \citenamefont {Ahokas}, \citenamefont {Cleve},\ and\ \citenamefont {Sanders}}]{berryEfficientQuantumAlgorithms2007}%
  \BibitemOpen
  \bibfield  {author} {\bibinfo {author} {\bibfnamefont {D.~W.}\ \bibnamefont {Berry}}, \bibinfo {author} {\bibfnamefont {G.}~\bibnamefont {Ahokas}}, \bibinfo {author} {\bibfnamefont {R.}~\bibnamefont {Cleve}},\ and\ \bibinfo {author} {\bibfnamefont {B.~C.}\ \bibnamefont {Sanders}},\ }\href {https://doi.org/10.1007/s00220-006-0150-x} {\bibfield  {journal} {\bibinfo  {journal} {Communications in Mathematical Physics}\ }\textbf {\bibinfo {volume} {270}},\ \bibinfo {pages} {359} (\bibinfo {year} {2007})},\ \Eprint {https://arxiv.org/abs/quant-ph/0508139} {arXiv:quant-ph/0508139} \BibitemShut {NoStop}%
\bibitem [{\citenamefont {Berry}\ \emph {et~al.}(2015)\citenamefont {Berry}, \citenamefont {Childs}, \citenamefont {Cleve}, \citenamefont {Kothari},\ and\ \citenamefont {Somma}}]{berrySimulatingHamiltonianDynamics2015}%
  \BibitemOpen
  \bibfield  {author} {\bibinfo {author} {\bibfnamefont {D.~W.}\ \bibnamefont {Berry}}, \bibinfo {author} {\bibfnamefont {A.~M.}\ \bibnamefont {Childs}}, \bibinfo {author} {\bibfnamefont {R.}~\bibnamefont {Cleve}}, \bibinfo {author} {\bibfnamefont {R.}~\bibnamefont {Kothari}},\ and\ \bibinfo {author} {\bibfnamefont {R.~D.}\ \bibnamefont {Somma}},\ }\href {https://doi.org/10.1103/PhysRevLett.114.090502} {\bibfield  {journal} {\bibinfo  {journal} {Physical Review Letters}\ }\textbf {\bibinfo {volume} {114}},\ \bibinfo {pages} {090502} (\bibinfo {year} {2015})},\ \Eprint {https://arxiv.org/abs/1412.4687} {arXiv:1412.4687} \BibitemShut {NoStop}%
\bibitem [{\citenamefont {Low}\ and\ \citenamefont {Chuang}(2017)}]{lowOptimalHamiltonianSimulation2017}%
  \BibitemOpen
  \bibfield  {author} {\bibinfo {author} {\bibfnamefont {G.~H.}\ \bibnamefont {Low}}\ and\ \bibinfo {author} {\bibfnamefont {I.~L.}\ \bibnamefont {Chuang}},\ }\href {https://doi.org/10.1103/PhysRevLett.118.010501} {\bibfield  {journal} {\bibinfo  {journal} {Physical Review Letters}\ }\textbf {\bibinfo {volume} {118}},\ \bibinfo {pages} {010501} (\bibinfo {year} {2017})},\ \Eprint {https://arxiv.org/abs/1606.02685} {arXiv:1606.02685} \BibitemShut {NoStop}%
\bibitem [{\citenamefont {Vidal}(2003)}]{vidalEfficientClassicalSimulation2003}%
  \BibitemOpen
  \bibfield  {author} {\bibinfo {author} {\bibfnamefont {G.}~\bibnamefont {Vidal}},\ }\href {https://doi.org/10.1103/PhysRevLett.91.147902} {\bibfield  {journal} {\bibinfo  {journal} {Physical Review Letters}\ }\textbf {\bibinfo {volume} {91}},\ \bibinfo {pages} {147902} (\bibinfo {year} {2003})},\ \Eprint {https://arxiv.org/abs/quant-ph/0301063} {arXiv:quant-ph/0301063} \BibitemShut {NoStop}%
\bibitem [{\citenamefont {Vidal}(2004)}]{vidalEfficientSimulationOnedimensional2004}%
  \BibitemOpen
  \bibfield  {author} {\bibinfo {author} {\bibfnamefont {G.}~\bibnamefont {Vidal}},\ }\href {https://doi.org/10.1103/PhysRevLett.93.040502} {\bibfield  {journal} {\bibinfo  {journal} {Physical Review Letters}\ }\textbf {\bibinfo {volume} {93}},\ \bibinfo {pages} {040502} (\bibinfo {year} {2004})},\ \Eprint {https://arxiv.org/abs/quant-ph/0310089} {arXiv:quant-ph/0310089} \BibitemShut {NoStop}%
\bibitem [{\citenamefont {Markov}\ and\ \citenamefont {Shi}(2008)}]{markovSimulatingQuantumComputation2008}%
  \BibitemOpen
  \bibfield  {author} {\bibinfo {author} {\bibfnamefont {I.~L.}\ \bibnamefont {Markov}}\ and\ \bibinfo {author} {\bibfnamefont {Y.}~\bibnamefont {Shi}},\ }\href {https://doi.org/10.1137/050644756} {\bibfield  {journal} {\bibinfo  {journal} {SIAM Journal on Computing}\ }\textbf {\bibinfo {volume} {38}},\ \bibinfo {pages} {963} (\bibinfo {year} {2008})},\ \Eprint {https://arxiv.org/abs/quant-ph/0511069} {arXiv:quant-ph/0511069} \BibitemShut {NoStop}%
\bibitem [{\citenamefont {Zhou}\ \emph {et~al.}(2020)\citenamefont {Zhou}, \citenamefont {Stoudenmire},\ and\ \citenamefont {Waintal}}]{zhouWhatLimitsSimulation2020}%
  \BibitemOpen
  \bibfield  {author} {\bibinfo {author} {\bibfnamefont {Y.}~\bibnamefont {Zhou}}, \bibinfo {author} {\bibfnamefont {E.~M.}\ \bibnamefont {Stoudenmire}},\ and\ \bibinfo {author} {\bibfnamefont {X.}~\bibnamefont {Waintal}},\ }\href {https://doi.org/10.1103/PhysRevX.10.041038} {\bibfield  {journal} {\bibinfo  {journal} {Physical Review X}\ }\textbf {\bibinfo {volume} {10}},\ \bibinfo {pages} {041038} (\bibinfo {year} {2020})},\ \Eprint {https://arxiv.org/abs/2002.07730} {arXiv:2002.07730} \BibitemShut {NoStop}%
\bibitem [{\citenamefont {Wild}\ and\ \citenamefont {Alhambra}(2023)}]{wildClassicalSimulationShortTime2023}%
  \BibitemOpen
  \bibfield  {author} {\bibinfo {author} {\bibfnamefont {D.~S.}\ \bibnamefont {Wild}}\ and\ \bibinfo {author} {\bibfnamefont {{\'A}.~M.}\ \bibnamefont {Alhambra}},\ }\href {https://doi.org/10.1103/PRXQuantum.4.020340} {\bibfield  {journal} {\bibinfo  {journal} {PRX Quantum}\ }\textbf {\bibinfo {volume} {4}},\ \bibinfo {pages} {020340} (\bibinfo {year} {2023})},\ \Eprint {https://arxiv.org/abs/2210.11490} {arXiv:2210.11490} \BibitemShut {NoStop}%
\bibitem [{\citenamefont {Wu}\ \emph {et~al.}(2024)\citenamefont {Wu}, \citenamefont {Zhang},\ and\ \citenamefont {Yuan}}]{wuEfficientClassicalAlgorithm2024}%
  \BibitemOpen
  \bibfield  {author} {\bibinfo {author} {\bibfnamefont {Y.}~\bibnamefont {Wu}}, \bibinfo {author} {\bibfnamefont {Y.}~\bibnamefont {Zhang}},\ and\ \bibinfo {author} {\bibfnamefont {X.}~\bibnamefont {Yuan}},\ }\href {http://arxiv.org/abs/2409.04161} {\bibinfo {title} {An {{Efficient Classical Algorithm}} for {{Simulating Short Time 2D Quantum Dynamics}}}} (\bibinfo {year} {2024}),\ \Eprint {https://arxiv.org/abs/2409.04161} {arXiv:2409.04161} \BibitemShut {NoStop}%
\bibitem [{\citenamefont {Mann}\ and\ \citenamefont {Minko}(2024)}]{mannAlgorithmicClusterExpansions2024}%
  \BibitemOpen
  \bibfield  {author} {\bibinfo {author} {\bibfnamefont {R.~L.}\ \bibnamefont {Mann}}\ and\ \bibinfo {author} {\bibfnamefont {R.~M.}\ \bibnamefont {Minko}},\ }\href {https://doi.org/10.1103/PRXQuantum.5.010305} {\bibfield  {journal} {\bibinfo  {journal} {PRX Quantum}\ }\textbf {\bibinfo {volume} {5}},\ \bibinfo {pages} {010305} (\bibinfo {year} {2024})},\ \Eprint {https://arxiv.org/abs/2306.08974} {arXiv:2306.08974} \BibitemShut {NoStop}%
\bibitem [{\citenamefont {Begu{\v s}i{\'c}}\ \emph {et~al.}(2024)\citenamefont {Begu{\v s}i{\'c}}, \citenamefont {Gray},\ and\ \citenamefont {Chan}}]{begusicFastConvergedClassical2024}%
  \BibitemOpen
  \bibfield  {author} {\bibinfo {author} {\bibfnamefont {T.}~\bibnamefont {Begu{\v s}i{\'c}}}, \bibinfo {author} {\bibfnamefont {J.}~\bibnamefont {Gray}},\ and\ \bibinfo {author} {\bibfnamefont {G.~K.-L.}\ \bibnamefont {Chan}},\ }\href {https://doi.org/10.1126/sciadv.adk4321} {\bibfield  {journal} {\bibinfo  {journal} {Science Advances}\ }\textbf {\bibinfo {volume} {10}},\ \bibinfo {pages} {eadk4321} (\bibinfo {year} {2024})},\ \Eprint {https://arxiv.org/abs/2308.05077} {arXiv:2308.05077} \BibitemShut {NoStop}%
\bibitem [{\citenamefont {Tindall}\ \emph {et~al.}(2024)\citenamefont {Tindall}, \citenamefont {Fishman}, \citenamefont {Stoudenmire},\ and\ \citenamefont {Sels}}]{tindallEfficientTensorNetwork2024}%
  \BibitemOpen
  \bibfield  {author} {\bibinfo {author} {\bibfnamefont {J.}~\bibnamefont {Tindall}}, \bibinfo {author} {\bibfnamefont {M.}~\bibnamefont {Fishman}}, \bibinfo {author} {\bibfnamefont {E.~M.}\ \bibnamefont {Stoudenmire}},\ and\ \bibinfo {author} {\bibfnamefont {D.}~\bibnamefont {Sels}},\ }\href {https://doi.org/10.1103/PRXQuantum.5.010308} {\bibfield  {journal} {\bibinfo  {journal} {PRX Quantum}\ }\textbf {\bibinfo {volume} {5}},\ \bibinfo {pages} {010308} (\bibinfo {year} {2024})},\ \Eprint {https://arxiv.org/abs/2306.14887} {arXiv:2306.14887} \BibitemShut {NoStop}%
\bibitem [{\citenamefont {Fontana}\ \emph {et~al.}(2025)\citenamefont {Fontana}, \citenamefont {Rudolph}, \citenamefont {Duncan}, \citenamefont {Rungger},\ and\ \citenamefont {C{\^i}rstoiu}}]{fontanaClassicalSimulationsNoisy2025}%
  \BibitemOpen
  \bibfield  {author} {\bibinfo {author} {\bibfnamefont {E.}~\bibnamefont {Fontana}}, \bibinfo {author} {\bibfnamefont {M.~S.}\ \bibnamefont {Rudolph}}, \bibinfo {author} {\bibfnamefont {R.}~\bibnamefont {Duncan}}, \bibinfo {author} {\bibfnamefont {I.}~\bibnamefont {Rungger}},\ and\ \bibinfo {author} {\bibfnamefont {C.}~\bibnamefont {C{\^i}rstoiu}},\ }\href {https://doi.org/10.1038/s41534-024-00955-1} {\bibfield  {journal} {\bibinfo  {journal} {npj Quantum Information}\ }\textbf {\bibinfo {volume} {11}},\ \bibinfo {pages} {1} (\bibinfo {year} {2025})},\ \Eprint {https://arxiv.org/abs/2306.05400} {arXiv:2306.05400} \BibitemShut {NoStop}%
\bibitem [{\citenamefont {Bremner}\ \emph {et~al.}(2017)\citenamefont {Bremner}, \citenamefont {Montanaro},\ and\ \citenamefont {Shepherd}}]{bremnerAchievingQuantumSupremacy2017}%
  \BibitemOpen
  \bibfield  {author} {\bibinfo {author} {\bibfnamefont {M.~J.}\ \bibnamefont {Bremner}}, \bibinfo {author} {\bibfnamefont {A.}~\bibnamefont {Montanaro}},\ and\ \bibinfo {author} {\bibfnamefont {D.~J.}\ \bibnamefont {Shepherd}},\ }\href {https://doi.org/10.22331/q-2017-04-25-8} {\bibfield  {journal} {\bibinfo  {journal} {Quantum}\ }\textbf {\bibinfo {volume} {1}},\ \bibinfo {pages} {8} (\bibinfo {year} {2017})},\ \Eprint {https://arxiv.org/abs/1610.01808} {arXiv:1610.01808} \BibitemShut {NoStop}%
\bibitem [{\citenamefont {Gao}\ and\ \citenamefont {Duan}(2018)}]{gaoEfficientClassicalSimulation2018}%
  \BibitemOpen
  \bibfield  {author} {\bibinfo {author} {\bibfnamefont {X.}~\bibnamefont {Gao}}\ and\ \bibinfo {author} {\bibfnamefont {L.}~\bibnamefont {Duan}},\ }\href {http://arxiv.org/abs/1810.03176} {\bibinfo {title} {Efficient classical simulation of noisy quantum computation}} (\bibinfo {year} {2018}),\ \Eprint {https://arxiv.org/abs/1810.03176} {arXiv:1810.03176} \BibitemShut {NoStop}%
\bibitem [{\citenamefont {Rall}\ \emph {et~al.}(2019)\citenamefont {Rall}, \citenamefont {Liang}, \citenamefont {Cook},\ and\ \citenamefont {Kretschmer}}]{rallSimulationQubitQuantum2019}%
  \BibitemOpen
  \bibfield  {author} {\bibinfo {author} {\bibfnamefont {P.}~\bibnamefont {Rall}}, \bibinfo {author} {\bibfnamefont {D.}~\bibnamefont {Liang}}, \bibinfo {author} {\bibfnamefont {J.}~\bibnamefont {Cook}},\ and\ \bibinfo {author} {\bibfnamefont {W.}~\bibnamefont {Kretschmer}},\ }\href {https://doi.org/10.1103/PhysRevA.99.062337} {\bibfield  {journal} {\bibinfo  {journal} {Physical Review A}\ }\textbf {\bibinfo {volume} {99}},\ \bibinfo {pages} {062337} (\bibinfo {year} {2019})},\ \Eprint {https://arxiv.org/abs/1901.09070} {arXiv:1901.09070} \BibitemShut {NoStop}%
\bibitem [{\citenamefont {O'Donnell}(2021)}]{odonnellAnalysisBooleanFunctions2021}%
  \BibitemOpen
  \bibfield  {author} {\bibinfo {author} {\bibfnamefont {R.}~\bibnamefont {O'Donnell}},\ }\href {https://doi.org/10.1017/CBO9781139814782} {\emph {\bibinfo {title} {Analysis of {{Boolean Functions}}}}}\ (\bibinfo  {publisher} {Cambridge University Press},\ \bibinfo {year} {2021})\ \Eprint {https://arxiv.org/abs/2105.10386} {arXiv:2105.10386} \BibitemShut {NoStop}%
\bibitem [{\citenamefont {Aharonov}\ \emph {et~al.}(2023)\citenamefont {Aharonov}, \citenamefont {Gao}, \citenamefont {Landau}, \citenamefont {Liu},\ and\ \citenamefont {Vazirani}}]{aharonovPolynomialTimeClassicalAlgorithm2023}%
  \BibitemOpen
  \bibfield  {author} {\bibinfo {author} {\bibfnamefont {D.}~\bibnamefont {Aharonov}}, \bibinfo {author} {\bibfnamefont {X.}~\bibnamefont {Gao}}, \bibinfo {author} {\bibfnamefont {Z.}~\bibnamefont {Landau}}, \bibinfo {author} {\bibfnamefont {Y.}~\bibnamefont {Liu}},\ and\ \bibinfo {author} {\bibfnamefont {U.}~\bibnamefont {Vazirani}},\ }in\ \href {https://doi.org/10.1145/3564246.3585234} {\emph {\bibinfo {booktitle} {Proceedings of the 55th {{Annual ACM Symposium}} on {{Theory}} of {{Computing}}}}},\ \bibinfo {series and number} {{{STOC}} 2023}\ (\bibinfo  {publisher} {Association for Computing Machinery},\ \bibinfo {address} {New York, NY, USA},\ \bibinfo {year} {2023})\ pp.\ \bibinfo {pages} {945--957},\ \Eprint {https://arxiv.org/abs/2211.03999} {arXiv:2211.03999} \BibitemShut {NoStop}%
\bibitem [{\citenamefont {Rudolph}\ \emph {et~al.}(2025)\citenamefont {Rudolph}, \citenamefont {Jones}, \citenamefont {Teng}, \citenamefont {Angrisani},\ and\ \citenamefont {Holmes}}]{rudolphPauliPropagationComputational2025}%
  \BibitemOpen
  \bibfield  {author} {\bibinfo {author} {\bibfnamefont {M.~S.}\ \bibnamefont {Rudolph}}, \bibinfo {author} {\bibfnamefont {T.}~\bibnamefont {Jones}}, \bibinfo {author} {\bibfnamefont {Y.}~\bibnamefont {Teng}}, \bibinfo {author} {\bibfnamefont {A.}~\bibnamefont {Angrisani}},\ and\ \bibinfo {author} {\bibfnamefont {Z.}~\bibnamefont {Holmes}},\ }\href {https://doi.org/10.48550/arXiv.2505.21606} {\bibinfo {title} {Pauli {{Propagation}}: {{A Computational Framework}} for {{Simulating Quantum Systems}}}} (\bibinfo {year} {2025}),\ \Eprint {https://arxiv.org/abs/2505.21606} {arXiv:2505.21606} \BibitemShut {NoStop}%
\bibitem [{\citenamefont {Rudolph}\ \emph {et~al.}(2023)\citenamefont {Rudolph}, \citenamefont {Fontana}, \citenamefont {Holmes},\ and\ \citenamefont {Cincio}}]{rudolphClassicalSurrogateSimulation2023}%
  \BibitemOpen
  \bibfield  {author} {\bibinfo {author} {\bibfnamefont {M.~S.}\ \bibnamefont {Rudolph}}, \bibinfo {author} {\bibfnamefont {E.}~\bibnamefont {Fontana}}, \bibinfo {author} {\bibfnamefont {Z.}~\bibnamefont {Holmes}},\ and\ \bibinfo {author} {\bibfnamefont {L.}~\bibnamefont {Cincio}},\ }\href {http://arxiv.org/abs/2308.09109} {\bibinfo {title} {Classical surrogate simulation of quantum systems with {{LOWESA}}}} (\bibinfo {year} {2023}),\ \Eprint {https://arxiv.org/abs/2308.09109} {arXiv:2308.09109} \BibitemShut {NoStop}%
\bibitem [{\citenamefont {Shao}\ \emph {et~al.}(2024)\citenamefont {Shao}, \citenamefont {Wei}, \citenamefont {Cheng},\ and\ \citenamefont {Liu}}]{shaoSimulatingNoisyVariational2024}%
  \BibitemOpen
  \bibfield  {author} {\bibinfo {author} {\bibfnamefont {Y.}~\bibnamefont {Shao}}, \bibinfo {author} {\bibfnamefont {F.}~\bibnamefont {Wei}}, \bibinfo {author} {\bibfnamefont {S.}~\bibnamefont {Cheng}},\ and\ \bibinfo {author} {\bibfnamefont {Z.}~\bibnamefont {Liu}},\ }\href {https://doi.org/10.1103/PhysRevLett.133.120603} {\bibfield  {journal} {\bibinfo  {journal} {Physical Review Letters}\ }\textbf {\bibinfo {volume} {133}},\ \bibinfo {pages} {120603} (\bibinfo {year} {2024})},\ \Eprint {https://arxiv.org/abs/2306.05804} {arXiv:2306.05804} \BibitemShut {NoStop}%
\bibitem [{\citenamefont {Martinez}\ \emph {et~al.}(2025)\citenamefont {Martinez}, \citenamefont {Angrisani}, \citenamefont {Pankovets}, \citenamefont {Fawzi},\ and\ \citenamefont {Stilck~Fran{\c c}a}}]{martinezEfficientSimulationParametrized2025}%
  \BibitemOpen
  \bibfield  {author} {\bibinfo {author} {\bibfnamefont {V.}~\bibnamefont {Martinez}}, \bibinfo {author} {\bibfnamefont {A.}~\bibnamefont {Angrisani}}, \bibinfo {author} {\bibfnamefont {E.}~\bibnamefont {Pankovets}}, \bibinfo {author} {\bibfnamefont {O.}~\bibnamefont {Fawzi}},\ and\ \bibinfo {author} {\bibfnamefont {D.}~\bibnamefont {Stilck~Fran{\c c}a}},\ }\href {https://doi.org/10.1103/j1gg-s6zb} {\bibfield  {journal} {\bibinfo  {journal} {Physical Review Letters}\ }\textbf {\bibinfo {volume} {134}},\ \bibinfo {pages} {250602} (\bibinfo {year} {2025})},\ \Eprint {https://arxiv.org/abs/2501.13050} {arXiv:2501.13050} \BibitemShut {NoStop}%
\bibitem [{\citenamefont {Angrisani}\ \emph {et~al.}(2025{\natexlab{a}})\citenamefont {Angrisani}, \citenamefont {Mele}, \citenamefont {Rudolph}, \citenamefont {Cerezo},\ and\ \citenamefont {Holmes}}]{angrisaniSimulatingQuantumCircuits2025}%
  \BibitemOpen
  \bibfield  {author} {\bibinfo {author} {\bibfnamefont {A.}~\bibnamefont {Angrisani}}, \bibinfo {author} {\bibfnamefont {A.~A.}\ \bibnamefont {Mele}}, \bibinfo {author} {\bibfnamefont {M.~S.}\ \bibnamefont {Rudolph}}, \bibinfo {author} {\bibfnamefont {M.}~\bibnamefont {Cerezo}},\ and\ \bibinfo {author} {\bibfnamefont {Z.}~\bibnamefont {Holmes}},\ }\href {https://doi.org/10.48550/arXiv.2501.13101} {\bibinfo {title} {Simulating quantum circuits with arbitrary local noise using {{Pauli Propagation}}}} (\bibinfo {year} {2025}{\natexlab{a}}),\ \Eprint {https://arxiv.org/abs/2501.13101} {arXiv:2501.13101} \BibitemShut {NoStop}%
\bibitem [{\citenamefont {Schuster}\ \emph {et~al.}(2025)\citenamefont {Schuster}, \citenamefont {Yin}, \citenamefont {Gao},\ and\ \citenamefont {Yao}}]{schusterPolynomialtimeClassicalAlgorithm2025}%
  \BibitemOpen
  \bibfield  {author} {\bibinfo {author} {\bibfnamefont {T.}~\bibnamefont {Schuster}}, \bibinfo {author} {\bibfnamefont {C.}~\bibnamefont {Yin}}, \bibinfo {author} {\bibfnamefont {X.}~\bibnamefont {Gao}},\ and\ \bibinfo {author} {\bibfnamefont {N.~Y.}\ \bibnamefont {Yao}},\ }\href {https://doi.org/10.1103/xct1-7kf2} {\bibfield  {journal} {\bibinfo  {journal} {Physical Review X}\ }\textbf {\bibinfo {volume} {15}},\ \bibinfo {pages} {041018} (\bibinfo {year} {2025})},\ \Eprint {https://arxiv.org/abs/2407.12768} {arXiv:2407.12768} \BibitemShut {NoStop}%
\bibitem [{\citenamefont {{Gonz{\'a}lez-Garc{\'i}a}}\ \emph {et~al.}(2025)\citenamefont {{Gonz{\'a}lez-Garc{\'i}a}}, \citenamefont {Cirac},\ and\ \citenamefont {Trivedi}}]{gonzalez-garciaPauliPathSimulations2025}%
  \BibitemOpen
  \bibfield  {author} {\bibinfo {author} {\bibfnamefont {G.}~\bibnamefont {{Gonz{\'a}lez-Garc{\'i}a}}}, \bibinfo {author} {\bibfnamefont {J.~I.}\ \bibnamefont {Cirac}},\ and\ \bibinfo {author} {\bibfnamefont {R.}~\bibnamefont {Trivedi}},\ }\href {https://doi.org/10.22331/q-2025-05-05-1730} {\bibfield  {journal} {\bibinfo  {journal} {Quantum}\ }\textbf {\bibinfo {volume} {9}},\ \bibinfo {pages} {1730} (\bibinfo {year} {2025})},\ \Eprint {https://arxiv.org/abs/2407.16068} {arXiv:2407.16068} \BibitemShut {NoStop}%
\bibitem [{\citenamefont {Begu{\v s}i{\'c}}\ \emph {et~al.}(2025)\citenamefont {Begu{\v s}i{\'c}}, \citenamefont {Hejazi},\ and\ \citenamefont {Chan}}]{begusicSimulatingQuantumCircuit2025}%
  \BibitemOpen
  \bibfield  {author} {\bibinfo {author} {\bibfnamefont {T.}~\bibnamefont {Begu{\v s}i{\'c}}}, \bibinfo {author} {\bibfnamefont {K.}~\bibnamefont {Hejazi}},\ and\ \bibinfo {author} {\bibfnamefont {G.~K.-L.}\ \bibnamefont {Chan}},\ }\href {https://doi.org/10.1063/5.0269149} {\bibfield  {journal} {\bibinfo  {journal} {The Journal of Chemical Physics}\ }\textbf {\bibinfo {volume} {162}},\ \bibinfo {pages} {154110} (\bibinfo {year} {2025})},\ \Eprint {https://arxiv.org/abs/2306.04797} {arXiv:2306.04797} \BibitemShut {NoStop}%
\bibitem [{\citenamefont {Begu{\v s}i{\'c}}\ and\ \citenamefont {Chan}(2025)}]{begusicRealTimeOperatorEvolution2025}%
  \BibitemOpen
  \bibfield  {author} {\bibinfo {author} {\bibfnamefont {T.}~\bibnamefont {Begu{\v s}i{\'c}}}\ and\ \bibinfo {author} {\bibfnamefont {G.~K.-L.}\ \bibnamefont {Chan}},\ }\href {https://doi.org/10.1103/PRXQuantum.6.020302} {\bibfield  {journal} {\bibinfo  {journal} {PRX Quantum}\ }\textbf {\bibinfo {volume} {6}},\ \bibinfo {pages} {020302} (\bibinfo {year} {2025})},\ \Eprint {https://arxiv.org/abs/2409.03097} {arXiv:2409.03097} \BibitemShut {NoStop}%
\bibitem [{\citenamefont {Angrisani}\ \emph {et~al.}(2025{\natexlab{b}})\citenamefont {Angrisani}, \citenamefont {Schmidhuber}, \citenamefont {Rudolph}, \citenamefont {Cerezo}, \citenamefont {Holmes},\ and\ \citenamefont {Huang}}]{angrisaniClassicallyEstimatingObservables2025}%
  \BibitemOpen
  \bibfield  {author} {\bibinfo {author} {\bibfnamefont {A.}~\bibnamefont {Angrisani}}, \bibinfo {author} {\bibfnamefont {A.}~\bibnamefont {Schmidhuber}}, \bibinfo {author} {\bibfnamefont {M.~S.}\ \bibnamefont {Rudolph}}, \bibinfo {author} {\bibfnamefont {M.}~\bibnamefont {Cerezo}}, \bibinfo {author} {\bibfnamefont {Z.}~\bibnamefont {Holmes}},\ and\ \bibinfo {author} {\bibfnamefont {H.-Y.}\ \bibnamefont {Huang}},\ }\href {https://doi.org/10.1103/lh6x-7rc3} {\bibfield  {journal} {\bibinfo  {journal} {Physical Review Letters}\ }\textbf {\bibinfo {volume} {135}},\ \bibinfo {pages} {170602} (\bibinfo {year} {2025}{\natexlab{b}})},\ \Eprint {https://arxiv.org/abs/2409.01706} {arXiv:2409.01706} \BibitemShut {NoStop}%
\bibitem [{\citenamefont {Zhao}\ \emph {et~al.}(2025)\citenamefont {Zhao}, \citenamefont {Zhou},\ and\ \citenamefont {Childs}}]{zhaoEntanglementAcceleratesQuantum2025}%
  \BibitemOpen
  \bibfield  {author} {\bibinfo {author} {\bibfnamefont {Q.}~\bibnamefont {Zhao}}, \bibinfo {author} {\bibfnamefont {Y.}~\bibnamefont {Zhou}},\ and\ \bibinfo {author} {\bibfnamefont {A.~M.}\ \bibnamefont {Childs}},\ }\href {https://doi.org/10.1038/s41567-025-02945-2} {\bibfield  {journal} {\bibinfo  {journal} {Nature Physics}\ }\textbf {\bibinfo {volume} {21}},\ \bibinfo {pages} {1338} (\bibinfo {year} {2025})},\ \Eprint {https://arxiv.org/abs/2406.02379} {arXiv:2406.02379} \BibitemShut {NoStop}%
\bibitem [{Note1({\natexlab{a}})}]{Note1}%
  \BibitemOpen
  \bibinfo {note} {In this paper, the word ``local'' means that all Paulis in the local operator have constant (system-size independent) weights, not necessarily being geometrically local.}\BibitemShut {Stop}%
\bibitem [{\citenamefont {Bravyi}\ \emph {et~al.}(2021)\citenamefont {Bravyi}, \citenamefont {Gosset},\ and\ \citenamefont {Movassagh}}]{bravyiClassicalAlgorithmsQuantum2021}%
  \BibitemOpen
  \bibfield  {author} {\bibinfo {author} {\bibfnamefont {S.}~\bibnamefont {Bravyi}}, \bibinfo {author} {\bibfnamefont {D.}~\bibnamefont {Gosset}},\ and\ \bibinfo {author} {\bibfnamefont {R.}~\bibnamefont {Movassagh}},\ }\href {https://doi.org/10.1038/s41567-020-01109-8} {\bibfield  {journal} {\bibinfo  {journal} {Nature Physics}\ }\textbf {\bibinfo {volume} {17}},\ \bibinfo {pages} {337} (\bibinfo {year} {2021})},\ \Eprint {https://arxiv.org/abs/1909.11485} {arXiv:1909.11485} \BibitemShut {NoStop}%
\bibitem [{\citenamefont {Bravyi}\ \emph {et~al.}(2024)\citenamefont {Bravyi}, \citenamefont {Gosset},\ and\ \citenamefont {Liu}}]{bravyiClassicalSimulationPeaked2024}%
  \BibitemOpen
  \bibfield  {author} {\bibinfo {author} {\bibfnamefont {S.}~\bibnamefont {Bravyi}}, \bibinfo {author} {\bibfnamefont {D.}~\bibnamefont {Gosset}},\ and\ \bibinfo {author} {\bibfnamefont {Y.}~\bibnamefont {Liu}},\ }in\ \href {https://doi.org/10.1145/3618260.3649638} {\emph {\bibinfo {booktitle} {Proceedings of the 56th {{Annual ACM Symposium}} on {{Theory}} of {{Computing}}}}},\ \bibinfo {series and number} {{{STOC}} 2024}\ (\bibinfo  {publisher} {Association for Computing Machinery},\ \bibinfo {address} {New York, NY, USA},\ \bibinfo {year} {2024})\ pp.\ \bibinfo {pages} {561--572},\ \Eprint {https://arxiv.org/abs/2309.08405} {arXiv:2309.08405} \BibitemShut {NoStop}%
\bibitem [{\citenamefont {Gottesman}(1998)}]{gottesmanHeisenbergRepresentationQuantum1998}%
  \BibitemOpen
  \bibfield  {author} {\bibinfo {author} {\bibfnamefont {D.}~\bibnamefont {Gottesman}},\ }\href {http://arxiv.org/abs/quant-ph/9807006} {\bibfield  {journal} {\bibinfo  {journal} {arXiv:quant-ph/9807006}\ } (\bibinfo {year} {1998})},\ \Eprint {https://arxiv.org/abs/quant-ph/9807006} {arXiv:quant-ph/9807006} \BibitemShut {NoStop}%
\bibitem [{\citenamefont {Childs}\ \emph {et~al.}(2021)\citenamefont {Childs}, \citenamefont {Su}, \citenamefont {Tran}, \citenamefont {Wiebe},\ and\ \citenamefont {Zhu}}]{childsTheoryTrotterError2021}%
  \BibitemOpen
  \bibfield  {author} {\bibinfo {author} {\bibfnamefont {A.~M.}\ \bibnamefont {Childs}}, \bibinfo {author} {\bibfnamefont {Y.}~\bibnamefont {Su}}, \bibinfo {author} {\bibfnamefont {M.~C.}\ \bibnamefont {Tran}}, \bibinfo {author} {\bibfnamefont {N.}~\bibnamefont {Wiebe}},\ and\ \bibinfo {author} {\bibfnamefont {S.}~\bibnamefont {Zhu}},\ }\href {https://doi.org/10.1103/PhysRevX.11.011020} {\bibfield  {journal} {\bibinfo  {journal} {Physical Review X}\ }\textbf {\bibinfo {volume} {11}},\ \bibinfo {pages} {011020} (\bibinfo {year} {2021})},\ \Eprint {https://arxiv.org/abs/1912.08854} {arXiv:1912.08854} \BibitemShut {NoStop}%
\bibitem [{\citenamefont {Zhao}\ \emph {et~al.}(2021)\citenamefont {Zhao}, \citenamefont {Zhou}, \citenamefont {Shaw}, \citenamefont {Li},\ and\ \citenamefont {Childs}}]{zhaoHamiltonianSimulationRandom2021}%
  \BibitemOpen
  \bibfield  {author} {\bibinfo {author} {\bibfnamefont {Q.}~\bibnamefont {Zhao}}, \bibinfo {author} {\bibfnamefont {Y.}~\bibnamefont {Zhou}}, \bibinfo {author} {\bibfnamefont {A.~F.}\ \bibnamefont {Shaw}}, \bibinfo {author} {\bibfnamefont {T.}~\bibnamefont {Li}},\ and\ \bibinfo {author} {\bibfnamefont {A.~M.}\ \bibnamefont {Childs}},\ }\href {https://doi.org/10.1103/PhysRevLett.129.270502} {\bibfield  {journal} {\bibinfo  {journal} {Physical Review Letters}\ }\textbf {\bibinfo {volume} {127}},\ \bibinfo {pages} {270502} (\bibinfo {year} {2021})},\ \Eprint {https://arxiv.org/abs/2111.04773} {arXiv:2111.04773} \BibitemShut {NoStop}%
\bibitem [{\citenamefont {Chen}\ and\ \citenamefont {Brand{\~a}o}(2024)}]{chenAverageCaseSpeedupProduct2024}%
  \BibitemOpen
  \bibfield  {author} {\bibinfo {author} {\bibfnamefont {C.-F.}\ \bibnamefont {Chen}}\ and\ \bibinfo {author} {\bibfnamefont {F.~G. S.~L.}\ \bibnamefont {Brand{\~a}o}},\ }\href {https://doi.org/10.1007/s00220-023-04912-5} {\bibfield  {journal} {\bibinfo  {journal} {Communications in Mathematical Physics}\ }\textbf {\bibinfo {volume} {405}},\ \bibinfo {pages} {32} (\bibinfo {year} {2024})},\ \Eprint {https://arxiv.org/abs/2111.05324} {arXiv:2111.05324} \BibitemShut {NoStop}%
\bibitem [{\citenamefont {Yu}\ \emph {et~al.}(2025)\citenamefont {Yu}, \citenamefont {Xu},\ and\ \citenamefont {Zhao}}]{yuObservabledrivenSpeedupsQuantum2025}%
  \BibitemOpen
  \bibfield  {author} {\bibinfo {author} {\bibfnamefont {W.}~\bibnamefont {Yu}}, \bibinfo {author} {\bibfnamefont {J.}~\bibnamefont {Xu}},\ and\ \bibinfo {author} {\bibfnamefont {Q.}~\bibnamefont {Zhao}},\ }\href {https://doi.org/10.1038/s42005-025-02260-5} {\bibfield  {journal} {\bibinfo  {journal} {Communications Physics}\ }\textbf {\bibinfo {volume} {8}},\ \bibinfo {pages} {340} (\bibinfo {year} {2025})},\ \Eprint {https://arxiv.org/abs/2407.14497} {arXiv:2407.14497} \BibitemShut {NoStop}%
\bibitem [{\citenamefont {Mizuta}\ and\ \citenamefont {Kuwahara}(2025)}]{mizutaTrotterizationSubstantiallyEfficient2025}%
  \BibitemOpen
  \bibfield  {author} {\bibinfo {author} {\bibfnamefont {K.}~\bibnamefont {Mizuta}}\ and\ \bibinfo {author} {\bibfnamefont {T.}~\bibnamefont {Kuwahara}},\ }\href {https://doi.org/10.1103/q87n-5xhz} {\bibfield  {journal} {\bibinfo  {journal} {Physical Review Letters}\ }\textbf {\bibinfo {volume} {135}},\ \bibinfo {pages} {130602} (\bibinfo {year} {2025})},\ \Eprint {https://arxiv.org/abs/2504.20746} {arXiv:2504.20746} \BibitemShut {NoStop}%
\bibitem [{Note2()}]{Note2}%
  \BibitemOpen
  \bibinfo {note} {The Pauli 2-norm of a Hermitian operator $A$ is the $l_2$-norm of the Pauli coefficients $\protect \vec {c}$ with $A=\DOTSB \sum@ \slimits@ _{P\in \protect \mathbb {P}_{n}} \alpha _P P$. It is equivalent to the normalized Schatten 2-norm $\norm {A}_{\protect \bar {2}}:=\protect \sqrt {\Tr (A^\dagger A)/2^n}$ of operator $A$ on $n$ qubits, also known as the normalized Frobenius norm}\BibitemShut {NoStop}%
\bibitem [{\citenamefont {Dowling}\ \emph {et~al.}(2025{\natexlab{a}})\citenamefont {Dowling}, \citenamefont {Kos},\ and\ \citenamefont {Turkeshi}}]{dowlingMagicResourcesHeisenberg2025}%
  \BibitemOpen
  \bibfield  {author} {\bibinfo {author} {\bibfnamefont {N.}~\bibnamefont {Dowling}}, \bibinfo {author} {\bibfnamefont {P.}~\bibnamefont {Kos}},\ and\ \bibinfo {author} {\bibfnamefont {X.}~\bibnamefont {Turkeshi}},\ }\href {https://doi.org/10.1103/p7xt-s9nz} {\bibfield  {journal} {\bibinfo  {journal} {Physical Review Letters}\ }\textbf {\bibinfo {volume} {135}},\ \bibinfo {pages} {050401} (\bibinfo {year} {2025}{\natexlab{a}})},\ \Eprint {https://arxiv.org/abs/2408.16047} {arXiv:2408.16047} \BibitemShut {NoStop}%
\bibitem [{\citenamefont {Fuller}\ \emph {et~al.}(2025)\citenamefont {Fuller}, \citenamefont {Tran}, \citenamefont {Lykov}, \citenamefont {Johnson}, \citenamefont {Rossmannek}, \citenamefont {Wei}, \citenamefont {He}, \citenamefont {Kim}, \citenamefont {Vu}, \citenamefont {Sharma}, \citenamefont {Alexeev}, \citenamefont {Kandala},\ and\ \citenamefont {Mezzacapo}}]{fullerImprovedQuantumComputation2025}%
  \BibitemOpen
  \bibfield  {author} {\bibinfo {author} {\bibfnamefont {B.}~\bibnamefont {Fuller}}, \bibinfo {author} {\bibfnamefont {M.~C.}\ \bibnamefont {Tran}}, \bibinfo {author} {\bibfnamefont {D.}~\bibnamefont {Lykov}}, \emph {et~al.},\ }\href {https://doi.org/10.48550/arXiv.2502.01897} {\bibinfo {title} {Improved {{Quantum Computation}} using {{Operator Backpropagation}}}} (\bibinfo {year} {2025}),\ \Eprint {https://arxiv.org/abs/2502.01897} {arXiv:2502.01897} \BibitemShut {NoStop}%
\bibitem [{Note3()}]{Note3}%
  \BibitemOpen
  \bibinfo {note} {The mean-square expectation values can be bounded by the squared Pauli 2-norm of the observable, that is, $ \protect \mathbb {E}_{\ket {\psi }\sim \protect \mathscr {E}_2} \qty [\abs {\bra {\psi } O\ket {\psi }}^2] \le \norm {O}_{\protect \bar {2}}^2$}\BibitemShut {NoStop}%
\bibitem [{\citenamefont {Shao}\ \emph {et~al.}(2025)\citenamefont {Shao}, \citenamefont {Cheng},\ and\ \citenamefont {Liu}}]{shaoPauliPropagationSimulating2025}%
  \BibitemOpen
  \bibfield  {author} {\bibinfo {author} {\bibfnamefont {Y.}~\bibnamefont {Shao}}, \bibinfo {author} {\bibfnamefont {S.}~\bibnamefont {Cheng}},\ and\ \bibinfo {author} {\bibfnamefont {Z.}~\bibnamefont {Liu}},\ }\href {https://doi.org/10.48550/arXiv.2510.22311} {\bibinfo {title} {Pauli {{Propagation}}: {{Simulating Quantum Spin Dynamics}} via {{Operator Complexity}}}} (\bibinfo {year} {2025}),\ \Eprint {https://arxiv.org/abs/2510.22311} {arXiv:2510.22311} \BibitemShut {NoStop}%
\bibitem [{\citenamefont {Kim}\ and\ \citenamefont {Huse}(2013)}]{kimBallisticSpreadingEntanglement2013}%
  \BibitemOpen
  \bibfield  {author} {\bibinfo {author} {\bibfnamefont {H.}~\bibnamefont {Kim}}\ and\ \bibinfo {author} {\bibfnamefont {D.~A.}\ \bibnamefont {Huse}},\ }\href {https://doi.org/10.1103/PhysRevLett.111.127205} {\bibfield  {journal} {\bibinfo  {journal} {Physical Review Letters}\ }\textbf {\bibinfo {volume} {111}},\ \bibinfo {pages} {127205} (\bibinfo {year} {2013})},\ \Eprint {https://arxiv.org/abs/1306.4306} {arXiv:1306.4306} \BibitemShut {NoStop}%
\bibitem [{\citenamefont {Tarabunga}\ \emph {et~al.}(2024)\citenamefont {Tarabunga}, \citenamefont {Tirrito}, \citenamefont {Ba{\~n}uls},\ and\ \citenamefont {Dalmonte}}]{tarabungaNonstabilizernessMatrixProduct2024}%
  \BibitemOpen
  \bibfield  {author} {\bibinfo {author} {\bibfnamefont {P.~S.}\ \bibnamefont {Tarabunga}}, \bibinfo {author} {\bibfnamefont {E.}~\bibnamefont {Tirrito}}, \bibinfo {author} {\bibfnamefont {M.~C.}\ \bibnamefont {Ba{\~n}uls}},\ and\ \bibinfo {author} {\bibfnamefont {M.}~\bibnamefont {Dalmonte}},\ }\href {https://doi.org/10.1103/PhysRevLett.133.010601} {\bibfield  {journal} {\bibinfo  {journal} {Physical Review Letters}\ }\textbf {\bibinfo {volume} {133}},\ \bibinfo {pages} {010601} (\bibinfo {year} {2024})}\BibitemShut {NoStop}%
\bibitem [{\citenamefont {Qian}\ \emph {et~al.}(2024)\citenamefont {Qian}, \citenamefont {Huang},\ and\ \citenamefont {Qin}}]{qianAugmentingDensityMatrix2024}%
  \BibitemOpen
  \bibfield  {author} {\bibinfo {author} {\bibfnamefont {X.}~\bibnamefont {Qian}}, \bibinfo {author} {\bibfnamefont {J.}~\bibnamefont {Huang}},\ and\ \bibinfo {author} {\bibfnamefont {M.}~\bibnamefont {Qin}},\ }\href {https://doi.org/10.1103/PhysRevLett.133.190402} {\bibfield  {journal} {\bibinfo  {journal} {Physical Review Letters}\ }\textbf {\bibinfo {volume} {133}},\ \bibinfo {pages} {190402} (\bibinfo {year} {2024})}\BibitemShut {NoStop}%
\bibitem [{\citenamefont {Dowling}\ \emph {et~al.}(2025{\natexlab{b}})\citenamefont {Dowling}, \citenamefont {Modi},\ and\ \citenamefont {White}}]{dowlingBridgingEntanglementMagic2025}%
  \BibitemOpen
  \bibfield  {author} {\bibinfo {author} {\bibfnamefont {N.}~\bibnamefont {Dowling}}, \bibinfo {author} {\bibfnamefont {K.}~\bibnamefont {Modi}},\ and\ \bibinfo {author} {\bibfnamefont {G.~A.~L.}\ \bibnamefont {White}},\ }\href {https://doi.org/10.1103/c7k1-xcwy} {\bibfield  {journal} {\bibinfo  {journal} {Physical Review Letters}\ }\textbf {\bibinfo {volume} {135}},\ \bibinfo {pages} {160201} (\bibinfo {year} {2025}{\natexlab{b}})},\ \Eprint {https://arxiv.org/abs/2501.18679} {arXiv:2501.18679} \BibitemShut {NoStop}%
\bibitem [{Note4()}]{Note4}%
  \BibitemOpen
  \bibinfo {note} {Here, we use the Shannon entropy of the (normalized) squared Pauli coefficients of an observable $O=\DOTSB \sum@ \slimits@ _P \alpha _P P$, that is, $\protect \mathcal {M}(O) = -\DOTSB \sum@ \slimits@ _{P\in \protect \mathbb {P}} \abs {\alpha _P}^2 \log \qty (\abs {\alpha _P}^2)$ as the operator magic (nonstabilizerness) measure of $O$ \cite {dowlingMagicResourcesHeisenberg2025}.}\BibitemShut {Stop}%
\bibitem [{\citenamefont {Granet}\ and\ \citenamefont {Dreyer}(2025)}]{granetDilutionErrorDigital2025}%
  \BibitemOpen
  \bibfield  {author} {\bibinfo {author} {\bibfnamefont {E.}~\bibnamefont {Granet}}\ and\ \bibinfo {author} {\bibfnamefont {H.}~\bibnamefont {Dreyer}},\ }\href {https://doi.org/10.1103/PRXQuantum.6.010333} {\bibfield  {journal} {\bibinfo  {journal} {PRX Quantum}\ }\textbf {\bibinfo {volume} {6}},\ \bibinfo {pages} {010333} (\bibinfo {year} {2025})},\ \Eprint {https://arxiv.org/abs/2409.04254} {arXiv:2409.04254} \BibitemShut {NoStop}%
\bibitem [{\citenamefont {Teng}\ \emph {et~al.}(2025)\citenamefont {Teng}, \citenamefont {Chang}, \citenamefont {Rudolph},\ and\ \citenamefont {Holmes}}]{tengLeveragingSymmetryMerging2025}%
  \BibitemOpen
  \bibfield  {author} {\bibinfo {author} {\bibfnamefont {Y.}~\bibnamefont {Teng}}, \bibinfo {author} {\bibfnamefont {S.~Y.}\ \bibnamefont {Chang}}, \bibinfo {author} {\bibfnamefont {M.~S.}\ \bibnamefont {Rudolph}},\ and\ \bibinfo {author} {\bibfnamefont {Z.}~\bibnamefont {Holmes}},\ }\href {https://doi.org/10.48550/arXiv.2512.12094} {\bibinfo {title} {Leveraging {{Symmetry Merging}} in {{Pauli Propagation}}}} (\bibinfo {year} {2025}),\ \Eprint {https://arxiv.org/abs/2512.12094} {arXiv:2512.12094} \BibitemShut {NoStop}%
\bibitem [{\citenamefont {Miller}\ \emph {et~al.}(2025)\citenamefont {Miller}, \citenamefont {Holmes}, \citenamefont {Salehi}, \citenamefont {Chakraborty}, \citenamefont {Nyk{\"a}nen}, \citenamefont {Zimbor{\'a}s}, \citenamefont {Glos},\ and\ \citenamefont {{Garc{\'i}a-P{\'e}rez}}}]{millerSimulationFermionicCircuits2025}%
  \BibitemOpen
  \bibfield  {author} {\bibinfo {author} {\bibfnamefont {A.}~\bibnamefont {Miller}}, \bibinfo {author} {\bibfnamefont {Z.}~\bibnamefont {Holmes}}, \bibinfo {author} {\bibfnamefont {{\"O}.}~\bibnamefont {Salehi}}, \bibinfo {author} {\bibfnamefont {R.}~\bibnamefont {Chakraborty}}, \bibinfo {author} {\bibfnamefont {A.}~\bibnamefont {Nyk{\"a}nen}}, \bibinfo {author} {\bibfnamefont {Z.}~\bibnamefont {Zimbor{\'a}s}}, \bibinfo {author} {\bibfnamefont {A.}~\bibnamefont {Glos}},\ and\ \bibinfo {author} {\bibfnamefont {G.}~\bibnamefont {{Garc{\'i}a-P{\'e}rez}}},\ }\href {https://doi.org/10.48550/arXiv.2503.18939} {\bibinfo {title} {Simulation of {{Fermionic}} circuits using {{Majorana Propagation}}}} (\bibinfo {year} {2025}),\ \Eprint {https://arxiv.org/abs/2503.18939} {arXiv:2503.18939} \BibitemShut {NoStop}%
\bibitem [{\citenamefont {Facelli}\ \emph {et~al.}(2026)\citenamefont {Facelli}, \citenamefont {Fawzi},\ and\ \citenamefont {Fawzi}}]{facelliFastConvergenceMajorana2026}%
  \BibitemOpen
  \bibfield  {author} {\bibinfo {author} {\bibfnamefont {G.}~\bibnamefont {Facelli}}, \bibinfo {author} {\bibfnamefont {H.}~\bibnamefont {Fawzi}},\ and\ \bibinfo {author} {\bibfnamefont {O.}~\bibnamefont {Fawzi}},\ }\href {https://doi.org/10.48550/arXiv.2601.05226} {\bibinfo {title} {Fast convergence of {{Majorana Propagation}} for weakly interacting fermions}} (\bibinfo {year} {2026}),\ \Eprint {https://arxiv.org/abs/2601.05226} {arXiv:2601.05226} \BibitemShut {NoStop}%
\bibitem [{Note1({\natexlab{b}})}]{seesm}%
  \BibitemOpen
  \bibinfo {note} {See Supplemental Material for more details}\BibitemShut {NoStop}%
\bibitem [{\citenamefont {Mele}(2024)}]{meleIntroductionHaarMeasure2024}%
  \BibitemOpen
  \bibfield  {author} {\bibinfo {author} {\bibfnamefont {A.~A.}\ \bibnamefont {Mele}},\ }\href {https://doi.org/10.22331/q-2024-05-08-1340} {\bibfield  {journal} {\bibinfo  {journal} {Quantum}\ }\textbf {\bibinfo {volume} {8}},\ \bibinfo {pages} {1340} (\bibinfo {year} {2024})},\ \Eprint {https://arxiv.org/abs/2307.08956} {arXiv:2307.08956} \BibitemShut {NoStop}%
\bibitem [{\citenamefont {Suzuki}(1991)}]{suzukiGeneralTheoryFractal1991}%
  \BibitemOpen
  \bibfield  {author} {\bibinfo {author} {\bibfnamefont {M.}~\bibnamefont {Suzuki}},\ }\href {https://doi.org/10.1063/1.529425} {\bibfield  {journal} {\bibinfo  {journal} {Journal of Mathematical Physics}\ }\textbf {\bibinfo {volume} {32}},\ \bibinfo {pages} {400} (\bibinfo {year} {1991})}\BibitemShut {NoStop}%
\bibitem [{\citenamefont {Cai}\ \emph {et~al.}(2023)\citenamefont {Cai}, \citenamefont {Babbush}, \citenamefont {Benjamin}, \citenamefont {Endo}, \citenamefont {Huggins}, \citenamefont {Li}, \citenamefont {McClean},\ and\ \citenamefont {O'Brien}}]{caiQuantumErrorMitigation2023}%
  \BibitemOpen
  \bibfield  {author} {\bibinfo {author} {\bibfnamefont {Z.}~\bibnamefont {Cai}}, \bibinfo {author} {\bibfnamefont {R.}~\bibnamefont {Babbush}}, \bibinfo {author} {\bibfnamefont {S.~C.}\ \bibnamefont {Benjamin}}, \bibinfo {author} {\bibfnamefont {S.}~\bibnamefont {Endo}}, \bibinfo {author} {\bibfnamefont {W.~J.}\ \bibnamefont {Huggins}}, \bibinfo {author} {\bibfnamefont {Y.}~\bibnamefont {Li}}, \bibinfo {author} {\bibfnamefont {J.~R.}\ \bibnamefont {McClean}},\ and\ \bibinfo {author} {\bibfnamefont {T.~E.}\ \bibnamefont {O'Brien}},\ }\href {https://doi.org/10.1103/RevModPhys.95.045005} {\bibfield  {journal} {\bibinfo  {journal} {Reviews of Modern Physics}\ }\textbf {\bibinfo {volume} {95}},\ \bibinfo {pages} {045005} (\bibinfo {year} {2023})}\BibitemShut {NoStop}%
\bibitem [{\citenamefont {Lerch}\ \emph {et~al.}(2024)\citenamefont {Lerch}, \citenamefont {Puig}, \citenamefont {Rudolph}, \citenamefont {Angrisani}, \citenamefont {Jones}, \citenamefont {Cerezo}, \citenamefont {Thanasilp},\ and\ \citenamefont {Holmes}}]{lerchEfficientQuantumenhancedClassical2024}%
  \BibitemOpen
  \bibfield  {author} {\bibinfo {author} {\bibfnamefont {S.}~\bibnamefont {Lerch}}, \bibinfo {author} {\bibfnamefont {R.}~\bibnamefont {Puig}}, \bibinfo {author} {\bibfnamefont {M.~S.}\ \bibnamefont {Rudolph}}, \bibinfo {author} {\bibfnamefont {A.}~\bibnamefont {Angrisani}}, \bibinfo {author} {\bibfnamefont {T.}~\bibnamefont {Jones}}, \bibinfo {author} {\bibfnamefont {M.}~\bibnamefont {Cerezo}}, \bibinfo {author} {\bibfnamefont {S.}~\bibnamefont {Thanasilp}},\ and\ \bibinfo {author} {\bibfnamefont {Z.}~\bibnamefont {Holmes}},\ }\href {https://doi.org/10.48550/arXiv.2411.19896} {\bibinfo {title} {Efficient quantum-enhanced classical simulation for patches of quantum landscapes}} (\bibinfo {year} {2024}),\ \Eprint {https://arxiv.org/abs/2411.19896} {arXiv:2411.19896} \BibitemShut {NoStop}%
\bibitem [{\citenamefont {Zhang}\ \emph {et~al.}(2024)\citenamefont {Zhang}, \citenamefont {Shao}, \citenamefont {Wei}, \citenamefont {Cheng}, \citenamefont {Wei},\ and\ \citenamefont {Liu}}]{zhangCliffordPerturbationApproximation2024}%
  \BibitemOpen
  \bibfield  {author} {\bibinfo {author} {\bibfnamefont {R.}~\bibnamefont {Zhang}}, \bibinfo {author} {\bibfnamefont {Y.}~\bibnamefont {Shao}}, \bibinfo {author} {\bibfnamefont {F.}~\bibnamefont {Wei}}, \bibinfo {author} {\bibfnamefont {S.}~\bibnamefont {Cheng}}, \bibinfo {author} {\bibfnamefont {Z.}~\bibnamefont {Wei}},\ and\ \bibinfo {author} {\bibfnamefont {Z.}~\bibnamefont {Liu}},\ }\href {https://doi.org/10.48550/arXiv.2412.09518} {\bibinfo {title} {Clifford {{Perturbation Approximation}} for {{Quantum Error Mitigation}}}} (\bibinfo {year} {2024}),\ \Eprint {https://arxiv.org/abs/2412.09518} {arXiv:2412.09518} \BibitemShut {NoStop}%
\bibitem [{\citenamefont {Endo}\ \emph {et~al.}(2019)\citenamefont {Endo}, \citenamefont {Zhao}, \citenamefont {Li}, \citenamefont {Benjamin},\ and\ \citenamefont {Yuan}}]{endoMitigatingAlgorithmicErrors2019}%
  \BibitemOpen
  \bibfield  {author} {\bibinfo {author} {\bibfnamefont {S.}~\bibnamefont {Endo}}, \bibinfo {author} {\bibfnamefont {Q.}~\bibnamefont {Zhao}}, \bibinfo {author} {\bibfnamefont {Y.}~\bibnamefont {Li}}, \bibinfo {author} {\bibfnamefont {S.}~\bibnamefont {Benjamin}},\ and\ \bibinfo {author} {\bibfnamefont {X.}~\bibnamefont {Yuan}},\ }\href {https://doi.org/10.1103/PhysRevA.99.012334} {\bibfield  {journal} {\bibinfo  {journal} {Physical Review A}\ }\textbf {\bibinfo {volume} {99}},\ \bibinfo {pages} {012334} (\bibinfo {year} {2019})},\ \Eprint {https://arxiv.org/abs/1808.03623} {arXiv:1808.03623} \BibitemShut {NoStop}%
\bibitem [{\citenamefont {Watson}\ and\ \citenamefont {Watkins}(2025)}]{watsonExponentiallyReducedCircuit2025}%
  \BibitemOpen
  \bibfield  {author} {\bibinfo {author} {\bibfnamefont {J.~D.}\ \bibnamefont {Watson}}\ and\ \bibinfo {author} {\bibfnamefont {J.}~\bibnamefont {Watkins}},\ }\href {https://doi.org/10.1103/kw39-yxq5} {\bibfield  {journal} {\bibinfo  {journal} {PRX Quantum}\ }\textbf {\bibinfo {volume} {6}},\ \bibinfo {pages} {030325} (\bibinfo {year} {2025})},\ \Eprint {https://arxiv.org/abs/2408.14385} {arXiv:2408.14385} \BibitemShut {NoStop}%
\end{thebibliography}%

\begin{center}
{\bf \large {End Matter}} 
\end{center}

\begin{table*}[!t]
    \centering
    \setlength{\extrarowheight}{2pt} 
    \begin{tabular}{c|c|c|c|c|c}
         \hline\hline
           \textbf{Noisy} & RC Sampling \cite{aharonovPolynomialTimeClassicalAlgorithm2023} &  Parameterized circuit \cite{shaoSimulatingNoisyVariational2024} & Any circuit \cite{schusterPolynomialtimeClassicalAlgorithm2025} & Clifford+T \cite{gonzalez-garciaPauliPathSimulations2025} & Non-unital \cite{martinezEfficientSimulationParametrized2025,angrisaniSimulatingQuantumCircuits2025} \\
         \hline
         Goal & $\abs{\bra{\vbx} U \ket{\vb{0}}}^2$ & $\mel{\vb{0}}{U^\dagger O U}{\vb{0}}$ & $\Tr(U^{\dagger}O U \rho)$  & $\Tr(U^{\dagger}O U \rho)$  & $\Tr(U^{\dagger} O U \rho )$  \\
         Method& \texttt{Pauli path} & \texttt{Pauli path} & \texttt{Path/Pauli weight} & \texttt{Pauli path} & \texttt{Pauli path} \\
         Limits & Random circuits & Random parameters & Random states  & No-go & ``Random'' circuits \\
         Pros & $\textsf{P}$ runtime & $\textsf{P}$ runtime & $\textsf{P}$, No random circuit & Worst-case & $\textsf{P}$, General noise \\
         \hline\hline
          \textbf{Noiseless} & Dynamics  \cite{tindallEfficientTensorNetwork2024,begusicFastConvergedClassical2024} & Locally scrambling $U$ \cite{angrisaniClassicallyEstimatingObservables2025} & Dynamics \cite{wuEfficientClassicalAlgorithm2024,wildClassicalSimulationShortTime2023} & Dynamics \cite{begusicFastConvergedClassical2024,begusicRealTimeOperatorEvolution2025} & \textcolor{purple}{\emph{\textbf{This work}}} \\
         \hline
         Goal & $\bra{\Psi} e^{\ii Ht} O e^{-\ii Ht} \ket{\Psi}$ & $\Tr(U^{\dagger} O U \rho)$ & $\bra{\Phi} e^{\ii Ht} O e^{-\ii Ht} \ket{\Phi}$ & $\Tr(e^{\ii Ht} O e^{-\ii Ht} \rho)$ & $\Tr(e^{\ii Ht} O e^{-\ii Ht} \rho )$ \\
         Method & \texttt{Tensor network} & \texttt{Pauli weight} & \texttt{Cluster expansion} & \texttt{Pauli coeff} & \texttt{Pauli weight} \\ %
         Limits &  Low entanglement & ``Random'' circuit & Short time, product $\ket{\Phi}$ & Heuristic, $H$ & Short $t$, local \\ %
         Pros & $\textsf{P}$ runtime & $\textsf{P}$ runtime & $\textsf{P}$ for $t\in\bigO(1)$ & Efficient & $\textsf{P}$, \textcolor{purple}{\textbf{Any states}} \\
         \hline
    \end{tabular}
    \caption{
    Summary of the classical simulation algorithms for related problems.
    We refer ``Dynamics'' to the real-time evolution generated by a Hamiltonian.
    The sampling problem is to achieve small $l_1$ distance to the output state, otherwise the problem is to approximate the expectation value.
    $\texttt{Pauli path}$ means truncation of Pauli path by its weight,
    while $\texttt{Pauli weight}$ ($\texttt{Pauli coeff}$) refers to truncating the high-weight (small-coefficient) Pauli operators in the evolved observable.
    Here, $\textsf{P}$ denotes a provably efficient runtime under the specific setting.
    We use $U$ to denote a general quantum circuit, and $e^{-\ii Ht}$ stands for a Hamiltonian dynamics.
    Since the tensor network methods are capable of representing entangled states of constant entanglement entropy, our algorithm and theoretical bound work for any input states.
    }
    \label{tab:comparison}
\end{table*}

\mysection[3]{Sketch proof of \cref{thm:entangle_ob_bound} (Pauli 2-norm bound)}
We first introduce the upper bound on the expectation value of a local observable regarding the distance between the reduced density matrix and maximally mixed states.
\begin{lemma}[Upper bound on local observable expectation value]\label{mtd:lem:local_observable_2bound}
    Given a $w$-local observable $O= \sum_{j} O_j$ 
    where each $O_j$ acts non-trivially on the subsystem $\supp(O_j)$ and each has weight at most $w$ (i.e., $\max_j \abs{\supp(O_j)} \le w$), 
    the squared expectation value of $O$ with respect to any pure state $\ket{\psi}$ has the upper bound 
    \begin{equation}\label{mtd:eq:local_observable_bound}
        \abs{\bra{\psi} O \ket{\psi}}^2
        \le \nfnorm{O}^2 + \sum_{j,j'} \norm{O_j^\dagger O_{j'}} \Tr\abs{\rho_{j,j'}-\frac{I_{\supp(O^\dagger_j O_{j'})}}{2^{\abs{\supp(O^\dagger_j O_{j'})}}}},
    \end{equation}
    where $\rho_{j,j'} := \Tr_{[n]\backslash \supp(O^\dagger_j O_{j'})}(\op{\psi})$ is the reduced density matrix of $\ket{\psi}$ on the subsystem $\supp(O^\dagger_j O_{j'})$.
\end{lemma}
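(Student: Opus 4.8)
The plan is to work in the Pauli expansion $O=\sum_j\beta_j P_j$ (since $O$ is Hermitian we may take the $P_j$ distinct and the $\beta_j$ real, so $O_j=\beta_j P_j$), reduce $\abs{\bra{\psi}O\ket{\psi}}^2$ to a double sum over pairs of Pauli terms via the triangle inequality, and then bound the diagonal and off-diagonal contributions separately so that the diagonal reproduces $\nfnorm{O}^2$ while the off-diagonal is controlled by the distance of the relevant reduced states from maximally mixed. Concretely, $\abs{\bra{\psi}O\ket{\psi}}\le\sum_j\abs{\beta_j}\,\abs{\bra{\psi}P_j\ket{\psi}}$ gives
\begin{equation*}
    \abs{\bra{\psi}O\ket{\psi}}^2\le\sum_j\abs{\beta_j}^2\abs{\bra{\psi}P_j\ket{\psi}}^2+\sum_{j\ne j'}\abs{\beta_j\beta_{j'}}\,\abs{\bra{\psi}P_j\ket{\psi}}\,\abs{\bra{\psi}P_{j'}\ket{\psi}}.
\end{equation*}

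For the diagonal sum it suffices to use that each Pauli has eigenvalues $\pm1$, so $\abs{\bra{\psi}P_j\ket{\psi}}\le1$ and $\sum_j\abs{\beta_j}^2\abs{\bra{\psi}P_j\ket{\psi}}^2\le\sum_j\abs{\beta_j}^2$; by orthonormality $\Tr(P_jP_{j'})=2^n\delta_{jj'}$ this equals $\sum_j\Tr(O_j^\dagger O_j)/2^n=\nfnorm{O}^2$, supplying the first term of \cref{mtd:eq:local_observable_bound} (the $j=j'$ contributions to the second sum there vanish, since $\supp(O_j^\dagger O_j)=\emptyset$).

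The off-diagonal sum is the crux. For $j\ne j'$ the product $P_jP_{j'}$ is a non-identity Pauli, so $\Tr(O_j^\dagger O_{j'})=0$ and the entire off-diagonal weight must come from the reduced-state deviation. The key observation is that $O_j$ is traceless, so for any subsystem $S_{jj'}\supseteq\supp(O_j)\cup\supp(O_{j'})$ (I will take $S_{jj'}=\supp(O_j^\dagger O_{j'})$, enlarging it only if $O_j$ and $O_{j'}$ happen to agree on some site) with $d_{jj'}=2^{\abs{S_{jj'}}}$ and reduced state $\rho_{j,j'}=\Tr_{[n]\setminus S_{jj'}}(\op{\psi})$, the maximally mixed piece drops out:
\begin{equation*}
    \bra{\psi}O_j\ket{\psi}=\Tr(O_j\rho_{j,j'})=\Tr\!\big(O_j(\rho_{j,j'}-I/d_{jj'})\big),
\end{equation*}
so H\"older's inequality gives $\abs{\bra{\psi}O_j\ket{\psi}}\le\norm{O_j}\,\Tr\abs{\rho_{j,j'}-I/d_{jj'}}$. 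Since also $\abs{\bra{\psi}O_j\ket{\psi}}\le\norm{O_j}$ trivially, I obtain $\abs{\bra{\psi}O_j\ket{\psi}}\le\norm{O_j}\min\{1,\Tr\abs{\rho_{j,j'}-I/d_{jj'}}\}$, and likewise for $O_{j'}$. Multiplying these, using $\min\{1,x\}^2\le\min\{1,x\}\le x$ and $\norm{O_j}\norm{O_{j'}}=\norm{O_j^\dagger O_{j'}}$ for Pauli terms, the $(j,j')$ off-diagonal term is at most $\norm{O_j^\dagger O_{j'}}\,\Tr\abs{\rho_{j,j'}-I/d_{jj'}}$; summing over $j\ne j'$ and adding the diagonal bound yields \cref{mtd:eq:local_observable_bound}.

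The main obstacle is conceptual rather than computational: one must beat the trivial Cauchy--Schwarz estimate $\abs{\bra{\psi}O\ket{\psi}}^2\le\bra{\psi}O^2\ket{\psi}\le\norm{O}^2$, which is far too loose for an observable assembled from many spread-out Pauli terms. The traceless-subtraction step is precisely what converts the locality of each $O_j$ together with the near-maximal mixedness of $\rho_{j,j'}$ into the $\nfnorm{O}^2$ scaling, and the $\min\{1,\cdot\}$ truncation is what prevents the product of the two reduced-state deviations from being \emph{squared} (which would break the bound whenever $\Tr\abs{\rho_{j,j'}-I/d_{jj'}}>1$, a regime that genuinely occurs for reduced states of pure states). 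A secondary, purely bookkeeping point is that when $O_j$ and $O_{j'}$ share a common non-identity factor on some site, $\supp(O_j^\dagger O_{j'})$ fails to contain $\supp(O_j)$; one then works on $\supp(O_j)\cup\supp(O_{j'})$ instead, which only tightens the hypothesis of \cref{thm:entangle_ob_bound}, since a state close to maximally mixed on a region stays close to maximally mixed on every subregion.
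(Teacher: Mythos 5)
Your argument is essentially correct, but it takes a genuinely different route from the paper's, and the difference has a concrete consequence. The paper starts from Cauchy--Schwarz, $\abs{\bra{\psi}O\ket{\psi}}^2\le\bra{\psi}O^\dagger O\ket{\psi}=\sum_{j,j'}\bra{\psi}O_j^\dagger O_{j'}\ket{\psi}$, and then treats each cross term $O_j^\dagger O_{j'}$ as a \emph{single} local operator: writing $\bra{\psi}O_j^\dagger O_{j'}\ket{\psi}=\Tr(L_{j,j'}\rho_{j,j'})$ with $L_{j,j'}$ supported exactly on $\supp(O_j^\dagger O_{j'})$, subtracting the maximally mixed state, and applying H\"older. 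The residual trace pieces $2^{-n}\Tr(O_j^\dagger O_{j'})$ then resum to exactly $\nfnorm{O}^2$ without any tracelessness assumption, and the argument works for an arbitrary decomposition $O=\sum_j O_j$, not only the Pauli one. You instead apply the triangle inequality first and bound the \emph{product} $\abs{\bra{\psi}O_j\ket{\psi}}\,\abs{\bra{\psi}O_{j'}\ket{\psi}}$, using tracelessness of each individual Pauli term plus the $\min\{1,\cdot\}$ truncation; this is a clean and valid alternative, and your handling of the case where one factor is the identity (use the traceless factor for the subtraction, the trivial bound for the other) goes through.

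Two caveats are worth recording. First, as you yourself note, bounding $\bra{\psi}O_j\ket{\psi}$ forces the reduced state to live on a region containing $\supp(O_j)$, so when $P_j$ and $P_{j'}$ share a non-identity factor on some site you must enlarge the region to $\supp(O_j)\cup\supp(O_{j'})\supsetneq\supp(O_j^\dagger O_{j'})$. By monotonicity of the trace distance to the maximally mixed state under partial trace, the resulting right-hand side is \emph{larger} than the one in \cref{mtd:eq:local_observable_bound}, so strictly speaking you prove a slightly weaker inequality than the lemma as stated (correspondingly, \cref{thm:entangle_ob_bound} would need its entropy hypothesis on the larger subsystems). The paper's route avoids this entirely because the operator being averaged is $O_j^\dagger O_{j'}$ itself, whose support is exactly the region appearing in the bound. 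Second, your proof is specific to the decomposition of $O$ into individual Pauli terms ($O_j=\beta_jP_j$ with distinct $P_j$), whereas the lemma is stated for a general local decomposition; the general case does not reduce to the Pauli case, since the norms $\norm{O_j}$ and supports in the bound depend on the chosen decomposition. Neither caveat affects the downstream application in this paper, but both are removed for free by the paper's Cauchy--Schwarz-first approach.
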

Recall the Pauli 2-norm of $A$ is defined as
$\nfnorm{A}:=\qty(\sum_{s\in \npbasis_{n}} \Tr(As)^2)^{1/2}$,
where $\npbasis_{n}$ denotes the Pauli basis.
The complete proof of \cref{mtd:lem:local_observable_2bound} can be found in Supplemental Material of this paper and partially from \cite{zhaoEntanglementAcceleratesQuantum2025}.
As its corollary, we have the \cref{thm:entangle_ob_bound} in the main text.
\begin{proof}[Proof of \cref{thm:entangle_ob_bound}]
    Let $S(\rho_A):=-\Tr(\rho_A \log \rho_A)$ denote the entanglement entropy of the reduced state $\rho_A$.
    If the input state $\ket{\psie}$ has the subsystem entanglement entropy $S(\rho_{j,j'})\ge\abs{\supp(O^\dagger_j O_{j'})}-\frac{1}{2}\nfnorm{O}^4/(\sum_j\norm{O_j})^4$ for all subsystem $\supp(O^\dagger_j O_{j'})$,
    the trace distance part (the second term in \cref{mtd:eq:local_observable_bound}) has the upper bounded
    \begin{align}
        &\sum_{j,j'} \norm{O_j^\dagger O_{j'}} \Tr\abs{\rho_{j,j'}-I_{\supp(O^\dagger_j O_{j'})}/2^{\abs{\supp(O^\dagger_j O_{j'})}}} \\
        \le& \sum_{j,j'} \norm{O_j^\dagger O_{j'}} \sqrt{2S\qty(\rho_{j,j'}||I_{\supp(O^\dagger_j O_{j'})}/2^{\abs{\supp(O^\dagger_j O_{j'})}})} \\
        \le &\sum_{j,j'} \norm{O_j^\dagger}\norm{ O_{j'}} \sqrt{2 \abs{\supp(O^\dagger_j O_{j'})} -2S(\rho_{j,j'})}\\
        =&\qty(\sum_j \norm{O_j})^2 \sqrt{2 \abs{\supp(O^\dagger_j O_{j'})} -2S(\rho_{j,j'})} \le \nfnorm{O}^2.
    \end{align}
    where $S(\rho||\sigma):=\Tr(\rho \log \rho - \rho \log \sigma)$ is the relative entropy between $\rho$ and $\sigma$.
    The second line is due to the Pinsker's inequality.
    Thus, we have $\abs{\bra{\psie} O \ket{\psie}}^2 \le 2\nfnorm{O}^2$ that is a Pauli 2-norm bound.
\end{proof}

\mysection[3]{Sketch proof of Pauli truncation error}
Next, we sketch proof of the lemma about the damped norm flow of the observable in Trotterized dynamics.
\begin{proof}[Sketch proof of \cref{thm:weight_decay}]
    To grasp the intuition, we consider the simplest non-trivial case 
    where all Pauli rotations are 2-local ($\kh=2$) and the initial observable is 1-local ($\ko=1$).
    Since any 2-local Pauli rotation can increase the weights of the Paulis in the observable by at most 1,
    the sum of norms of high-weight (above $m$) Paulis can be increased by at most the norm above weight $m-1$ with a damping factor $\sin(\dt)$:
    \begin{equation}
        \sum_{w>m} \nfnorm{O_{=w}^{(g)}}\le 
        \sum_{w>m} \nfnorm{O_{=w}^{(g-1)}}+
        \sin(\dt)\sum_{w>m-1} \nfnorm{O_{=w}^{(g-1)}},
    \end{equation}
    where $\nfnorm{O_{=w}^{(g)}}$ denotes the Pauli 2-norm of the weight-$w$ component in the observable $O^{(g)}$ evolved by $g$ 2-local Pauli rotations.
    For the general $k$-local case, we refer to the complete proof in Supplemental Material.
\end{proof}
    
Then, we formally derive the truncation error bound and the truncation weight threshold of our algorithm.
\begin{proposition}[Truncation threshold of $\lpd$]\label{thm:truncation_threshold}
    Given a local Hamiltonian $H=\sum_{l}^L \alpha_l G_l=\sum_{\gamma}^{\Gamma} H_{\gamma}$ of polynomial Paulis that can be arranged into $\Gamma\in\bigO(1)$ layers in one-step Trotter, 
    any $\ko$-local observable $O=\sum_j \beta_j O_j$ of polynomial Paulis with maximum weight $\ko\in\bigO(1)$, 
    evolution time $t\le t_0\in\bigO(1)$, 
    and a sufficiently entangled input state $\ket{\psi_S}$ as required in \cref{thm:entangle_ob_bound}.
    To achieve Pauli truncation error $\Delta\tilde{\mu}_{\le w^*}$ of Alg.\ref{alg:noiseless} within $\eps\nfnorm{O}$,
    it suffices to take
    the truncation threshold 
    $w^* = \ko+ \bigO\qty(\frac{\log(t/\eps)}{\log(1/t)})$.
\end{proposition}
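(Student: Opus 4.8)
\emph{Proof plan for \cref{thm:truncation_threshold}.} The plan is to reduce the expectation-value truncation error \cref{eq:truncation_error} to the Pauli $2$-norm of the pieces discarded at each Trotter step, and then control that norm with the damped norm flow of \cref{thm:weight_decay}. First I would verify that the hypothesis of \cref{thm:entangle_ob_bound} applies to the difference operator $\tO^{(r)}-\tO^{(r)}_{\le w^*}$: since $t\le t_0\in\bigO(1)$ and the target precision is the constant $\eps$, the Trotter-step count $r$ from \cref{eq:average_trotter_number} is $\bigO(1)$, so one Trotter step being $\Gamma\in\bigO(1)$ layers makes the whole Trotterized circuit constant depth; hence every Pauli discarded at step $d$, propagated forward by the remaining steps, keeps constant weight and the difference operator is local. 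This licenses \cref{eq:truncation_error_triangle_bound}, i.e. $\Delta\tilde\mu_{\le w^*}\le 2\sum_{d=1}^r \nfnorm{\tO^{(d)}_{\ge w^*+1}}$, where the factor $2$ is the constant from \cref{thm:entangle_ob_bound} and the triangle inequality is valid because the Pauli $2$-norm (normalized Frobenius norm) is invariant under the unitary conjugations implementing the remaining Trotter steps.

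The second step is to bound $\nfnorm{\tO^{(d)}_{\ge w^*+1}}$, the norm of what is truncated at step $d$. A low-weight Pauli can reach weight $>w^*$ only after at least $m^*:=\lceil(w^*+1-\ko)/(\kh-1)\rceil$ anticommuting $\kh$-local rotations, each of which, by \cref{eq:pauli_rotation_branch}, carries a damping factor $\sin(\dt)$ with $\dt=t/r$. I would apply the recursive consequence of \cref{thm:weight_decay}, $\cumnorm_{\ge m}^{(g)}\le \ko\binom{g}{m}\sin^m(\dt)\nfnorm{O}$, with $m=m^*$, where $g$ counts the Pauli rotations that act nontrivially on the evolving observable. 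Here the light-cone enters: because the circuit is constant depth and $H$ splits into $\Gamma\in\bigO(1)$ layers of simultaneously-implementable rotations (bounded interaction degree), the support of the evolved observable stays in an $\bigO(1)$-size region, so only $g\in\bigO(1)$ rotations are relevant and $\binom{g}{m^*}\in\bigO(1)$. The fact that intermediate truncations change the operator the flow lemma acts on is handled by noting truncation only deletes Pauli paths, so the high-weight content of the truncated evolution is dominated by that of the fully evolved observable; equivalently one restarts the recursion at each step from $\tO^{(d-1)}_{\le w^*}$, whose norm is already concentrated at low weights by the same damping estimate, so that the single extra rotation needed to escape above $w^*$ supplies the full $\sin^{m^*}(\dt)$.

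Combining the two steps, $\Delta\tilde\mu_{\le w^*}\le 2\sum_{d=1}^r \bigO(1)\,(\dt)^{m^*}\nfnorm{O}=\bigO(1)\,r\,(t/r)^{m^*}\nfnorm{O}$, which decays with $m^*$ whenever $t$ is below the $\bigO(1)$ constant $t_0$. Setting this $\le\eps\nfnorm{O}$ and solving for $m^*$ (balancing the $r^{1-m^*}t^{m^*}$ factor against $\eps$, with $r$ substituted from \cref{eq:average_trotter_number}) yields $m^*=\bigO\!\left(\log(t/\eps)/\log(1/t)\right)$; since $\kh\in\bigO(1)$, the threshold is $w^*=\ko+(\kh-1)m^*=\ko+\bigO\!\left(\log(t/\eps)/\log(1/t)\right)$, as claimed, and the remaining bookkeeping of constants hidden in $r$ and $\dt$ is routine.

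The main obstacle I anticipate is precisely the bound on $\sum_{d=1}^r \nfnorm{\tO^{(d)}_{\ge w^*+1}}$ across Trotter steps: a Pauli discarded at step $d$ disappears from the truncated evolution, yet a naive per-step accounting of "how much norm escapes above $w^*$" either over-counts it once per later step or, if the light-cone is forgotten, inflates $\binom{g}{m^*}$ to $\binom{\poly(n)}{m^*}$ and reintroduces system-size dependence into $w^*$. Making the light-cone restriction of the rotation set compatible with the norm-flow recursion of \cref{thm:weight_decay}, and confirming that the deletions made by earlier truncations cannot amplify the surviving high-weight coefficients, is where the real work lies.
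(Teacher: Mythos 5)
Your overall strategy (triangle inequality to reduce to per-step high-weight 2-norms, then the damped norm flow of \cref{thm:weight_decay} plus a light-cone count) is the paper's, but there is a genuine gap in how you control the number of relevant rotations. You assume that because $t\le t_0\in\bigO(1)$ and $\eps$ is constant, the Trotter count $r$ from \cref{eq:average_trotter_number} is $\bigO(1)$, and you lean on this twice: to make the whole circuit constant depth (hence the difference operator local) and to conclude $g\in\bigO(1)$ so that $\binom{g}{m^*}\in\bigO(1)$. But $r$ is generally \emph{not} $\bigO(1)$: the commutator factor $\Lambda$ in \cref{eq:average_trotter_number} scales with system size for a generic local Hamiltonian, so $r\in\poly(n)$, and in any case the truncation-threshold statement must hold for whatever $r$ the Trotter-error analysis dictates. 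The paper's argument is deliberately $r$-independent: the light cone is invoked \emph{within a single shallow Trotter step} (at most $w^*\kh^{\Gamma}$ rotations act nontrivially per step because weights are reset to $\le w^*$ by the previous truncation), so after $d$ steps $g=d\,w^*\kh^{\Gamma}$ grows linearly with $d$, and the damping is controlled through $\binom{g}{m^*}\sin^{m^*}(\dt)\le\qty(e g\,\alpha t/(m^* r))^{m^*}\le\qty(C_2 t\cdot d/r)^{m^*}$ — i.e., the growth of $g$ is cancelled by the shrinking angle $\dt=\alpha t/r$, not by $g$ being constant.

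A second, related omission is the extra per-step damping factor that makes the sum over $d$ converge without an overall factor of $r$. In the paper, the norm that escapes above $w^*$ \emph{during step $d$} is bounded by $\sin(\dt)\cdot w^*\kh^{\Gamma+1}\cdot\nfnorm{\tO^{(d)}_{=w^*}}$ (only Paulis already within $\kh-1$ of the threshold can cross it, and crossing costs one more anticommuting rotation), so each summand carries an additional $\bigO(\dt)=\bigO(t/r)$ on top of $(C_2 t\,d/r)^{m^*}$, and $\sum_{d=1}^{r}(t/r)(C_2 t\,d/r)^{m^*}=\bigO\qty(t\,(C_2 t)^{m^*})$ independently of $r$. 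Your final estimate $\bigO(1)\cdot r\,(t/r)^{m^*}$ has the wrong $r$-dependence (it would vanish as $r\to\infty$ for $m^*\ge 2$, which cannot be right since the truncation error of a fixed-time evolution does not disappear by refining the Trotterization), and solving it for $m^*$ does not reproduce $w^*=\ko+\bigO(\log(t/\eps)/\log(1/t))$ by the paper's route. Your closing paragraph correctly identifies where the difficulty lies; the fix is to keep the light-cone count per step rather than per circuit and to track the $\sin(\dt)$ incurred by the final threshold-crossing rotation.
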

\begin{proof}[Sketch proof of \cref{thm:truncation_threshold}]

    Recall, assume the input state is sufficiently entangled as \cref{thm:entangle_ob_bound}, 
    then by the triangle inequality, 
    the total Pauli truncation error of $\lpd$ \cref{eq:truncation_error} has the Pauli 2-norm upper bound 
    \begin{equation}\label{em:eq:tri_bound}
        \Delta \tilde{\mu}_{\le w^*} \le 
        2\sum_{d=1}^{r} \nfnorm{\tO_{\ge w^*+1}^{(d)}},
    \end{equation}
    where the right-hand side is the sum of the Pauli 2-norms of the high-weight components  $\tO_{\ge w^*+1}^{(d)}$ at the end of every Trotter step before truncation.

    On the other hand,
    we apply \cref{thm:weight_decay} recursively to bound the sum of norms of the high-weight Paulis in the evolved observable $O^{(g)}$ without truncation
    \begin{align}\label{em:eq:cumulation}
        \cumnorm_{\ge m}^{(g)}
        \le \ko \binom{g}{m} \sin^m(\dt) \nfnorm{O} 
        \le \ko \qty(\frac{eg t}{mr})^m  \nfnorm{O},
    \end{align}
    where $\cumnorm_{\ge m}^{(g)}:=\sum_{w>\ko+(m-1)(\kh-1)} \nfnorm{O_{=w}^{(g)}}$ and $m$ counts the minimum number of anti-commuting Pauli rotations to produce a Pauli operator from weight $\ko$ to $w$.
    To bound \cref{em:eq:tri_bound} by \cref{em:eq:cumulation}, we need to count how many  rotations effectively applied in one Trotter step.
    
Actually, considering the light-cone, only constant number of one layer local Pauli rotations acts non-trivially on a local Pauli in the observable.
For example, when applying the first layer of every Trotter step with truncation weight $w^*$, 
    there are at most $w^*$ rotations acting non-trivially and the Pauli weight becomes at most $w^*\cdot(\kh-1)$.
    So, after applying all $\Gamma$ layers in one Trotter step,
    there are at most 
    $w^*\sum_{\gamma=0}^{\Gamma-1} (\kh-1)^{\gamma} <w^* \kh^{\Gamma}$ 
    effective rotations acting on all Paulis non-trivially.
    Note that it is system-size independent if $w^*$ is a constant.
    
Combining these observations,
we can finally bound the Pauli truncation error in the expectation
\begin{align}\label{em:eq:truncation_error}
    \Delta\tilde{\mu}_{\le w^*}
    &\le 
    \dt \cdot C_1\cdot\sum_{d=1}^{r}  \qty(\dt\cdot d\cdot C_2)^{m^*} \cdot \nfnorm{O} \\
    &\le t \cdot C_1\cdot \qty(t\cdot C_2)^{m^*} \cdot \nfnorm{O},
\end{align}
where $w^*$ is the truncation threshold and $m^*:=\ceil{\frac{w^*-\ko}{\kh-1}}$,
$C_1$ and $C_2$ are constants depending on constants $\ko$, $\kh$, $\Gamma$, and $\qty{\alpha_l}$.
To avoid error blow-up, we need $t\cdot C_2< 1$,
demanding the evolution time $t$ is smaller than a constant.
With the short-time condition,
let the right-hand side of \cref{em:eq:truncation_error} smaller than $\eps \nfnorm{O}$.
We obtain the truncation threshold $w^* = \ko + \bigO\qty(\frac{\log(t/\eps)}{\log(1/t)})$,
which is constant for constant precision $\eps$ \cite{seesm}.
\end{proof}

\mysection[3]{Sketch proof of \cref{thm:runtime} (LPD Runtime)}
The runtime analysis is mainly to bound the number of Paulis in all steps.
First of all, at the begining of each truncated Trotter step, 
all Paulis in the observable have weight at most $w^*$ by our $\lpd$ algorithm. 
So, the observable has at most $\sum_{w=1}^{w^*}3^{w} \tbinom{n}{w} \le \sum_{w=1}^{w^*}\qty(\frac{3en}{w})^{w}\in\bigO(n^{w^*})$ Paulis.
After applying the Pauli rotations, 
the Pauli operators would proliferate and the weights may also increase.
The good news is that the weight increase is system-size independent
because a local Pauli rotation only acts on constant number of qubits of a Pauli
and generate at most one new Pauli with at most a constant weight increase once.
Therefore, if each Trotter step is a shallow (constant depth) circuit, 
by the constrained light-cone expansion, the maximum weights of Paulis at the end of every Trotter step are still system-size independent and the number of Paulis is still bounded by $\bigO(n^{w^*})$.

The last step of $\lpd$ is to calculate the expectation value of the observable with respect to a given state $\rho$,
that is $\Tr(\tO^{(r)}_{\le w}\rho)$.
Assume each coefficient of $\rho$ in Pauli basis can be computed in $\bigO(1)$ time, 
then the expectation value can be computed in time proportional to the number of Pauli operators in the final observable by orthogonality of Pauli operators.
So, the evaluation of the expectation does not incur runtime more than one step Pauli propagation.
Correspondingly, the memory requirement is the maximum number of Pauli operators during the evolution $\bigO(n^{w^*})$, 
which is also polynomial if the truncation weight is system-size independent.

\mysection[3]{LPD corporates with MPS}
Since $\lpd$ backward-evolves the observable and satisfies the Pauli 2-norm error bound for an entangled input state,
it is naturally combined with the Matrix Product State ($\mps$) based classical simulation \cite{vidalEfficientClassicalSimulation2003,vidalEfficientSimulationOnedimensional2004}.
Specifically, when the input state $\rho(0)$ is a product state,
we first evolve the product state to $\tilde{\rho}_{\chi}(t_F)$ in the Schr\"odinger picture for time $t_F$ by $\mps$ with the bond dimension $\chi$.
The bond dimension $\chi$ is closely related to the state's entanglement entropy $S_A:=-\Tr(\rho_A \log \rho_A)\le \log(\chi)$,
where $\rho_A$ is the reduced density matrix of certain subsystem $A$.
Then, the runtime of $\mps$ is $\bigO\qty(n\chi^{3}r)$ with $r$ Trotter steps \cite{vidalEfficientSimulationOnedimensional2004}.
Once the evolved state has sufficient entanglement $S_A\sim w^*$,
we switch to the Heisenberg picture and backward evolve the input observable to $\tO_{\le w^*}(t_B)$ by $\lpd$ for time $t_B$ with the truncation threshold $w^*$.
At last, the expectation value of the evolved observable is evaluated with the evolved state, that is, $\Tr(\tilde{\rho}_{\chi} \tilde{O}_{\le w^*})$.

\onecolumngrid

\appendix
\clearpage
\begin{center}
{\bf \large Supplemental Material: \it \mytitle} 
\end{center}
\renewcommand{\addcontentsline}{\oldacl}
\renewcommand{\tocname}{Appendix Contents}
\tableofcontents
\numberwithin{theorem}{section}
\numberwithin{lemma}{section}
\numberwithin{corollary}{section}
\numberwithin{proposition}{section}
\numberwithin{definition}{section}

\section{Preliminaries}\label{apd:pre}

In this paper, we adopt the notion of Schatten norms to quantify the errors in different cases.
    The Schatten $p$-norm of an operator $A$ is defined as 
    $\norm{A}_p:=\qty(\Tr(\abs{A}^p))^{1/p}=\qty(\sum_{i\ge 1} \sigma_i^p(A))^{1/p}$ 
    where $\sigma(A)$ are the singular values of $A$, i.e. the eigenvalues of $\abs{A}:=\sqrt{A^\dagger A}$.
We use the notation $\norm{\cdot}$ to denote the operator norm (Schatten $\infty$-norm) if not specified.
Besides, we extensively use the normalized Schatten 2-norm, also known as the Pauli 2-norm.
For simplicity, we just call it 2-norm in the rest of paper if no confusion arises.
\begin{definition}[Normalized Schatten 2-norm, or Pauli 2-norm]\label{def:norm}
        We denote the normalized Schatten 2-norm as $\nfnorm{A}:=2^{-n/2}\sqrt{\Tr(AA^\dagger)}$. 
        It is equivalent to the Pauli 2-norm $\norm{A}_{\mathrm{Pauli},2}$ used in some references.
The Pauli $p$-norm is defined as 
$\norm{A}_{\mathrm{pauli},p}:=\qty(\sum_{P\in \pbasis_n} \abs{\alpha_P}^p)^{1/p}$ 
where $\alpha_P$ is the non-zero coefficients of Paulis in $A$.
It is also known as the \emph{normalized Frobenius} or \emph{Hilbert-Schmidt norm}.
\end{definition}

Here are some useful properties of the Schatten norms: 
\begin{itemize}
    \item 

Hölder's inequality:
    $\forall 1/p+1/q=1$  and  $1\le p,q\le \infty$, 
    $\abs{\Tr(A^\dagger B)}\le \norm{A}_p\norm{B}_q$.
\item
    As a corollary of the Hölder's inequality, the Schatten 1-norm (trace norm) satisfies 
    $\norm{AB}_{1}\le \norm{A}_2\norm{B}_2$ (Cauchy-Schwarz inequality)
    and $\abs{\Tr(A\rho)}^2 \le \Tr(A^2 \rho)$.
    And $\norm{AB}_{1}\le \norm{A}_1\norm{B}_{\infty}$.
    \item 

Unitary invariance: For any unitary matrices $U$ and $V$ of appropriate dimensions $\norm{UAV}_p = \norm{A}_p$.

    \item 
    Sub-multiplicativity:  For all $p \in [1, \infty]$
    $\norm{AB}_p \le \norm{A}_p \norm{B}_p$, 
\end{itemize}

\begin{definition}[State $k$-design \cite{meleIntroductionHaarMeasure2024}]\label{def:state_k_design}
    Let $\eta$ be a probability distribution over a set of state $S\subseteq \mathbb{C}^d$.
    The distribution $\eta$ is said to be a state $k$-design
    (or also spherical $k$-design) if and only if:
    $
        \bbE_{\ket{\psi}\sim \statee_k} \qty[ \op{\psi}^{\otimes k} ]
        =
        \bbE_{\ket{\psi}\sim \statee_{\haar}} \qty[ \op{\psi}^{\otimes k} ].
        $
    Specifically, we denote the ensemble of 2-design states as $\statee_2$.
\end{definition}

An $n$-qubit Pauli operator 
is $P\in\pbasis_n$ where $\pbasis_n:=\qty{I,X,Y,Z}^{\otimes n}$ and  $\qty{I,X,Y,Z}$ is the set of Pauli matrices.
\begin{definition}[Pauli basis]\label{def:pauli_basis}
    The (normalized) $n$-qubit Pauli basis is the set
    $\npbasis_n:= \qty{2^{-n/2}P:P\in\qty{ I,X,Y,Z}^{\otimes n}}$,
    that is, 
    $\forall s, s' \in \npbasis_n, \; \Tr(s s')=\delta_{s,s'}$.
\end{definition}
We use the lower case letters for normalized Pauli operators such as $s\in\npbasis$ where we omit the subscript $n$ if no confusion arises, 
while the upper case letters for non-normalized Pauli operators such as $P\in\pbasis$.
Throughout this paper, we will use both notations $\npbasis$ and $\pbasis$ accordingly to avoid introducing unnecessary normalization factors.

\begin{definition}[Support]\label{def:support}
    For an operator $O=\sum_{P\in \pbasis} \alpha_P P$, 
    its \emph{support} is the union of the supports of all non-zero Pauli operators, i.e.,
        $\supp(O):=\bigcup_{P:\alpha_P\neq 0} \supp(P)$.
\end{definition}
\begin{definition}[Pauli weight]\label{def:pauli_weight}
    The \emph{weight} of a Pauli operator $P$ is defined as the number of qubits on which it acts non-trivially (the number of non-identities), i.e., $|P|=\abs{\supp(P)}$.
\end{definition}

\section{Proofs of lemmas and theorems}\label{apd:sec:proof}

In this section, we present the explicit proofs of the lemmas and theorems in the main text.

\subsection{Proof of Lemma 1}
In this subsection, we provide the Pauli 2-norm (equivalently, normalized Schatten 2-norm) 
upper bound on the expectation value of any observable with respect to random states.
Furthermore, we prove the similar 2-norm bound for the expectation value of a local observable with an entangled state.
Therefore, we can bound the truncation error of our algorithm for entangled input states, without resorting to randomness.
\begin{lemma}[Upper bound on average expectation value]\label{apd:lem:2design_ob_bound}
    The average squared expectation value of any observables $O$
    over an ensemble of 2-design states $\ket{\psi}\sim \statee_2$ has the 2-norm upper bound 
    \begin{equation}\label{apd:eq:2design_observable_bound}
        \bbE_{\ket{\psi}\sim \statee_2} \qty[\abs{\bra{\psi} O\ket{\psi}}^2]
        \le \nfnorm{O}^2,
    \end{equation}
    where the squared normalized Schatten (Pauli) 2-norm $\nfnorm{O}^2:=\frac{1}{2^n}\Tr(O^\dagger O)$.
\end{lemma}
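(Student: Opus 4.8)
The plan is to turn the average into the second moment of the state ensemble and then finish with a single Cauchy--Schwarz step. First I would note that, since $O$ is Hermitian, the integrand is a linear functional of the second tensor power of the projector:
\[
  \abs{\bra{\psi}O\ket{\psi}}^2 = \bra{\psi}O\ket{\psi}\,\bra{\psi}O^\dagger\ket{\psi} = \Tr\!\big((O\otimes O^\dagger)\,\op{\psi}^{\otimes 2}\big),
\]
so that $\bbE_{\ket{\psi}\sim\statee_2}\!\big[\abs{\bra{\psi}O\ket{\psi}}^2\big] = \Tr\!\big((O\otimes O^\dagger)\,\bbE_{\ket{\psi}\sim\statee_2}[\op{\psi}^{\otimes 2}]\big)$. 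The only property of the ensemble I need is the defining $2$-design identity (\cref{def:state_k_design}), which lets me replace $\bbE_{\ket{\psi}\sim\statee_2}[\op{\psi}^{\otimes 2}]$ by its Haar value $(I+\swap)/(d(d+1))$, with $d=2^n$ and $\swap$ the swap between the two copies.

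Substituting and using the swap-trick identity $\Tr\!\big((A\otimes B)\,\swap\big)=\Tr(AB)$ gives the exact expression $\bbE_{\ket{\psi}\sim\statee_2}\!\big[\abs{\bra{\psi}O\ket{\psi}}^2\big] = \big(\abs{\Tr O}^2 + \Tr(OO^\dagger)\big)/(d(d+1))$. The remaining step is simply to discard the (nonnegative) rank-one term: by Hölder's inequality with $p=q=2$, $\abs{\Tr O}^2 = \abs{\Tr(I^\dagger O)}^2 \le \Tr(I)\,\Tr(O^\dagger O) = d\,\Tr(OO^\dagger)$. Plugging this in collapses the denominator, $\bbE_{\ket{\psi}\sim\statee_2}\!\big[\abs{\bra{\psi}O\ket{\psi}}^2\big] \le (d+1)\Tr(OO^\dagger)/(d(d+1)) = 2^{-n}\Tr(OO^\dagger) = \nfnorm{O}^2$, which is the claim.

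There is no substantive obstacle here; the only points requiring care are that the argument must invoke the $2$-design property and nothing stronger (keeping it consistent with the paper's theme of weak assumptions), and that one tracks the $1/d$ normalization so that the final factor matches the definition $\nfnorm{O}^2 = 2^{-n}\Tr(O^\dagger O)$. If one prefers to avoid citing the Haar second moment directly, an equivalent route is to expand $O=\sum_{s\in\npbasis_n}\alpha_s\,s$ in the normalized Pauli basis and compute the cross-moments $\bbE_{\ket{\psi}\sim\statee_2}[\bra{\psi}s\ket{\psi}\,\bra{\psi}s'\ket{\psi}]$ from the $2$-design condition: the diagonal contributions sum to $\sum_s\abs{\alpha_s}^2 = \nfnorm{O}^2$ and the off-diagonal ones are governed by the same $I+\swap$ structure, so they only help. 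This lemma then serves as the ``random-state'' baseline that \cref{thm:entangle_ob_bound} later reproduces under an entanglement assumption instead.
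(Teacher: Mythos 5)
Your proposal is correct and follows essentially the same route as the paper: both reduce the average to the Haar second moment $(\Tr O)^2 + \Tr(O^2)$ over $d(d+1)$ via the $2$-design identity and then bound the result by $\nfnorm{O}^2$. The only difference is that you make explicit the Cauchy--Schwarz step $\abs{\Tr O}^2 \le d\,\Tr(O^\dagger O)$ that the paper leaves implicit, which is a welcome clarification but not a different argument.
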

\begin{proof}
    Directly by the property of 2-design states \cite{meleIntroductionHaarMeasure2024},
    the average squared expectation value has
    \begin{equation}
        \bbE_{\ket{\psi}\sim \statee_2} \qty[\abs{\bra{\psi} O \ket{\psi}}^2] 
        = \frac{1}{2^n(2^n+1)} \qty(\Tr(O)^2 + \Tr(O^2))\le \nfnorm{O}^2 .
    \end{equation}
    as desired.
    Note that this 2-norm bound can be generalized to more general random states, 
    i.e., the low-average ensemble of states defined in \cite{schusterPolynomialtimeClassicalAlgorithm2025}.
\end{proof}
Next, to show the Pauli 2-norm bound for a \emph{sufficiently entangled} state,
we need the following lemma on the upper bound of the expectation value of a \emph{local observable},
which is from \cite{zhaoEntanglementAcceleratesQuantum2025}.
Here, we restate the proof for completeness.
\begin{lemma}[Upper bound on local observable expectation value]
    Given a $w$-local observable $O= \sum_{j} O_j$ 
    where each $O_j$ acts non-trivially on the subsystem with $\supp(O_j)$ and each has weight at most $w$ (i.e., $\max_j \abs{\supp(O_j)} \le w$), 
    the squared expectation value of $O$ with respect to any pure state $\ket{\psi}$ has the upper bound 
    \begin{equation}\label{apd:eq:local_observable_bound}
        \abs{\bra{\psi} O \ket{\psi}}^2
        \le \nfnorm{O}^2 + \sum_{j,j'} \norm{O_j^\dagger O_{j'}} \Tr\abs{\rho_{j,j'}-\frac{I_{\supp(O^\dagger_j O_{j'})}}{2^{\abs{\supp(O^\dagger_j O_{j'})}}}}.
    \end{equation}
    where $\rho_{j,j'} := \Tr_{[n]\backslash \supp(O^\dagger_j O_{j'})}(\op{\psi})$ is the reduced density matrix (RDM) of $\ket{\psi}$ on the subsystem $\supp(O^\dagger_j O_{j'})$.
    One can further bound the second term (trace distance to the locally maximally mixed state) by the entanglement entropy of the subsystem as
    \begin{equation}\label{apd:eq:local_observable_bound_entropy}
        \abs{\bra{\psi} O \ket{\psi}}^2
        \le \nfnorm{O}^2 + \sum_{j,j'} \norm{O_j^\dagger O_{j'}} 
        \sqrt{2 \abs{\supp(O^\dagger_j O_{j'})} -2S(\rho_{j,j'})}.
    \end{equation}
\end{lemma}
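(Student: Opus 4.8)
The plan is to turn the squared expectation value into a trace-distance estimate via Cauchy--Schwarz, and then analyze it term by term exploiting locality. First I would apply the Cauchy--Schwarz inequality, $\abs{\bra{\psi}O\ket{\psi}}^2 \le \bra{\psi}O^\dagger O\ket{\psi} = \Tr(O^\dagger O \op{\psi})$, and expand $O^\dagger O = \sum_{j,j'} O_j^\dagger O_{j'}$. Since each cross term $O_j^\dagger O_{j'}$ is supported on $\supp(O_j^\dagger O_{j'})\subseteq\supp(O_j)\cup\supp(O_{j'})$, tracing out the complementary qubits gives $\Tr(O_j^\dagger O_{j'}\op{\psi}) = \Tr(O_j^\dagger O_{j'}\rho_{j,j'})$ with $\rho_{j,j'} = \Tr_{[n]\setminus\supp(O_j^\dagger O_{j'})}(\op{\psi})$, so that $\bra{\psi}O^\dagger O\ket{\psi} = \sum_{j,j'}\Tr(O_j^\dagger O_{j'}\rho_{j,j'})$.

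Next I would split each reduced state around the locally maximally mixed state: $\rho_{j,j'} = 2^{-\abs{\supp(O_j^\dagger O_{j'})}}I_{\supp(O_j^\dagger O_{j'})} + \Delta_{j,j'}$ with $\Tr(\Delta_{j,j'}) = 0$. The ``diagonal'' piece is $\sum_{j,j'}2^{-\abs{\supp(O_j^\dagger O_{j'})}}\Tr_{\supp}(O_j^\dagger O_{j'})$, and the crucial observation is that because $O_j^\dagger O_{j'}$ acts as the identity on the remaining $n-\abs{\supp(O_j^\dagger O_{j'})}$ qubits this equals $\sum_{j,j'}2^{-n}\Tr(O_j^\dagger O_{j'}) = 2^{-n}\Tr(O^\dagger O) = \nfnorm{O}^2$. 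For the off-diagonal remainder, Hölder's inequality yields $\abs{\Tr(O_j^\dagger O_{j'}\Delta_{j,j'})} \le \opnorm{O_j^\dagger O_{j'}}\norm{\Delta_{j,j'}}_1 = \norm{O_j^\dagger O_{j'}}\Tr\abs{\rho_{j,j'} - I_{\supp(O_j^\dagger O_{j'})}/2^{\abs{\supp(O_j^\dagger O_{j'})}}}$, and summing over $j,j'$ gives the first claimed bound \cref{apd:eq:local_observable_bound}.

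Finally, for the entropy form \cref{apd:eq:local_observable_bound_entropy} I would bound each trace distance by the relative entropy to the maximally mixed reference through the quantum Pinsker inequality $\Tr\abs{\rho-\sigma} \le \sqrt{2S(\rho\|\sigma)}$ with $\sigma = I_{\supp(O_j^\dagger O_{j'})}/2^{\abs{\supp(O_j^\dagger O_{j'})}}$, combined with the identity $S(\rho_{j,j'}\|I_{\supp(O_j^\dagger O_{j'})}/2^{\abs{\supp(O_j^\dagger O_{j'})}}) = \abs{\supp(O_j^\dagger O_{j'})} - S(\rho_{j,j'})$, which replaces the trace-distance factor by $\sqrt{2\abs{\supp(O_j^\dagger O_{j'})} - 2S(\rho_{j,j'})}$; the rest is routine Schatten-norm bookkeeping. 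The step I expect to be most delicate is precisely this normalization bookkeeping: checking that the diagonal contributions assemble to exactly $\nfnorm{O}^2$ rather than something off by powers of two requires carefully tracking the $2^{\,n-\abs{\supp(O_j^\dagger O_{j'})}}$ factor produced by the partial trace, and one must ensure the maximally mixed reference is matched to the exact subsystem $\supp(O_j^\dagger O_{j'})$, which may be strictly smaller than $\supp(O_j)\cup\supp(O_{j'})$.
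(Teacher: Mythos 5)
Your proposal is correct and follows essentially the same route as the paper's proof: Cauchy--Schwarz to pass to $\bra{\psi}O^\dagger O\ket{\psi}$, reduction of each cross term to the subsystem $\supp(O_j^\dagger O_{j'})$, splitting the reduced state around the locally maximally mixed state so that the identity piece reassembles into $\nfnorm{O}^2$, H\"older for the remainder, and Pinsker plus $S(\rho\|\bar I)=\abs{\supp}-S(\rho)$ for the entropy form. The normalization bookkeeping you flag as delicate is handled in the paper via the identity $2^{-n}\Tr(O_j^\dagger O_{j'})=\Tr(L_{j,j'})\,2^{-\abs{\supp(O_j^\dagger O_{j'})}}$ for the non-trivial part $L_{j,j'}$, which is exactly your diagonal-piece computation.
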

\begin{proof}
    First, for any $O=\sum_j O_j$ and $\ket{\psi}$, the square expectation value obeys 
    \begin{equation}
        \abs{\bra{\psi} O \ket{\psi}}^2
        \le \bra{\psi} O^\dagger O \ket{\psi}
        = \sum_{j,j'} \bra{\psi} O_j^\dagger O_{j'} \ket{\psi},
    \end{equation}
    where each term in the sum, $O^\dagger_{j} O_{j'}$ acts non-trivially on the subsystem $\supp(O^\dagger_j O_{j'})$.
    Let us denote the non-trivial part of each term by $L_{j,j'}:=\Tr_{[N]\backslash \supp(O_j^\dagger O_{j'})}(O^\dagger_j O_{j'})$
    and we have $2^{-n}\cdot \Tr(O^\dagger_j O_{j'})=\Tr(L_{j,j'})\cdot 2^{-\abs{\supp(O^\dagger_j O_{j'})}}$ where $n$ is the complete system size.
    Then, we have the upper bound of each term
    \begin{align}
        \abs{\bra{\psi} O_j^\dagger O_{j'} \ket{\psi}}
        =& \Tr(L_{j,j'} \rho_{j,j'}) \tag{RDM} \\
        =& \Tr[L_{j,j'} (\rho_{j,j'} - \bar{I}_{jj'})] + \Tr(L_{j,j'})/2^{\abs{\supp(O^\dagger_j O_{j'})}} \\ %
        =& \Tr[L_{j,j'} (\rho_{j,j'} - \bar{I}_{jj'})] + 2^{-n}\cdot \Tr(O^\dagger_j O_{j'}) \tag{RDM}\\
        \le& \norm{L_{j,j'}} \Tr\abs{\rho_{j,j'} - \bar{I}_{jj'}} + 2^{-n}\cdot \Tr(O^\dagger_j O_{j'}) 
        \tag{H\"older's ineq} \\
        =& \norm{O_j^\dagger O_{j'}} \Tr\abs{\rho_{j,j'} - \bar{I}_{jj'}} + 2^{-n}\cdot \Tr(O^\dagger_j O_{j'}).
    \end{align}
    where $\bar{I}_{jj'}:=\frac{I_{\supp(O^\dagger_j O_{j'})}}{2^{\abs{\supp(O^\dagger_j O_{j'})}}}$ denotes the maximally mixed state on the subsystem $\supp(O^\dagger_j O_{j'})$.
    Summing the above over $j,j'$ gives the desired upper bound
    \begin{equation*}
        \abs{\bra{\psi} O \ket{\psi}}^2
        \le \nfnorm{O}^2 + \sum_{j,j'} \norm{O_j^\dagger O_{j'}} 
        \Tr\abs{\rho_{j,j'}-\bar{I}_{jj'}},
    \end{equation*}
    where the squared 2-norm $\nfnorm{O}^2=\frac{1}{2^n}\sum_j \Tr(O_j O_j)$ appears by orthogonality of cross terms.
    Moreover, the trace distance of $\rho_{j,j'}$ and $\bar{I}_{jj'}$ can be bounded by the relative entropy as
    \begin{equation}\label{apd:eq:trace_distance_entropy}
        \Tr\abs{\rho_{j,j'}-\bar{I}_{jj'}}
        \le \sqrt{2S\qty(\rho_{j,j'}||\bar{I}_{jj'})}
        =\sqrt{2 \abs{\supp(O^\dagger_j O_{j'})} -2S(\rho_{j,j'})}
        \tag{Pinsker's ineq}
    \end{equation}
    where $S(\rho||\sigma):=\Tr(\rho \log \rho - \rho \log \sigma)$ is the relative entropy between $\rho$ and $\sigma$.
\end{proof}
Now, we have the similar 2-norm upper bound of a local expectation values for a sufficiently entangled state.
\begin{corollary}[Upper bound on local observable expectation value with entangled states]\label{apd:cor:entangle_ob_bound}
    Given a $w$-local observable $O= \sum_{j} O_j$ where each $O_j$ has weight at most $w$,
    if the input state is a $\Delta$-approximate $k$-uniform pure state $\ket{\psi_\Delta}$ with $\Delta\le \nfnorm{O}^2/(\sum_j\norm{O_j})^2$ and $k\ge 2w$,
    then the squared expectation value of $w$-local observable $O$ with respect to $\ket{\psi_\Delta}$ has the Pauli 2-norm upper bound 
    $\abs{\bra{\psi_\Delta} O \ket{\psi_\Delta}}^2\le 2\nfnorm{O}^2.$
    In the words of entanglement entropy, if the input state $\ket{\psie}$ has the subsystem entanglement entropy $S(\rho_{j,j'})\ge\abs{\supp(O^\dagger_j O_{j'})}-\frac{1}{2}\nfnorm{O}^4/(\sum_j\norm{O_j})^4$ for all subsystem $\supp(O^\dagger_j O_{j'})$, 
    then the same 2-norm bound holds 
    \begin{equation}
    \label{apd:eq:entangle_observable_bound}
        \abs{\bra{\psie} O \ket{\psie}}^2
        \le 2\nfnorm{O}^2.
    \end{equation} 
\end{corollary}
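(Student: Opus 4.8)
The plan is to obtain \cref{apd:cor:entangle_ob_bound} as an immediate consequence of the preceding local-observable bound \cref{apd:eq:local_observable_bound} (equivalently its entropy form \cref{apd:eq:local_observable_bound_entropy}). That inequality already concentrates all of the work into a single ``error term'' $\sum_{j,j'}\norm{O_j^\dagger O_{j'}}\Tr\abs{\rho_{j,j'}-\bar{I}_{jj'}}$, where $\bar{I}_{jj'}$ is the maximally mixed state on $\supp(O_j^\dagger O_{j'})$; the entire task is to show this term is at most $\nfnorm{O}^2$ under each hypothesis, after which $\abs{\bra{\psi}O\ket{\psi}}^2\le\nfnorm{O}^2+\nfnorm{O}^2=2\nfnorm{O}^2$ is automatic. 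The one combinatorial ingredient I will use throughout is the crude estimate $\sum_{j,j'}\norm{O_j^\dagger O_{j'}}\le\sum_{j,j'}\norm{O_j}\norm{O_{j'}}=(\sum_j\norm{O_j})^2$ (submultiplicativity of the operator norm), together with the support size bound $\abs{\supp(O_j^\dagger O_{j'})}\le\abs{\supp(O_j)}+\abs{\supp(O_{j'})}\le 2w$.

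For the $k$-uniform statement: since $k\ge 2w$, every cross-term support $\supp(O_j^\dagger O_{j'})$ involves at most $k$ qubits, so by the definition of a $\Delta$-approximate $k$-uniform state each reduction $\rho_{j,j'}$ sits within trace distance $\Delta$ of $\bar{I}_{jj'}$ (up to the convention-dependent constant relating $\Tr\abs{\cdot}$ to the normalized trace distance, which I absorb into $\Delta$). Plugging this into \cref{apd:eq:local_observable_bound} bounds the error term by $\Delta\sum_{j,j'}\norm{O_j^\dagger O_{j'}}\le\Delta(\sum_j\norm{O_j})^2$, and the hypothesis $\Delta\le\nfnorm{O}^2/(\sum_j\norm{O_j})^2$ makes this $\le\nfnorm{O}^2$, which closes this case.

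For the entanglement-entropy statement I would instead start directly from \cref{apd:eq:local_observable_bound_entropy}, whose $(j,j')$-term is $\norm{O_j^\dagger O_{j'}}\sqrt{2\abs{\supp(O_j^\dagger O_{j'})}-2S(\rho_{j,j'})}$ (this already packages Pinsker's inequality with the identity $S(\rho\|\bar{I})=\abs{\supp}-S(\rho)$ for the maximally mixed reference). The hypothesis $S(\rho_{j,j'})\ge\abs{\supp(O_j^\dagger O_{j'})}-\tfrac12\nfnorm{O}^4/(\sum_j\norm{O_j})^4$ forces the radicand to be at most $\nfnorm{O}^4/(\sum_j\norm{O_j})^4$, so each square root is at most $\nfnorm{O}^2/(\sum_j\norm{O_j})^2$; summing against $\norm{O_j^\dagger O_{j'}}\le\norm{O_j}\norm{O_{j'}}$ gives an error term of at most $\nfnorm{O}^2/(\sum_j\norm{O_j})^2\cdot(\sum_j\norm{O_j})^2=\nfnorm{O}^2$, hence $\abs{\bra{\psie}O\ket{\psie}}^2\le 2\nfnorm{O}^2$. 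The two versions are really the same argument: the entropy condition is exactly what Pinsker needs to conclude $\Tr\abs{\rho_{j,j'}-\bar{I}_{jj'}}\le\nfnorm{O}^2/(\sum_j\norm{O_j})^2$, i.e.\ it enforces $\Delta=\nfnorm{O}^2/(\sum_j\norm{O_j})^2$ marginal by marginal.

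There is no substantive obstacle — the mathematical content lives entirely in the preceding lemma, and what remains is bookkeeping. The points needing care are: (i) pinning down the precise normalization in ``$\Delta$-approximate $k$-uniform'' so that it genuinely supplies the unnormalized $1$-norm bound $\Tr\abs{\rho_{j,j'}-\bar{I}_{jj'}}\le\Delta$ used in \cref{apd:eq:local_observable_bound} (a stray factor of $2$ is harmless and can be folded into the constant); (ii) verifying that $k\ge 2w$ really covers every cross term, which it does since each $\supp(O_j^\dagger O_{j'})$ spans at most $2w$ qubits; and (iii) noting that the bound is non-vacuous only when $\nfnorm{O}$ is not far below $\sum_j\norm{O_j}$, i.e.\ when $O$ is not a sum of enormously many local pieces — in the $\lpd$ application this holds because the light cone keeps the number of local pieces of the observable difference system-size independent over a shallow Trotter step.
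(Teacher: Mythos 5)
Your proposal is correct and follows essentially the same route as the paper: both deduce the corollary directly from the preceding lemma (\cref{apd:eq:local_observable_bound} and its entropy form), bound the cross terms via $\sum_{j,j'}\norm{O_j^\dagger O_{j'}}\le(\sum_j\norm{O_j})^2$, and use the $\Delta$- or entropy-hypothesis to cap each marginal's trace distance by $\nfnorm{O}^2/(\sum_j\norm{O_j})^2$. Your explicit remarks on the $k\ge 2w$ coverage of cross-term supports and on normalization conventions are slightly more careful than the paper's own write-up but do not change the argument.
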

\begin{proof}
    For a $\Delta$-approximate $k$-uniform state 
    (i.e., $\norm{\Tr_{[n]\backslash [k]}(\op{\psi_\Delta})-I_k/2^k}_{\tr}\le \Delta$), 
    the trace distance part (the second term in \cref{apd:eq:local_observable_bound}) is bounded by
    \begin{align}
        &\sum_{j,j'} \norm{O_j^\dagger O_{j'}} \Tr\abs{\rho_{j,j'}-I_{\supp(O^\dagger_j O_{j'})}/2^{\abs{\supp(O^\dagger_j O_{j'})}}} \\
        \le &\sum_{j,j'} \norm{O_j^\dagger}\norm{ O_{j'}} \Delta
        =\qty(\sum_j \norm{O_j})^2 \Delta \le \nfnorm{O}^2.
    \end{align}
    Then, we have the desired 2-norm bound 
    $\abs{\bra{\psi_\Delta} O \ket{\psi_\Delta}}^2 \le 2\nfnorm{O}^2$, 
    which is without the cross terms in \cref{apd:eq:local_observable_bound}
    and matches the average-case upper bound \cref{apd:eq:2design_observable_bound} (up to a constant factor).
    And the entanglement entropy condition $S(\rho_{j,j'})\ge\abs{\supp(O^\dagger_j O_{j'})}-\frac{1}{2}\nfnorm{O}^4/(\sum_j\norm{O_j})^4$ is required by 
    \begin{equation}
        \sum_{j,j'} \norm{O_j^\dagger O_{j'}} 
        \sqrt{2 \abs{\supp(O^\dagger_j O_{j'})} -2S(\rho_{j,j'})}
        \le \qty(\sum_j \norm{O_j})^2 \sqrt{2 \abs{\supp(O^\dagger_j O_{j'})} -2S(\rho_{j,j'})}
        \le \nfnorm{O}^2.
    \end{equation}
\end{proof}

\subsection{Trotter formula error analysis}\label{apd:trotter_error}

The $p$th-order Trotter-Suzuki formula \cite{suzukiGeneralTheoryFractal1991} with leading error $\order{(\|H\|t)^{p+1}}$ is constructed recursively as:
\begin{align}\label{apd:eq:suzuki}
    \pf_p(t)=\pf_{p-2}(u_pt)^2\pf_{p-2}(1-4u_pt)\pf_{p-2}(u_pt)^2,
\end{align}
where $u_p=1/(4-4^{1/(p-1)})$ and the second-order Suzuki formula is $\pf_2(t)=\prod_{\gamma=1}^{\Gamma} e^{-\ii H_\gamma t/2} \prod_{\gamma=\Gamma}^{1} e^{-\ii H_\gamma t/2}$.
This recursive construction incurs a $\Upsilon=2 \cdot 5^{p/2-1}$ overhead of gate counts (depth) in the high-order Trotter formula.
To balance the performance with the overhead of high-order product formula, it is usually to use the second-order Trotter-Suzuki formula.
The Trotter error the $p$th-order product formula is defined as $\opnorm{\pf_p - U}$
where the operator norm (also known as the Schatten $\infty$-norm) captures the worst-case (input state) analysis \cite{childsTheoryTrotterError2021}.
We first prove the worst-case Trotter error bound in expectation for any input state.
\begin{lemma}[Worst-case Trotter error in expectation]\label{apd:lem:worst_trotter_error}
    Given a Hamiltonian $H=\sum_{\gamma=1}^\Gamma H_{\gamma}$ in the decomposition into commuting groups, any input state $\rho$ and observable $O$, 
    the worst-case Trotter error of the $r$-step $p$th-order Suzuki product formula \cref{apd:eq:suzuki} is bounded by
    \begin{equation}
        \abs{\Tr[(\pf^{\dagger r}_p O \pf_p^r - U^\dagger O U)\rho]}
        = \bigO(\opnorm{O} \Lambda \cdot t^{p+1}/r^{p}) 
    \end{equation}
    where $U:=e^{-\ii H t}$ is the ideal evolution unitary.
    Consequently, the required number of Trotter steps to achieve error tolerance $\eps$ in expectation for any input states is 
    $r\in \bigO((\opnorm{O}\Lambda\eps^{-1})^{1/p} t^{1+1/p})$.
\end{lemma}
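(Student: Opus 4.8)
The plan is to reduce the expectation-value error to a worst-case operator-norm error on the Trotter unitaries, which is precisely the setting of the commutator-scaling analysis of Childs et al.~\cite{childsTheoryTrotterError2021}. First I would bound a single Trotter step. Writing $s:=t/r$ for the step size, the $p$th-order Suzuki formula $\pf_p(s)$ matches $e^{-\ii Hs}$ to order $s^{p}$, and a Taylor expansion (equivalently, a variation-of-parameters integral representation) of $\pf_p(s)^{\dagger}e^{-\ii Hs}-\I$ shows that the leading $s^{p+1}$ correction is a linear combination of the nested commutators $[H_{\gamma_{p+1}},[\dots,[H_{\gamma_2},H_{\gamma_1}]]]$. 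Hence $\opnorm{\pf_p(s)-e^{-\ii Hs}}=\bigO(\Lambda\, s^{p+1})$ with $\Lambda$ exactly the nested-commutator norm defined in the main text; this step I would cite verbatim from \cite{childsTheoryTrotterError2021} rather than reprove it.

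Second, I would pass from one step to $r$ steps by subadditivity of the unitary error. Since $\pf_p^{r}$ and $U=(e^{-\ii Hs})^{r}$ are both products of $r$ unitaries, telescoping together with the triangle inequality and unitary invariance of $\opnorm{\cdot}$ gives
\[
\opnorm{\pf_p^{r}-U}\le r\,\opnorm{\pf_p(s)-e^{-\ii Hs}}=\bigO\!\big(r\,\Lambda\,(t/r)^{p+1}\big)=\bigO(\Lambda\, t^{p+1}/r^{p}).
\]

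Third, I would convert this into the claimed expectation bound. Adding and subtracting $\pf_p^{\dagger r}OU$,
\[
\pf_p^{\dagger r}O\pf_p^{r}-U^{\dagger}OU=\pf_p^{\dagger r}O(\pf_p^{r}-U)+(\pf_p^{\dagger r}-U^{\dagger})OU,
\]
and using $\abs{\Tr(A\rho)}\le\opnorm{A}$ for a density matrix $\rho$, submultiplicativity of the operator norm, and unitarity of $\pf_p^{r}$ and $U$,
\[
\abs{\Tr[(\pf_p^{\dagger r}O\pf_p^{r}-U^{\dagger}OU)\rho]}\le 2\opnorm{O}\,\opnorm{\pf_p^{r}-U}=\bigO(\opnorm{O}\,\Lambda\, t^{p+1}/r^{p}),
\]
which holds for every $\rho$ and $O$, hence is a worst-case bound. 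Setting the right-hand side $\le\eps$ gives $r^{p}\gtrsim \opnorm{O}\Lambda\, t^{p+1}/\eps$, i.e.\ $r\in\bigO\big((\opnorm{O}\Lambda\eps^{-1})^{1/p}\,t^{1+1/p}\big)$.

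The only genuinely technical ingredient is the single-step commutator-scaling bound of the first paragraph; the remaining steps use nothing beyond the triangle inequality and the standard Schatten-norm inequalities (Hölder/submultiplicativity, unitary invariance) already collected in \cref{apd:pre}. If one insisted on a self-contained proof, the main obstacle would be carrying out the order-$p$ Taylor matching carefully enough to see that the error operator is controlled by the nested commutators entering $\Lambda$ rather than by the cruder $\opnorm{H}^{p+1}$ scaling; this is exactly the content of \cite{childsTheoryTrotterError2021}, which I would import as a black box.
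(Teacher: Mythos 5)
Your proposal is correct and follows essentially the same route as the paper's proof: bound the expectation error by $\opnorm{\cdot}$ via H\"older, split by inserting $\pf_p^{\dagger r}OU$, use unitary invariance and submultiplicativity to get $2\opnorm{O}\opnorm{\pf_p^r-U}$, and import the commutator-scaling bound $\opnorm{\pf_p^r-U}=\bigO(\Lambda t^{p+1}/r^p)$ from Childs et al.\ as a black box. The only cosmetic difference is that you spell out the single-step-to-$r$-step telescoping explicitly, whereas the paper cites the $r$-step bound directly.
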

\begin{proof}
    We have the following upper bound on the worst-case Trotter error in expectation
    \begin{align}
        \abs{\Tr[(\pf_p^{\dagger r} O \pf_p^{r} - U^\dagger O U)\rho]}
        &\le \norm{\pf_p^{\dagger r} O \pf_p^{r} - U^\dagger O U}_{\infty} \cdot \norm{\rho}_1 \tag{Hölder's ineq} \\
        &= \norm{\pf_p^{\dagger r} O \pf_p^{r} - U^\dagger O U}_{\infty} \tag{$\norm{\rho}_1=1$} \\
        &\le \norm{\pf_p^{\dagger r} O \pf_p^{r} - \pf_p^{\dagger r} O U}_{\infty} + \norm{\pf_p^{\dagger r} O U - U^\dagger O U}_{\infty} \tag{Triangle ineq} \\
        &\le 2 \norm{O\pf_p^{r}-OU}_\infty \tag{Unitarity} \\
        &\le 2 \norm{O}_{\infty} \cdot \norm{\pf_p^r - U}_{\infty}
        \tag{Sub-multiplicativity} \\
        &= \bigO(\opnorm{O} \Lambda \cdot t^{p+1}/r^{p}) 
        \tag{$\opnorm{\pf_p^r- U}=\bigO(\Lambda\cdot t^{p+1}/r^{p})$},
    \end{align}
    where $\opnorm{\pf_p^r- U}$ is the worst-case Trotter error of any states and
    $\Lambda:=\sum_{\gamma_1,\dots,\gamma_{p+1}=1}^{\Gamma} \opnorm{[H_{\gamma_p+1},[H_{\gamma_p},...,[H_{\gamma_2},H_{\gamma_1}]]]}$ is the operator norm of the nested commutator \cite{childsTheoryTrotterError2021}.
\end{proof}
When considering the state information of the input state, 
e.g., an ensemble of random states or an entangled state,
we can improve the worst-case Trotter error bound from the operator norm $\opnorm{\cdot}$ to the normalized Schatten (Pauli) 2-norm $\nfnorm{\cdot}$.
Assume an ensemble of random input states, the Theorem~3 of Ref.~\cite{zhaoHamiltonianSimulationRandom2021,yuObservabledrivenSpeedupsQuantum2025} proved the average-case Trotter error bound in expectation value.
We restate it as follow and refer to the original paper for the proof.
\begin{lemma}[Average-case Trotter error in expectation with random input state]\label{apd:thm:average_trotter_error}
    For a 2-design ensemble $\scrE_2$ of quantum states and an observable $O$, 
    the average Trotter error 
    $\bbE_{\rho\sim \statee_2}\qty[\abs{\Tr[(\pf_p^{\dagger r} O \pf^r_p - U^\dagger O U)\rho]}]$
    of the standard $r$-step $p$th-order Suzuki product formula \cref{apd:eq:suzuki} has the upper bound
        $\bigO(\fLambda \nfnorm{O} t^{p+1}r^{-p})$,
    where $\fLambda:=\sum_{\gamma_1,\dots,\gamma_{p+1}=1}^{\Gamma} \nfnorm{[H_{\gamma_p+1},[H_{\gamma_p},...,[H_{\gamma_2},H_{\gamma_1}]]]}$ is the normalized Schatten (Pauli) 2-norm of the nested commutator and $H=\sum_{\gamma=1}^\Gamma H_{\gamma}$ is the decomposition of Hamiltonian terms into commuting groups.
    Consequently, the required number of Trotter steps $r$ for average Trotter error $\eps$ is
    $ r\in \bigO((\fLambda\nfnorm{O}\eps^{-1})^{1/p} t^{1+1/p})$.
\end{lemma}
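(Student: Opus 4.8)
\noindent The plan is to reduce the averaged absolute error to the normalized Schatten $2$-norm of the Trotter error \emph{operator}, and then to re-run the worst-case argument of \cref{apd:lem:worst_trotter_error} with the operator norm replaced by the $2$-norm at every place where the input state entered. First I would set $E := \pf_p^{\dagger r} O \pf_p^r - U^\dagger O U$; this operator is Hermitian (conjugation by unitaries preserves Hermiticity) and traceless (the two conjugations each contribute $\Tr(O)$, which cancels). By Jensen's inequality followed by \cref{apd:lem:2design_ob_bound} applied to the ``observable'' $E$,
\begin{equation}
\bbE_{\rho\sim\statee_2}\qty[\abs{\Tr(E\rho)}]\ \le\ \sqrt{\bbE_{\rho\sim\statee_2}\qty[\abs{\Tr(E\rho)}^2]}\ \le\ \nfnorm{E},
\end{equation}
so the whole task becomes proving $\nfnorm{E} = \bigO(\fLambda\,\nfnorm{O}\,t^{p+1}r^{-p})$.

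For that I would invoke the operator-valued Taylor-remainder (integral) representation of the product-formula error from \cite{childsTheoryTrotterError2021}: a single Trotter step obeys an identity of the form $\pf_p(t/r)^{\dagger} O\,\pf_p(t/r) - e^{\ii Ht/r} O\, e^{-\ii Ht/r} = \sum_{\gamma_1,\dots,\gamma_{p+1}} \int \mathrm{d}\tau\; W^{\dagger}_{\gamma}(\tau)\,\mathcal{C}_{\gamma}(\tau)\,W_{\gamma}(\tau)$, where each $W_{\gamma}(\tau)$ is a finite product of exponentials of the $H_{\gamma}$ (hence unitary) and $\mathcal{C}_{\gamma}(\tau)$ is, up to scalars bounded by $\bigO((t/r)^{p+1})$, the nested commutator $[H_{\gamma_{p+1}},[H_{\gamma_p},\dots,[H_{\gamma_2},H_{\gamma_1}]]]$ dressed by (unitarily conjugated copies of) $O$. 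Telescoping this over the $r$ steps, absorbing all surviving exponentials into the $W$'s, and applying the triangle inequality reduces $\nfnorm{E}$ to $r$ copies of the single-step contribution; since the normalized Schatten $2$-norm is unitarily invariant, the outer $W^{\dagger}_{\gamma}\!\cdots W_{\gamma}$ factors are free, so $\nfnorm{E}\le r\cdot\bigO((t/r)^{p+1})\cdot\sum_{\gamma_1,\dots,\gamma_{p+1}}\nfnorm{\mathcal{C}_{\gamma}}$.

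The remaining step is to bound $\nfnorm{\mathcal{C}_{\gamma}}$ by $\bigO\bigl(\nfnorm{O}\cdot\nfnorm{[H_{\gamma_{p+1}},\dots,[H_{\gamma_2},H_{\gamma_1}]]}\bigr)$, using that the $2$-norm is unitarily invariant and satisfies the mixed H\"older bounds $\nfnorm{AB}\le\opnorm{A}\nfnorm{B}$ and $\nfnorm{AB}\le\nfnorm{A}\opnorm{B}$: one pushes the single $O$ factor onto the $2$-norm and every $H$-factor onto the operator norm, while keeping the Hamiltonian commutator bracketed so that the $H$-dependence collapses to the $2$-norm of the nested commutator rather than to a product $\prod_i\opnorm{H_{\gamma_i}}$ of individual norms. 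Summing over $\gamma_1,\dots,\gamma_{p+1}$ produces the factor $\fLambda\nfnorm{O}$, and combining the three steps gives $\bbE_{\rho\sim\statee_2}\qty[\abs{\Tr(E\rho)}]=\bigO(\fLambda\nfnorm{O}t^{p+1}r^{-p})$; setting this $\le\eps$ and solving for $r$ yields $r\in\bigO((\fLambda\nfnorm{O}\eps^{-1})^{1/p}t^{1+1/p})$.

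The hard part is this last step. In the worst-case proof of \cref{apd:lem:worst_trotter_error} the operator norm is genuinely sub-multiplicative, so the commutator bookkeeping is routine; here one must be careful never to expand the nested Hamiltonian commutator into a sum of monomials (which would replace $\fLambda$ by the much larger $\sum_{\gamma}\prod_i\opnorm{H_{\gamma_i}}$) and must track precisely which single factor is allowed to carry the $2$-norm. This is exactly the delicate accounting carried out in \cite{zhaoHamiltonianSimulationRandom2021,yuObservabledrivenSpeedupsQuantum2025}, which is why we only restate the lemma and defer the full argument to those works.
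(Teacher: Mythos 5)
The paper does not actually prove this lemma: it is restated verbatim from the cited references (``Theorem~3 of Ref.~\cite{zhaoHamiltonianSimulationRandom2021,yuObservabledrivenSpeedupsQuantum2025}\dots we restate it as follow and refer to the original paper for the proof''), so there is no internal proof to compare against. Your opening reduction is nevertheless correct and is exactly the technique the paper uses for the analogous truncation-error bound (\cref{apd:thm:ob_norm_bound_expectation_error}): Jensen's inequality followed by \cref{apd:lem:2design_ob_bound} applied to the Hermitian error operator $E$ legitimately reduces the task to bounding $\nfnorm{E}$, and the subsequent appeal to the Childs-et-al.\ integral representation plus unitary invariance of $\nfnorm{\cdot}$ is the right skeleton for the cited works' argument.

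The one place where your sketch does not hold together as written is the final norm split. You claim to obtain $\nfnorm{O}\cdot\nfnorm{[H_{\gamma_{p+1}},\dots]}$ by ``pushing the single $O$ factor onto the $2$-norm and every $H$-factor onto the operator norm,'' but the mixed H\"older bounds only give $\nfnorm{AB}\le\nfnorm{A}\,\opnorm{B}$ or $\nfnorm{AB}\le\opnorm{A}\,\nfnorm{B}$ --- one factor must carry the operator norm, and $\nfnorm{AB}\le\nfnorm{A}\,\nfnorm{B}$ fails for the \emph{normalized} $2$-norm by a factor of $2^{n/2}$. So a direct H\"older split yields either $\opnorm{O}\,\fLambda$ or $\nfnorm{O}\,\Lambda$, not $\nfnorm{O}\,\fLambda$; getting both factors into $2$-norms requires the finer observable-dependent accounting of \cite{yuObservabledrivenSpeedupsQuantum2025} (or accepting $\opnorm{O}$ in place of $\nfnorm{O}$, which suffices for the local observables considered here since they coincide up to constants for a single Pauli). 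Since you explicitly defer this ``delicate accounting'' to the same references the paper cites, your proposal ends up in the same place as the paper --- an import of the external result --- with a correct and useful reduction in front of it, but the double-$2$-norm claim should not be presented as a consequence of mixed H\"older alone.
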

 Theorem~3 of \cite{zhaoEntanglementAcceleratesQuantum2025} showed the 2-norm bound of the Trotter error can be achieved without randomness.
 Instead, it requires the input state to be entangled by the \cref{apd:cor:entangle_ob_bound}.
\begin{lemma}[Pauli 2-norm bound of the Trotter error in expectation with an entangled state]\label{apd:thm:trotter_error_entangled}
    Given an entangled state $\ket{\psie}$ as required in \cref{apd:cor:entangle_ob_bound} and a local observable $O$, 
    the Trotter error 
    $\abs{\Tr[(\pf_p^{\dagger r} O \pf_p^r - U^\dagger O U)\op{\psie}]}$
    of the standard $r$-step $p$th-order Suzuki product formula \cref{apd:eq:suzuki} has the upper bound
    $\bigO(\fLambda \nfnorm{O} t^{p+1}r^{-p})$.
\end{lemma}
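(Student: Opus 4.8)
The plan is to run the proof of \cref{apd:lem:worst_trotter_error} but to replace the step in which H\"older's inequality trades the trace norm of the state for the \emph{operator} norm of the error operator with the entanglement-based estimate of \cref{apd:cor:entangle_ob_bound}, which instead leaves only the \emph{Pauli} 2-norm. Write $E := \pf_p^{\dagger r} O\, \pf_p^r - U^\dagger O U$ for the Heisenberg-picture Trotter-error operator; it is Hermitian and traceless, since $\Tr(\pf_p^{\dagger r} O\, \pf_p^r) = \Tr O = \Tr(U^\dagger O U)$. First I would argue that $E$ is, up to corrections exponentially small in $n$, a local observable in the sense required by \cref{apd:cor:entangle_ob_bound}: for $t = \bigO(1)$ each of $U = \e^{-\ii H t}$ and the shallow per-step circuit $\pf_p$ propagates a bounded-weight Pauli within an $\bigO(1)$-radius light cone (a Lieb--Robinson bound for the local $H$, and the strict finite-depth light cone of one Trotter step), so $U^\dagger O U$ and $\pf_p^{\dagger r} O\, \pf_p^r$ --- hence $E$ --- are quasi-local; truncating their exponential tails to a strictly $w$-local operator with $w = \bigO(1)$ perturbs $\bra{\psie} E \ket{\psie}$ only by an exponentially small amount, which I absorb into the final $\bigO(\cdot)$. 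Since $\ket{\psie}$ is, by hypothesis, sufficiently entangled on all the relevant $\bigO(1)$-size subsystems, \cref{apd:cor:entangle_ob_bound} then gives $|\bra{\psie} E \ket{\psie}| \le \sqrt{2}\,\nfnorm{E}$, reducing everything to a Pauli-2-norm bound on $E$.

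For that bound I would invoke the operator-norm-free Trotter estimate that also underlies \cref{apd:thm:average_trotter_error} (Theorem~3 of \cite{zhaoHamiltonianSimulationRandom2021,zhaoEntanglementAcceleratesQuantum2025}), namely the \emph{deterministic} inequality $\nfnorm{E} = \bigO\!\big(\fLambda\,\nfnorm{O}\,t^{p+1} r^{-p}\big)$ --- in \cref{apd:thm:average_trotter_error} the randomness is only a device to turn the operator norm into the Pauli 2-norm, and \cref{apd:cor:entangle_ob_bound} achieves the same turn via entanglement. To reprove it I would write $\pf_p^r = \e^{-\ii \tilde H_p t}$ for the effective Hamiltonian $\tilde H_p = H + \bigO((t/r)^p)\cdot(\text{nested commutators of the } H_\gamma)$ and use the variation-of-parameters identity
\begin{equation}
    E = \int_0^t \e^{\ii \tilde H_p s}\, \ii\,\big[\, \tilde H_p - H,\ \e^{\ii H(t-s)}\, O\, \e^{-\ii H(t-s)} \,\big]\, \e^{-\ii \tilde H_p s}\, \mathrm{d}s ,
\end{equation}
then take $\nfnorm{\cdot}$ of both sides: unitary invariance of the normalized Schatten 2-norm strips the conjugating exponentials, expanding $\tilde H_p - H$ into nested commutators $C_{\vec\gamma}$ of the $H_\gamma$ (with coefficients of size $\bigO((t/r)^p)$) turns the integrand into a sum of terms $\nfnorm{[\, C_{\vec\gamma},\ \e^{\ii H(t-s)} O\, \e^{-\ii H(t-s)} \,]}$, and --- using locality a second time --- because each $C_{\vec\gamma}$ is bounded-weight and $\e^{\ii H(t-s)} O\, \e^{-\ii H(t-s)}$ is quasi-local, the commutator contributions of spatially separated pieces are nearly orthogonal in the Hilbert--Schmidt inner product and so add in quadrature rather than linearly. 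This is exactly what replaces the operator-norm commutator sum $\Lambda$ by its Pauli-2-norm counterpart $\fLambda$ and leaves the observable dependence as $\nfnorm{O}$ rather than $\opnorm{O}$; integrating over $s \in [0,t]$ supplies one extra power of $t$ and the $\bigO((t/r)^p)$ prefactor supplies $t^p r^{-p}$, giving the claimed bound.

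Combining the two parts yields $|\Tr[(\pf_p^{\dagger r} O\, \pf_p^r - U^\dagger O U)\op{\psie}]| = |\bra{\psie} E \ket{\psie}| \le \sqrt{2}\,\nfnorm{E} = \bigO(\fLambda\,\nfnorm{O}\,t^{p+1} r^{-p})$, as stated. I expect the main obstacle to be the ``add in quadrature'' step: showing that the Pauli 2-norm of the Trotter-error operator is governed by the Pauli-2-norm nested-commutator quantity $\fLambda$ and by $\nfnorm{O}$, rather than by the operator-norm quantities $\Lambda$ and $\opnorm{O}$ that the naive triangle/H\"older bounds would produce, which requires genuinely exploiting the local structure of the nested commutators of $H$ together with the quasi-locality of short-time Heisenberg evolution --- this is, however, precisely the content of the cited Trotter-error analyses and can be invoked. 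A secondary technical point is that $\pf_p^{\dagger r} O\, \pf_p^r$ is only quasi-local (its strict support may grow with $r$), so applying \cref{apd:cor:entangle_ob_bound} rigorously needs a Lieb--Robinson truncation of the tails, harmless only in the short-time regime $t = \bigO(1)$.
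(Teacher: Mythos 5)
Your proposal is correct and its essential step -- replacing H\"older's inequality in the proof of \cref{apd:lem:worst_trotter_error} with the entanglement-based Pauli 2-norm bound of \cref{apd:cor:entangle_ob_bound} applied to the Hermitian error operator $E$ -- is exactly what the paper does. Where you diverge is in how you then control $\nfnorm{E}$: the paper simply writes $\nfnorm{\pf_p^{\dagger r} O \pf_p^r - U^\dagger O U} \le 2\nfnorm{O}\cdot\nfnorm{\pf_p^r - U}$ via the triangle inequality, unitary invariance and (a loose use of) sub-multiplicativity, and then cites $\nfnorm{\pf_p^r - U} = \bigO(\fLambda\, t^{p+1}/r^p)$ from \cite{zhaoHamiltonianSimulationRandom2021,zhaoEntanglementAcceleratesQuantum2025}; you instead re-derive the 2-norm Trotter bound from a variation-of-parameters integral plus an ``add in quadrature'' locality argument, while acknowledging that this step is ultimately the content of the same cited analyses. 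The two routes buy essentially the same thing, since the genuinely hard estimate is deferred to the references either way; yours is more self-contained in spirit but correspondingly more delicate, and the paper's factorization is the quicker path. One point in your favor: your observation that $U^\dagger O U$ and $\pf_p^{\dagger r} O \pf_p^r$ are only quasi-local, so that applying \cref{apd:cor:entangle_ob_bound} to $E$ strictly requires a Lieb--Robinson (or light-cone) truncation of exponential tails in the $t=\bigO(1)$ regime, is a legitimate gap that the paper's one-line invocation of the corollary silently passes over; your proposed fix is the right one.
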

\begin{proof}
    The proof follows similar steps as in \cref{apd:lem:worst_trotter_error} except the first step,
    where we use the 2-norm bound of the expectation value for the entangled state from \cref{apd:cor:entangle_ob_bound} instead of the Hölder's inequality.
    \begin{align}
        \abs{\Tr[(\pf_p^{\dagger r} O \pf_p^r - U^\dagger O U)\op{\psie}]}
        &= \abs{\bra{\psie} (\pf_p^{\dagger r} O \pf_p^r - U^\dagger O U) \ket{\psie}} \\
        &\le \nfnorm{\pf_p^{\dagger r} O \pf_p^r - U^\dagger O U} 
        \tag{By \cref{apd:cor:entangle_ob_bound}} \\
        &\le 2\nfnorm{O} \cdot \nfnorm{\pf_p^r - U} \tag{Sub-multiplicativity} \\
        &= \bigO(\nfnorm{O} \fLambda \cdot t^{p+1}/r^{p}) 
        \tag{$\nfnorm{\pf_1- U}=\bigO(\fLambda\cdot t^{2}/r^{2})$},
    \end{align}
    where $\fLambda:=\sum_{\gamma_1,\dots,\gamma_{p+1}=1}^{\Gamma} \nfnorm{[H_{\gamma_p+1},[H_{\gamma_p},...,[H_{\gamma_2},H_{\gamma_1}]]]}$ is 
    the Pauli 2-norm of the nested commutator, 
    \cite{zhaoHamiltonianSimulationRandom2021,zhaoEntanglementAcceleratesQuantum2025}.
\end{proof}
\begin{remark}\label{apd:trotter_n_indep}
    The Trotter error bound can be further improved 
    when the evolved observable is constrained in the low-weight subspace and one-step Trotter is a shallow circuit.
    For example, Ref.\cite{yuObservabledrivenSpeedupsQuantum2025} showed that the Trotter error in local observables of the short-time nearest-neighbor Hamiltonian dynamics is system-size independent.
\end{remark}

In addition, the error mitigation techniques can be applied to 
not only physical errors 
\cite{caiQuantumErrorMitigation2023,lerchEfficientQuantumenhancedClassical2024,zhangCliffordPerturbationApproximation2024} 
but also algorithmic errors \cite{endoMitigatingAlgorithmicErrors2019}.
Specifically, Ref.~\cite{watsonExponentiallyReducedCircuit2025} applied the extrapolation scheme to reduce the Trotter steps $r$ from $1/\eps$ dependence to $\poly\log(1/\eps)$. 
In the framework of $\lpd$, we could use the symbolic calculation \cite{rudolphPauliPropagationComputational2025} to enable the extrapolation without repeating overhead (cf. \cite{zhangCliffordPerturbationApproximation2024}).

\subsection{Pauli truncation error analysis}\label{apd:pauli_error}
In this section, we analyze the Pauli truncation error incurred by our \emph{Low-weight Pauli Dynamics} $\lpd$ algorithm.
To begin with, we recall the important equation of Pauli rotation $e^{-\ii G \dt/2}$ acting on a Pauli operator $P$
\begin{equation}\label{apd:eq:pauli_rotation_branch}
     e^{\ii G \dt/2} P e^{-\ii G \dt/2} = 
    \begin{cases}
        P, & [G,P] = 0 \\
        \cos(\dt) P + \ii \sin(\dt) \cdot GP, & \qty{G,P}=0.
    \end{cases}
\end{equation}
\textbf{This equation has three crucial implications}:
(1) the transition is sparse in the Pauli basis;
(2) the weight change is limited by the locality of $G$;
(3) the transition amplitude is damped by $\sin(\dt)$.
More explicitly,
when the generator $G$ and $P$ commute $[G,P]=0$, 
then no rotation occurs and the Pauli operator remains unchanged (no weight change);
when they anti-commute $\{G,P\}=0$, 
then the weight of $P$ changes by at most $k-1$ if $G$ has weight at most $k$.
\textit{And importantly, if the rotation angle $\dt$ is small, 
then the newly generated Pauli $GP$ has its coefficient damped by small factor $\sin(\dt)\sim \dt$.}
See the illustration in \cref{fig:one_layer}(a).

Next, we are going to prove the 2-norm upper bound on the Pauli truncation error in expectation for an ensemble of random states as well as an entangled state.

\begin{lemma}[2-norm upper bound of average Pauli Truncation error]\label{apd:thm:ob_norm_bound_expectation_error}
    The average Pauli error in the expectation value 
    with respect to 2-design states $\statee_2$ is 
    upper bounded by the normalized Schatten (Pauli) 2-norm of the two observables' difference 
    \begin{equation}
        \bbE_{\rho\sim \statee_2}  \qty[\abs{\Tr[(\tO^{(r)} - \tO^{(r)}_{\le w^*})\rho]}]
        \le
        \nfnorm{\tO^{(r)} - \tO^{(r)}_{\le w^*}},
    \end{equation}
    where $\tO^{(r)}:=\pf^{\dagger r} O \pf^{r}$ is the $r$-step Trotter evolved observable without truncation
    and $\tO_{\le w^*}^{(r)}$ is the low-weight approximation of $\tO^{(r)}$ by $\lpd$
    (i.e., truncating high-weight Paulis above the threshold $w^*$ at the end of each Trotter step).
\end{lemma}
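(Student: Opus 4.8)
The plan is to obtain this lemma directly from the $2$-design bound on squared expectation values, \cref{apd:lem:2design_ob_bound}, combined with Jensen's inequality. First I would set $\Delta O := \tO^{(r)} - \tO^{(r)}_{\le w^*}$ and note that $\Delta O$ is Hermitian: both $\tO^{(r)}$ and the weight-truncated operator $\tO^{(r)}_{\le w^*}$ are real linear combinations of Pauli strings — truncating a Pauli expansion by weight simply deletes some of those terms and preserves Hermiticity — so $\Delta O$ is again a Hermitian operator, i.e.\ a legitimate observable to which \cref{apd:lem:2design_ob_bound} applies.

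Next I would apply Jensen's inequality to the concave map $x\mapsto\sqrt{x}$: for $\rho=\op{\psi}$ with $\ket{\psi}$ drawn from the $2$-design ensemble $\statee_2$,
\begin{equation}
\bbE_{\rho\sim \statee_2}\qty[\abs{\Tr(\Delta O\,\rho)}]
= \bbE_{\rho\sim \statee_2}\qty[\sqrt{\abs{\Tr(\Delta O\,\rho)}^2}]
\le \sqrt{\bbE_{\rho\sim \statee_2}\qty[\abs{\Tr(\Delta O\,\rho)}^2]}.
\end{equation}
Invoking \cref{apd:lem:2design_ob_bound} with the observable $\Delta O$ gives $\bbE_{\rho\sim \statee_2}[\abs{\Tr(\Delta O\,\rho)}^2]\le \nfnorm{\Delta O}^2$, and taking square roots yields
\begin{equation}
\bbE_{\rho\sim \statee_2}\qty[\abs{\Tr[(\tO^{(r)} - \tO^{(r)}_{\le w^*})\rho]}]
\le \nfnorm{\tO^{(r)} - \tO^{(r)}_{\le w^*}},
\end{equation}
which is precisely the claimed bound.

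There is essentially no obstacle in this step; the only point to verify is that the inequality $\Tr(O)^2+\Tr(O^2)\le (2^n+1)\Tr(O^\dagger O)$ used inside the proof of \cref{apd:lem:2design_ob_bound} holds for our Hermitian $\Delta O$ — and it does, since $\Tr(\Delta O^2)=\Tr(\Delta O^\dagger\Delta O)=2^n\nfnorm{\Delta O}^2$ and $\Tr(\Delta O)^2\le 2^n\,\Tr(\Delta O^2)$ by Cauchy--Schwarz. The genuinely substantive work, namely controlling the right-hand side $\nfnorm{\tO^{(r)}-\tO^{(r)}_{\le w^*}}$ itself through the triangle inequality across Trotter steps (\cref{eq:truncation_error_triangle_bound}) and the damped local norm flow of \cref{thm:weight_decay}, is handled in the subsequent analysis; this lemma is only the reduction from the average expectation-value error to a Pauli $2$-norm quantity, and I would also remark that the same argument works verbatim for the entangled-state case by replacing \cref{apd:lem:2design_ob_bound} with \cref{apd:cor:entangle_ob_bound} (up to the constant factor $\sqrt{2}$).
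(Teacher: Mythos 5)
Your proposal is correct and follows essentially the same route as the paper: apply Jensen's inequality to pass from the mean absolute error to the root-mean-square error, then invoke \cref{apd:lem:2design_ob_bound} on the Hermitian difference operator. The additional check that $\Tr(\Delta O)^2+\Tr(\Delta O^2)\le(2^n+1)\Tr(\Delta O^\dagger\Delta O)$ is a correct (if implicit in the paper) verification, and your closing remark about the entangled-state analogue matches \cref{apd:thm:ob_norm_bound_expectation_error_entangled}.
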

\begin{proof}
    By the Jensen's inequality, 
    the average truncation error is upper bounded by the square root of the sum of the squared truncation errors as $\bbE[\abs{\Delta}]\le \sqrt{\bbE[\Delta^2]}$.
    \begin{align}
        \bbE_{\rho\sim \statee_2} \qty[\abs{\Tr[(\tO^{(r)} - \tO^{(r)}_{\le w^*})\rho]}] 
        &\le \sqrt{\bbE_{\rho\sim \statee_2} \qty[\Tr[(\tO^{(r)} - \tO^{(r)}_{\le w^*})\rho]^2]} \tag{Jensen's ineq} \\
        &\le \nfnorm{\tO^{(r)} - \tO^{(r)}_{\le w^*}} \tag{\cref{apd:lem:2design_ob_bound}}.
    \end{align}
    The second inequality is by \cref{apd:lem:2design_ob_bound} that upper bounds the root squared mean expectation over 2-design states by the normalized Schatten 2-norm of the observable difference.
\end{proof}
Next, we have the similar 2-norm upper bound of the Pauli truncation error in expectation for an entangled state.
    We define the Pauli truncation error in expectation with an entangled state $\ket{\psie}$ as
    \begin{equation}\label{apd:eq:pauli_truncation_error_entangled}
        \Delta \tilde{\mu}_{\le w^*}
        := \abs{\bra{\psie}(\tO^{(r)} - \tO^{(r)}_{\le w^*})\ket{\psie}}.
    \end{equation}
\begin{proposition}[2-norm upper bound on Pauli Truncation error in expectation with an entangled state]\label{apd:thm:ob_norm_bound_expectation_error_entangled}
    Consider a Trotter evolved observable $\tO^{(r)}$ and its low-weight approximation $\tO^{(r)}_{\le w^*}$.
    Denote the two local observables' difference as $O:=\tO^{(r)}-\tO^{(r)}_{\le w^*}=\sum_j O_j$.
    Then, the Pauli truncation error in expectation with respect to an entangled state $\ket{\psie}$ is upper bounded by the normalized Schatten 2-norm of the two observables' difference
    \begin{equation}
        \abs{\bra{\psie}(\tO^{(r)} - \tO^{(r)}_{\le w^*})\ket{\psie}}
        \le
        \sqrt{2}\nfnorm{\tO^{(r)} - \tO^{(r)}_{\le w^*}},
    \end{equation}
    if the input state $\ket{\psie}$ has the subsystem entanglement entropy $S(\rho_{j,j'})\ge\abs{\supp(O^\dagger_j O_{j'})}-\frac{1}{2}\nfnorm{O}^4/(\sum_j\norm{O_j})^4$ for all subsystem $\supp(O^\dagger_j O_{j'})$.
\end{proposition}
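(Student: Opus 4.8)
The plan is to read off this proposition as an immediate consequence of the entangled-state expectation bound \cref{apd:cor:entangle_ob_bound}, applied to the single observable $O := \tO^{(r)} - \tO^{(r)}_{\le w^*}$. In contrast to the averaged statement \cref{apd:thm:ob_norm_bound_expectation_error}, no Jensen step is needed here: \cref{apd:cor:entangle_ob_bound} already produces the squared-$2$-norm bound for a fixed sufficiently entangled state, so once the hypotheses of that corollary are checked for $O$, we are done.

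First I would verify that $O = \tO^{(r)} - \tO^{(r)}_{\le w^*}$ is a $w$-local observable with $w$ independent of the system size $n$, so that \cref{apd:cor:entangle_ob_bound} is applicable. By construction of $\lpd$, the truncated observable $\tO^{(r)}_{\le w^*}$ is supported on Paulis of weight at most $w^*$. For the untruncated Trotter-evolved observable $\tO^{(r)} = \pf^{\dagger r} O \pf^{r}$, each local Pauli rotation changes the weight of a Pauli by at most $\kh-1$ via \cref{apd:eq:pauli_rotation_branch}, and because one Trotter step is a shallow ($\Gamma \in \bigO(1)$ layers) circuit, the light-cone argument bounds the number of rotations acting non-trivially on any given Pauli per step; hence in the short-time, constant-precision regime (where $r \in \bigO(1)$) the Paulis of $\tO^{(r)}$ have weight bounded by a system-size-independent constant, as already observed after \cref{thm:entangle_ob_bound}. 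Grouping the Paulis of $O$ by their supports gives a decomposition $O = \sum_j O_j$ with $\max_j \abs{\supp(O_j)} \le w$, and all subsystems $\supp(O_j^\dagger O_{j'})$ likewise have constant size.

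Then I would invoke \cref{apd:cor:entangle_ob_bound} verbatim: the hypothesis assumed in the proposition, namely $S(\rho_{j,j'}) \ge \abs{\supp(O^\dagger_j O_{j'})} - \tfrac{1}{2}\nfnorm{O}^4/(\sum_j \norm{O_j})^4$ for all the (constant-size) subsystems $\supp(O_j^\dagger O_{j'})$, is precisely the entropy condition required there, so the corollary yields $\abs{\bra{\psie} O \ket{\psie}}^2 \le 2\nfnorm{O}^2$. Substituting $O = \tO^{(r)} - \tO^{(r)}_{\le w^*}$ and taking square roots gives the stated inequality $\abs{\bra{\psie}(\tO^{(r)} - \tO^{(r)}_{\le w^*})\ket{\psie}} \le \sqrt{2}\,\nfnorm{\tO^{(r)} - \tO^{(r)}_{\le w^*}}$. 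The only place that needs care is the locality step: one must ensure that the weights of $\tO^{(r)}$ (and hence of the products $O_j^\dagger O_{j'}$ entering the entropy condition) do not grow with $n$, which is exactly why the shallowness of a single Trotter step and the light-cone bound are invoked; everything after that is a one-line application of the already-established corollary.
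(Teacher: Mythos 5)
Your proposal is correct and follows essentially the same route as the paper: the paper's proof likewise consists of noting that $\tO^{(r)} - \tO^{(r)}_{\le w^*}$ is a local observable (constant Pauli weights given a system-size-independent $w^*$ and a constant-depth Trotter step) and then applying \cref{apd:cor:entangle_ob_bound} directly, with the $\sqrt{2}$ coming from taking the square root of the corollary's bound. Your additional care about why the untruncated $\tO^{(r)}$ stays local is a reasonable elaboration of a point the paper asserts without detail.
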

\begin{proof}
    Since our low-weight Pauli dynamics guarantees $\tO^{(r)} - \tO^{(r)}_{\le w^*}$ is a local observable 
    (i.e., all Pauli operators have at most constant weight if the truncation threshold $w^*$ is system-size independent and one-step Trotter has a constant depth.),
    we can directly apply \cref{apd:cor:entangle_ob_bound} to bound the expectation value with respect to the entangled state $\ket{\psie}$.
\end{proof}
Therefore,
\cref{apd:thm:ob_norm_bound_expectation_error} and \cref{apd:thm:ob_norm_bound_expectation_error_entangled} mean that
the Pauli 2-norm of the observable difference $\nfnorm{\tO^{(r)} -\tO_{\le w^*}^{(r)}}$ is an upper bound of the Pauli truncation error with random states as well as an entangled state.
To bound the norm of the truncated high-weight Paulis, we introduce some notations.
Let 
\begin{equation}\label{apd:eq:weight_w_component}
    O_{=w}:=\sum_{s\in \npbasis_{n},\abs{s}=w} \Tr(Os)s 
\end{equation}
denote the set of Pauli operators (with coefficients) in $O$ of weights $=w$
and $O_{\ge w}$ denote the set of Pauli operators in $O$ of weight at least $w$.
The squared Pauli (normalized Schatten) 2-norm of $O_{=w}$ reads
$\nfnorm{O_{=w}}^2 :=\sum_{s\in \npbasis_{n},\abs{s}=w} \Tr(Os)^2$.
And the squared \emph{Pauli 2-norm of the high-weight component} of an observable $O$ reads
\begin{equation}
    \nfnorm{O_{\ge w^*+1}}^2
    :=\sum_{s\in \npbasis_{n},\abs{s}\ge w^*+1} \Tr(Os)^2
    \equiv\sum_{w= w^*+1}^{n} \nfnorm{O_{=w}}^2,
\end{equation}
where the sum is over all $n$-qubit normalized Pauli operators $s$ with weight $\ge w^*+1$.

\begin{figure*}[!t]
    \centering
    \includegraphics[width=0.95\linewidth]{./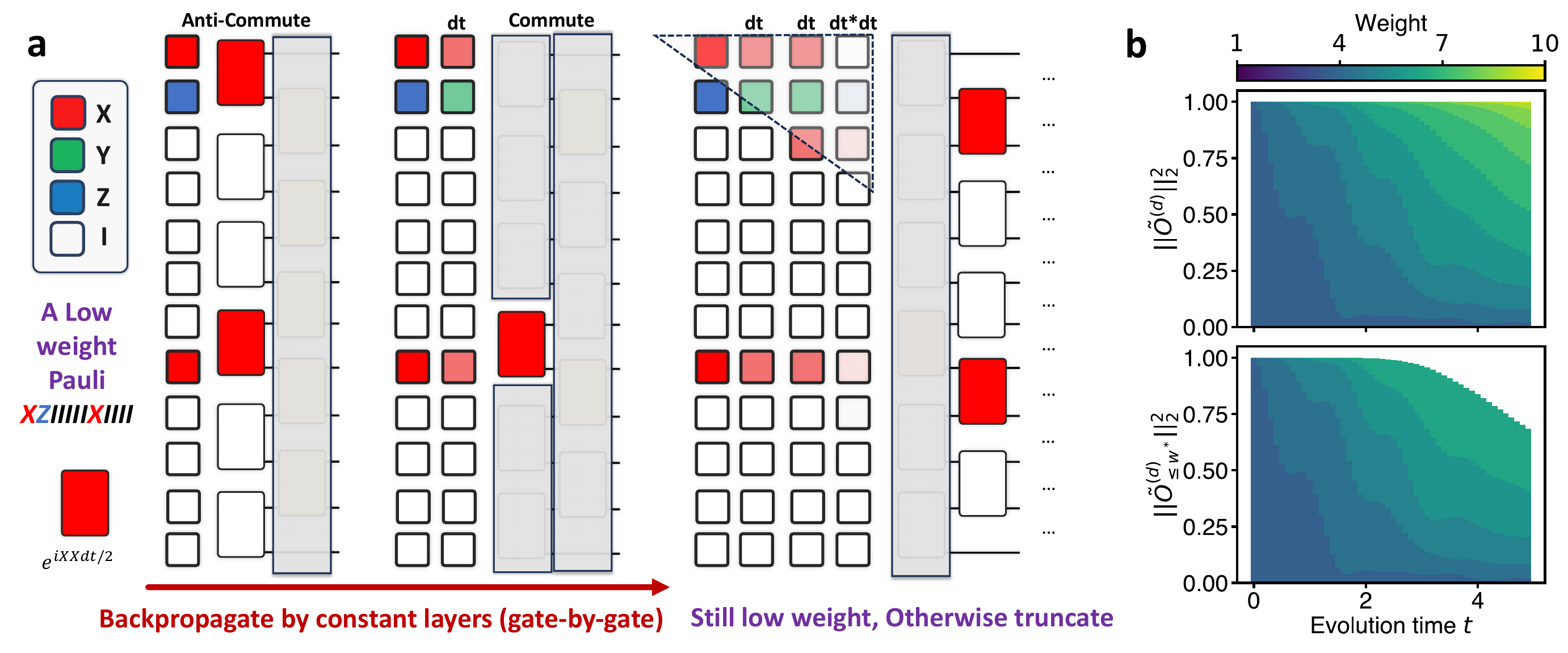}
    \caption{
    The Pauli branch and weight change in one Trotter step.
    (a) An illustration of constant layers of local Pauli rotations acting on a local (low-weight) Pauli operator.
    For simplicity, assume the initial Pauli operator is \textsf{XZIIIIIXIIII} with weight 3, 
    the two-layer brickwork Pauli rotations are $e^{-\ii XX \dt}$ colored red.
    In the first layer, only two Pauli rotations (red) has support on the initial Pauli operator.
    One commutes with the initial Pauli operator and thus does not change it;
    the other anti-commutes and thus generates a new Pauli operator \textsf{XYIIIIIXIIII} with weight 3,
    the coefficient damped by $\sin(\dt)$ (with the lighter color).
    Similarly, in the second layer, one Pauli rotation commutes and the other anti-commutes,
    generating another two new Paulis.
    At the end of one Trotter step, all Pauli operators are low-weight (below the threshold) due to truncation.
    Since high-weight Paulis flowed from low-weight Paulis experienced many Pauli rotations, their contribution must be damped by small angle $\dt$ many times.
    (b) 
    The 2-norm distribution over Pauli weights with(above)/without(below) truncation.
    Without truncation, the top figure shows that in the short-time regime,
    the low-weight Paulis (dark green) dominate the 2-norm distribution,
    while the high-weight Paulis (light green) gradually accumulate 2-norm in longer time.
    So, it is feasible to truncate high-weight Paulis in the short-time regime 
    and Pauli truncation is unlikely to work in long-time (i.e., $t\sim n$) regime.
    The bottom figure shows that with the high-weight truncation,
    the 2-norm distribution is strictly limited to low-weight Paulis,
    and the 2-norm gradually loses as time evolves.
    }
    \label{fig:one_layer}
\end{figure*}

Since our $\lpd$ algorithm truncates high-weight Paulis at the end of each Trotter step,
we can bound the total Pauli truncation error by the sum of errors in every step as follows.
\begin{proposition}[Triangle bound of Pauli truncation error]\label{apd:thm:triangle}
    Given an input local observable $O$, a sufficiently entangled initial state $\ket{\psie}$ satisfying the condition in \cref{apd:thm:ob_norm_bound_expectation_error_entangled}, 
    then the total Pauli truncation error in expectation of $\lpd$ has the upper bound 
    \begin{equation}
        \Delta \tilde{\mu}_{\le w^*} \le 
        2\sum_{d=1}^{r} \nfnorm{\tO_{\ge w^*+1}^{(d)}},
    \end{equation}
    which is the sum of the Pauli 2-norms of the high-weight components at the end of every Trotter step before truncation.
\end{proposition}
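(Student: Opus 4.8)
The plan is to reduce the cumulative truncation error to a telescoping sum over Trotter steps, using that a single Trotter step acts on operator space as a Pauli-$2$-norm isometry. First I would introduce superoperator notation: write $\calV(\cdot):=\pf^{\dagger}(\cdot)\pf$ for one step of Heisenberg evolution and $\mathcal{T}$ for the linear map that keeps only the Pauli components of weight at most $w^*$, so that the untruncated evolved observable is $\tO^{(r)}=\calV^{r}(O)$, the $\lpd$ output is $\tO^{(r)}_{\le w^*}=(\mathcal{T}\calV)^{r}(O)$, and the operator discarded at the end of step $d$ is exactly $\tO^{(d)}_{\ge w^*+1}=(\mathcal{I}-\mathcal{T})\calV(\mathcal{T}\calV)^{d-1}(O)$.

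Next I would insert a hybrid argument. Setting $A_d:=\calV^{\,r-d}(\mathcal{T}\calV)^{d}(O)$ for $d=0,\dots,r$ gives $A_0=\tO^{(r)}$, $A_r=\tO^{(r)}_{\le w^*}$, and
\begin{equation*}
\tO^{(r)}-\tO^{(r)}_{\le w^*}=\sum_{d=1}^{r}\bigl(A_{d-1}-A_d\bigr),\qquad A_{d-1}-A_d=\calV^{\,r-d}\bigl[(\mathcal{I}-\mathcal{T})\calV(\mathcal{T}\calV)^{d-1}(O)\bigr]=\calV^{\,r-d}\bigl(\tO^{(d)}_{\ge w^*+1}\bigr).
\end{equation*}
Invoking the entangled-state expectation bound \cref{apd:thm:ob_norm_bound_expectation_error_entangled} (or \cref{apd:thm:ob_norm_bound_expectation_error} in the $2$-design case), the triangle inequality, and the invariance of $\nfnorm{\cdot}$ under conjugation by the unitary $\pf$, I would then conclude
\begin{equation*}
\Delta\tilde{\mu}_{\le w^*}\le\sqrt{2}\,\nfnorm{\tO^{(r)}-\tO^{(r)}_{\le w^*}}\le\sqrt{2}\sum_{d=1}^{r}\nfnorm{\calV^{\,r-d}\bigl(\tO^{(d)}_{\ge w^*+1}\bigr)}=\sqrt{2}\sum_{d=1}^{r}\nfnorm{\tO^{(d)}_{\ge w^*+1}}\le 2\sum_{d=1}^{r}\nfnorm{\tO^{(d)}_{\ge w^*+1}},
\end{equation*}
which is the claim.

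The delicate step, and the one I expect to be the main obstacle, is justifying the use of \cref{apd:thm:ob_norm_bound_expectation_error_entangled}: it applies only to local operators, and while each remainder $\tO^{(d)}_{\ge w^*+1}$ is local (it is obtained from an observable of weight at most $w^*$ by a single shallow Trotter step), the conjugated operator $\calV^{\,r-d}(\tO^{(d)}_{\ge w^*+1})$ that enters the telescoping is local only when $r\in\bigO(1)$, i.e.\ in the short-time, constant-precision regime of \cref{thm:runtime} and \cref{thm:truncation_threshold}. To handle the general case I would instead bound the $d$-th term of the telescoping directly, writing $\abs{\bra{\psie}(A_{d-1}-A_d)\ket{\psie}}=\abs{\bra{\psi_{r-d}}\tO^{(d)}_{\ge w^*+1}\ket{\psi_{r-d}}}$ with $\ket{\psi_{r-d}}:=\pf^{\,r-d}\ket{\psie}$, observe that the forward-evolved state is at least as entangled on the relevant subsystems as $\ket{\psie}$ for the scrambling dynamics considered (hence still satisfies the subsystem-entropy hypothesis against the local observable $\tO^{(d)}_{\ge w^*+1}$), and sum the resulting bounds $\sqrt{2}\,\nfnorm{\tO^{(d)}_{\ge w^*+1}}$ to again reach $2\sum_{d=1}^{r}\nfnorm{\tO^{(d)}_{\ge w^*+1}}$.
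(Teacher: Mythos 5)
Your proposal is correct and is essentially the paper's own proof: the paper likewise applies the entangled-state Pauli-2-norm bound (\cref{apd:thm:ob_norm_bound_expectation_error_entangled}) once to the full difference $\tO^{(r)}-\tO^{(r)}_{\le w^*}$ and then iterates the identity $\tO^{(d)}_{\le w^*}=\pf^\dagger\tO^{(d-1)}_{\le w^*}\pf-\tO^{(d)}_{\ge w^*+1}$ together with the triangle inequality and unitary invariance of $\nfnorm{\cdot}$, which is exactly your telescoping sum $\sum_d(A_{d-1}-A_d)$ written as a recursion. Your concern about the locality of $\calV^{\,r-d}(\tO^{(d)}_{\ge w^*+1})$ is moot on this route, since the locality hypothesis is only ever invoked for the total difference (as in the paper), so the alternative per-term argument with forward-evolved states is unnecessary.
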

\begin{proof}
    The lemma follows by applying the entanglement bound of the local observable expectation \cref{apd:thm:ob_norm_bound_expectation_error_entangled} as
    \begin{align}
        \Delta \tilde{\mu}_{\le w^*}
        &:= \abs{\bra{\psie}(\tO^{(r)} - \tO^{(r)}_{\le w^*})\ket{\psie}} \tag{By def \cref{apd:eq:pauli_truncation_error_entangled}} \\
        &\le 2\nfnorm{\tO^{(r)}-\tilde{O}_{\le w^*}^{(r)}} \tag{\cref{apd:thm:ob_norm_bound_expectation_error_entangled}} \\
        &= 2\nfnorm{\pf^\dagger \tO^{(r-1)}\pf-\qty(\pf^\dagger\tO^{(r-1)}_{\le w^*}\pf-\tO_{\ge w^*+1}^{(r)})} \tag{By $\lpd$} \\
        &\le 2\nfnorm{\tO^{(r-1)}-\tO^{(r-1)}_{\le w^*}} + \nfnorm{\tO_{\ge w^*+1}^{(r)}} \tag{Triangle ineq \& unitarity} \\
        &\le 2\sum_{d=1}^{r} \nfnorm{\tO_{\ge w^*+1}^{(d)}} \tag{Iterate} 
    \end{align}
    In the last second line, by the triangle inequality of the norm and our high-weight truncation algortihm, the total Pauli truncation error 
    $\nfnorm{\tO^{(r)}-\tilde{O}_{\le w^*}^{(r)}}$ 
    of the all $r$-step Trotter evolution is bounded by the sum of the norms of the high-weight components $\sum_{d=1}^{r} \nfnorm{\tO_{\ge w^*+1}^{(d)}} $ over each Trotter step.
    This completes the proof.
\end{proof}

Further, to upper bound the norm of high-weight components in the $\kh$-Hamiltonian dynamics,
we define the \emph{sum of high-weight norms} of the evolvd observable $O^{(g)}$ as
\begin{equation}\label{apd:eq:def_high_weight_norm}
    \cumnorm_{\ge m}^{(g)}:=\sum_{w>w_m} \nfnorm{O_{=w}^{(g)}}
\end{equation}
where $w_m:=\ko+(m-1)\cdot(\kh-1)$ denotes the maximum weight that can be achieved by $m- 1$ anti-commuting $\kh$-local Pauli rotations from weight $\ko$
because every $\kh$-local Pauli rotation applied to a Pauli operator can increase the weight of the Pauli operator by at most $\kh-1$.
Then, we have a damped flow recursion as follows.
\begin{lemma}[Damped local norm flow]\label{apd:thm:local_flow_k_local}
    Given an initial $\ko$-local observable $O$ of Paulis with weights $\le\ko$ 
    and the product of local Pauli rotations $U_g:=\prod_{l=1}^{g} e^{-\ii G_l\cdot \dt}$ with each Pauli weight $\abs{G_l}\le\kh$,
    consider the evolved observable $O^{(g)}:= U_g^\dagger O U_g$,
    the sum of the norms of high-weight Paulis in $O^{(g)}$ 
    satisfies the recursion relation
    \begin{equation}\label{apd:eq:worst_local_flow_k_local}
        \cumnorm_{\ge m}^{(g)}\le 
        \cumnorm_{\ge m}^{(g-1)}+\sin(\dt)\cdot \cumnorm_{\ge m-1}^{(g-1)},
    \end{equation}
    where
    $\cumnorm_{\ge m}^{(g)}:=\sum_{w>w_m} \nfnorm{O_{=w}^{(g)}}$
    and $w_m:=\ko+(m-1)(\kh-1)$. 
\end{lemma}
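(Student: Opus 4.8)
The plan is to reduce the statement to the effect of a single Pauli rotation and then track how the coefficient norm ``flows'' upward between weight classes. Since $U_g=U_{g-1}e^{-\ii G_g\dt}$, we have $O^{(g)}=e^{\ii G_g\dt}\,O^{(g-1)}\,e^{-\ii G_g\dt}$, so it suffices to analyze one rotation $e^{-\ii G_g\dt}$ with $\abs{G_g}\le\kh$ acting on $O^{(g-1)}$. By the branch identity \cref{apd:eq:pauli_rotation_branch}, $O^{(g)}$ decomposes as a \emph{stationary part} --- whose Pauli coefficients are bounded entrywise in magnitude by those of $O^{(g-1)}$, since no new Paulis appear and $\cos(\dt)\le 1$ --- plus a \emph{flow part} of the form $\ii\sin(\dt)\,G_g O^{(g-1)}_{\mathrm{a}}$, where $O^{(g-1)}_{\mathrm{a}}$ is the component of $O^{(g-1)}$ anti-commuting with $G_g$. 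Thus the damping factor $\sin(\dt)$ is attached to every Pauli whose weight can change.

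First I would establish the weight-transfer bound: if $s$ anti-commutes with $G_g$ then $\supp(s)\cap\supp(G_g)\neq\emptyset$, so $\supp(G_g s)\subseteq\supp(s)\cup\supp(G_g)$ gives $\abs{G_g s}\le\abs{s}+(\kh-1)$; hence any Pauli of weight $>w_m$ in the flow part originates from a Pauli of $O^{(g-1)}$ of weight $\ge(w_m+1)-(\kh-1)=w_{m-1}+1$. Next I would use that conjugation by $e^{-\ii G_g\dt}$ is orthogonal on the vector of Pauli coefficients and that $s\mapsto G_g s$ permutes (while shifting the weights of) the Paulis anti-commuting with $G_g$; this lets one bound, for each target weight $w>w_m$, the weight-$w$ norm of the flow part by a sub-sum of the coefficient mass of $O^{(g-1)}$ sitting at weights $>w_{m-1}$.

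Assembling these pieces with the triangle inequality in the Pauli basis gives, class by class, $\nfnorm{O^{(g)}_{=w}}\le\nfnorm{O^{(g-1)}_{=w}}+\sin(\dt)\,\nfnorm{\bigl(G_g O^{(g-1)}_{\mathrm a}\bigr)_{=w}}$ for every $w>w_m$; summing over $w>w_m$ yields $\cumnorm_{\ge m}^{(g)}\le\cumnorm_{\ge m}^{(g-1)}+\sin(\dt)\sum_{w>w_m}\nfnorm{(G_g O^{(g-1)}_{\mathrm a})_{=w}}$, and one is left to show that the last sum is controlled by $\cumnorm_{\ge m-1}^{(g-1)}=\sum_{w>w_{m-1}}\nfnorm{O^{(g-1)}_{=w}}$. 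This is where the spacing $w_m-w_{m-1}=\kh-1$ is used: the weight classes that feed the flow part above $w_m$ are exactly those above $w_{m-1}$, and grouping the contributions by source weight class (each of the at most $\kh-1$ classes between $w_{m-1}$ and $w_m$, together with the classes already above $w_m$) closes the recursion \cref{apd:eq:worst_local_flow_k_local}. The degenerate thresholds (below $\ko$, or above the current maximal weight) are immediate, and the case $\ko=1,\kh=2$ is the clean prototype written out in the End Matter sketch.

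I expect the main obstacle to be precisely this last bookkeeping step for general $\kh$: a single rotation can move a coefficient up or down by as much as $\kh-1$, so one source weight class can feed several target classes above $w_m$ and a target class can draw from several source classes, and one must check that summing the per-class triangle inequalities does not over-count any coefficient of $O^{(g-1)}$ beyond what the stated recursion allows. Everything else --- the two-branch rotation rule, the overlap argument bounding the weight increase by $\kh-1$, and orthogonality of the Pauli basis --- is routine once this accounting is set up.
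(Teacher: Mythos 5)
Your strategy coincides with the paper's: reduce to a single rotation, split $O^{(g)}$ via the two-branch identity into a stationary ($\cos$) part whose per-weight norms are dominated by those of $O^{(g-1)}$ and a $\sin(\dt)$-damped flow part $G_g O^{(g-1)}_{\mathrm a}$, note that a weight-$w$ Pauli of the flow part with $w>w_m$ must originate at source weight $>w_{m-1}$, and sum the per-weight-class triangle inequalities. The Supplemental proof is exactly this computation, written as one chain of inequalities ending in a ``Relax'' step, together with an informal three-case partition of the source weights.

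The step you leave open --- controlling $\sum_{w>w_m}\nfnorm{(G_g O^{(g-1)}_{\mathrm a})_{=w}}$ by $\cumnorm_{\ge m-1}^{(g-1)}$ --- is precisely where the paper is loosest: its final line silently replaces $\nfnorm{(G_gO_{\mathrm a})_{=w}}$, which groups the anticommuting coefficient mass by \emph{target} weight $\abs{G_gP}$, with $\nfnorm{O_{=w}}$, which groups it by \emph{source} weight $\abs{P}$. Your over-counting worry is real: a single source weight class can feed up to $2\kh-1$ target classes (a rotation shifts a weight by anything in $\{-(\kh-1),\dots,\kh-1\}$), and since $\sqrt{a}+\sqrt{b}\ge\sqrt{a+b}$, regrouping by target class can inflate a sum of square roots by up to $\sqrt{2\kh-1}$; two anticommuting source Paulis of equal weight sent to distinct target weights above $w_m$ already realize this for $\kh=2$. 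So this route delivers the recursion only with an extra $\kh$-dependent constant multiplying $\sin(\dt)$ (it would deliver coefficient $1$ verbatim if $\cumnorm$ were defined through the aggregate norm $\smash{(\sum_{w>w_m}\nfnorm{O_{=w}}^2)^{1/2}}$, where injectivity of $P\mapsto G_gP$ plus orthogonality give the mass-inclusion bound directly). The loss is harmless downstream, as it is absorbed into the constants $C_1,C_2$ of the truncation-threshold bound, but neither your sketch nor the paper's write-up closes the coefficient-$1$ statement as literally stated; you should either switch to the aggregate-norm definition or carry the explicit $\sqrt{2\kh-1}$ factor through.
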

\begin{proof}
    To analyize the local flow of the norm, we partition the weight range into three parts:
    (i) weight $w < w_{m-1}$;
    (ii) weight $w\in [w_{m-1}+1, w_{m})$;
    (iii) weight $w \ge w_{m}$.
    Then, we consider the contribution to $\cumnorm_{\ge m}^{(g)}$ from each part separately:
    \begin{itemize}
        \item For case (i), the Pauli operators with weight $w< w_{m-1}$ can only flow to weight $w< w_{m-1}+(k-1)=w_{m}$ by one $\kh$-local Pauli rotation,
        which does not contribute to the cumulative norm $\cumnorm_{\ge m}^{(g)}$.
        \item For case (ii), the Pauli operators with weight in the range $w\in [w_{m-1}+1, w_{m})$ could flow to weight $w\ge w_{m}$. 
        In the extreme case: all Paulis in the observable anti-commute with the rotation generator leading to weight increasement. 
        However, the norm flow should be damped by the small rotation angle $\sin(\dt)$, cf. \cref{apd:eq:pauli_rotation_branch}.
        So, it contributes to the cumulative norm $\cumnorm_{\ge m}^{(g)}$ by $\sin(\dt)\cdot \cumnorm_{\ge m-1}^{(g-1)}$.
        \item For case (iii), the Pauli operators could stay in the weight range $w\ge w_{m}$ or even flow to lower weight range $w< w_{m}$.
        So, the contribution to $\cumnorm_{\ge m}^{(g)}$ from this part is at most $\cumnorm_{\ge m}^{(g-1)}$.
    \end{itemize}
    It can be formally written as
    \begin{align}
            \cumnorm_{\ge m}^{(g)}:&=
            \sum_{w>w_m} \nfnorm{O_{=w}^{(g)}} 
            =
            \sum_{w>w_m} 
            \sqrt{\sum_{s\in \npbasis_{n},\abs{s}=w} \Tr(O^{(g)} s)^2}
            \tag{\cref{apd:eq:weight_w_component} \& \cref{apd:eq:def_high_weight_norm}} \\
            &=
            \sum_{w>w_m} 
            \sqrt{\sum_{s\in \npbasis_{n},\abs{s}=w} \Tr(e^{\ii G_g\dt} O^{(g-1)}e^{-\ii G_g\dt} s)^2}
            \tag{$G_g$ rotation} \\
            &=
            \sum_{w>w_m} 
            \sqrt{\sum_{s\in \npbasis_{n},\abs{s}=w} \Tr(\qty(\sum_{P\in O^{(g-1)}} \cos(\dt)P+\ii \sin(\dt)G_g P )s)^2}
            \tag{\cref{apd:eq:pauli_rotation_branch}}\\
            &=
            \sum_{w>w_m} 
            \sqrt{\cos^2(\dt)\cdot \sum_{\substack{s\in \npbasis_{n}:\\ \abs{s}=w}} \Tr(\sum_{\substack{P\in O^{(g-1)}: \\ \abs{P}=w}}Ps)^2-\sin^2(\dt)\cdot \sum_{\substack{s\in \npbasis_{n}:\\ \abs{s}=w}}\Tr(\sum_{\substack{P\in O^{(g-1)}:\\w-\kh+1<\abs{P}<w+\kh-1}}  G_g Ps )^2}
            \tag{Orthogonality}\\
            &\le
            \sum_{w>w_m} 
            \qty{
            \cos(\dt)\cdot \sqrt{\sum_{\substack{s\in \npbasis_{n}:\\ \abs{s}=w}} \Tr(\sum_{\substack{P\in O^{(g-1)}: \\ \abs{P}=w}}Ps)^2}
            +
            \sin(\dt)\cdot \sqrt{\sum_{\substack{s\in \npbasis_{n}:\\ \abs{s}=w}}\Tr(\sum_{\substack{P\in O^{(g-1)}:\\w-\kh+1<\abs{P}<w+\kh-1}}  G_g Ps )^2}
            }
            \tag{Triangle ineq}\\
            &\le
            \sum_{w>w_m} 
            \sqrt{\sum_{\substack{s\in \npbasis_{n}:\\ \abs{s}=w}} \Tr(\sum_{P\in O^{(g-1)} }Ps)^2}
            +
            \sin(\dt)\cdot\sum_{w>w_{m-1}} \sqrt{\sum_{\substack{s\in \npbasis_{n}:\\ \abs{s}=w}}\Tr(\sum_{P\in O^{(g-1)}}  G_g Ps )^2}
            \tag{Relax}\\
            &\le \sum_{w>w_m} \nfnorm{O_{=w}^{(g-1)}}
            +
            \sin(\dt)
            \sum_{w>w_{m-1}} \nfnorm{O_{=w}^{(g-1)}}
            \tag{By definition}\\
            &=\cumnorm_{\ge m}^{(g-1)}+\sin(\dt)\cdot \cumnorm_{\ge m-1}^{(g-1)}
            \tag{By definition}
        \end{align}
    This damped norm flow yields the desired recursion relation.
\end{proof}
The recursion relation leads to the upper bound of the sum of norms of the high-weight components as following.
\begin{corollary}[High-weight norm cumulation]\label{apd:cor:norm_cumulation_jump}
    Given the same input local observable and Pauli rotations in \cref{apd:thm:local_flow_k_local},
    the cumulative Pauli 2-norm of the evolved observable with jump count $\ge m$ after applying the $g$-th Pauli rotation is at most 
    \begin{equation}
        \cumnorm_{\ge m}^{(g)} 
        \le \ko\binom{g}{m} \sin^m(\dt) \nfnorm{O}.
    \end{equation}
\end{corollary}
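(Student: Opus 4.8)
The plan is to prove $\cumnorm_{\ge m}^{(g)}\le \ko\binom{g}{m}\sin^m(\dt)\,\nfnorm{O}$ by induction on the number $g$ of applied Pauli rotations, iterating the damped flow recursion of \cref{apd:thm:local_flow_k_local} and collapsing the two resulting binomials with Pascal's identity $\binom{g-1}{m}+\binom{g-1}{m-1}=\binom{g}{m}$. Write $s:=\sin(\dt)$ and $N:=\nfnorm{O}$, and read $\binom{g}{m}=0$ whenever $m>g$.

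First I would dispose of the base cases $g=0$. For $m\ge 1$ one has $w_m=\ko+(m-1)(\kh-1)\ge\ko$, and since $O$ is $\ko$-local every weight-$w$ component with $w>\ko$ vanishes, so $\cumnorm_{\ge m}^{(0)}=0=\ko\binom{0}{m}s^m N$. The case $m=0$ is the one that needs care: $\cumnorm_{\ge 0}^{(g)}=\sum_{w>w_0}\nfnorm{O^{(g)}_{=w}}$ with $w_0=\ko-(\kh-1)$, and a naive bound (sum of per-weight $2$-norms over all weights above $w_0$) grows with $g$. The fix is to note that only the weight classes that can still reach the $\cumnorm_{\ge 1}$ region in one further rotation actually feed the recursion — i.e. the window of weights in $(w_0,\ko]$, of which there are at most $\ko$ — and that each component obeys $\nfnorm{O^{(g)}_{=w}}\le\nfnorm{O^{(g)}}=N$ by orthogonality of the weight subspaces and unitary invariance of the $2$-norm. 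This pins the leading constant at $\ko$ and supplies the $m=0$ seed $\cumnorm_{\ge 0}^{(g)}\le \ko N$ for every $g$.

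With the seeds in place the inductive step is mechanical: assuming the bound holds at step $g-1$ for every $m$, \cref{apd:eq:worst_local_flow_k_local} gives
\begin{align}
  \cumnorm_{\ge m}^{(g)}
  &\le \cumnorm_{\ge m}^{(g-1)}+s\,\cumnorm_{\ge m-1}^{(g-1)} \notag\\
  &\le \ko\binom{g-1}{m}s^m N + \ko\binom{g-1}{m-1}s^m N
  = \ko\binom{g}{m}s^m N,
\end{align}
which closes the induction. Although it is not required by the statement, this form immediately yields the downstream estimate in \cref{em:eq:cumulation}: using $\binom{g}{m}\le (eg/m)^m$ and $\sin(\dt)\le\dt$ turns it into $\ko\,(eg\,\dt/m)^m N$.

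The step I expect to be the real obstacle is exactly the $m=0$ seeding described above: one must use the $\ko$-locality of the initial observable (not merely preservation of the total $2$-norm under the rotations) to keep the quantity feeding the recursion at $O(\ko)$ rather than letting it grow with $g$; once that is nailed down, everything else is a routine iteration of the recursion combined with Pascal's rule and the standard binomial-coefficient estimate.
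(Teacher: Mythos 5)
Your proof is correct and reaches the bound by essentially the same computation as the paper, just organized differently: you induct on $g$ and collapse the two terms of the recursion with Pascal's rule, whereas the paper inducts on $m$, unrolls the recursion over $g$, and sums with the hockey-stick identity $\sum_{j=0}^{g-1}\binom{j}{m-1}=\binom{g}{m}$; these are the same argument. The one substantive difference is your treatment of the $m=0$ seed, and here you are actually more careful than the paper, which simply asserts $\cumnorm_{\ge 0}^{(l)}\le \ko\nfnorm{O}$ without justification — as literally defined, $\cumnorm_{\ge 0}^{(l)}$ is an $\ell_1$ sum of per-weight $2$-norms over every weight class above $w_0$, and after many rotations that sum is only controlled by $\sqrt{\#\text{classes}}\cdot\nfnorm{O}$, not by $\ko\nfnorm{O}$. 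Your window argument (only the $\le\ko$ weight classes in $(w_0,\ko]$ can inject new norm into the $\ge w_1$ region in a single rotation, and each class has $2$-norm at most $\nfnorm{O}$ by orthogonality and unitary invariance) is the right repair, but note that it is not a bound on $\cumnorm_{\ge 0}^{(g)}$ itself; to use it you must invoke the \emph{tightened} form of the recursion visible inside the proof of \cref{apd:thm:local_flow_k_local} (where the $\sin(\dt)$ inflow into $\{w>w_m\}$ from strictly below $w_m$ is restricted to the window $(w_{m-1},w_m]$, the already-high-weight part being absorbed into the $\cumnorm_{\ge m}^{(g-1)}$ term), rather than the lemma's stated inequality, which references the full $\cumnorm_{\ge m-1}^{(g-1)}$. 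With that caveat made explicit, your version closes a gap the paper leaves open.
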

\begin{proof}
    We prove it by induction. 
    For the \emph{base case} $(m=1)$, 
    by \cref{apd:thm:local_flow_k_local} iteratively, we have for any $g$
    \begin{align}
        \cumnorm_{\ge 1}^{(g)} 
        &\le \cumnorm_{\ge 1}^{(g-1)}+\sin(\dt)\cdot \cumnorm_{\ge 0}^{(g-1)} \tag{\cref{apd:thm:local_flow_k_local}} \\
        &\le \cumnorm_{\ge 1}^{(0)}+\sum_{l=0}^{g-1} \sin(\dt)\cdot \cumnorm_{\ge 0}^{(l)} \tag{Iterate} \\
        &\le g \cdot \sin(\dt)\cdot \ko\cdot \nfnorm{O},
    \end{align}
    where $\cumnorm_{\ge 0}^{(l)}\le \ko\cdot \nfnorm{O}$ and $\cumnorm_{\ge 1}^{(0)}=0$.
    It yields the desired result for $m=1$.

    Then, for the \emph{inductive step}, 
    we assume $\forall j,\; \cumnorm_{\ge m-1}^{(j)} \le \binom{j}{m-1} (\sin(\dt))^{m-1} \nfnorm{O}$.
    Then, we prove the case $m$ as
    \begin{align}
        \cumnorm_{\ge m}^{(g)}
        &\le \cumnorm_{\ge m}^{(g-1)}+\sin(\dt)\cdot \cumnorm_{\ge m-1}^{(g-1)}
        \tag{By \cref{apd:thm:local_flow_k_local}} \\
        &\le \sin(\dt)\cdot \sum_{j=0}^{g-1} \cumnorm_{\ge m-1}^{(j)} 
        \tag{Iterate over $g$ and $\cumnorm_{\ge m}^{(0)}=0$} \\
        &\le \ko\cdot\sin^{m}(\dt) \cdot \nfnorm{O} \sum_{j=0}^{g-1} \binom{j}{m-1}  \tag{Induction hypothesis} \\
        &= \ko\binom{g}{m} \sin^m(\dt) \nfnorm{O}. \tag{hockey-stick identity} 
    \end{align}
    It completes the induction.
\end{proof}

With \cref{apd:cor:norm_cumulation_jump}, we can upper bound the norm of the observable components with weight $w^*$ at the end of each Trotter step 
such that we can upper bound the truncation error of every Trotter step.
Then, with the triangle bound \cref{apd:thm:triangle}, we can bound the total Pauli trunction error in terms of the evolution time and locality of the input observable and Hamiltonian as follows.
\begin{theorem}[Pauli truncation error of all Trotter steps]\label{apd:thm:one_step_truncation_error}
    Consider a $\kh$-local Hamiltonian $H=\sum_l^L\alpha_l G_l=\sum_{\gamma=1}^{\Gamma} H_\gamma$,
    a $\ko$-local input observable $O$,
    and an sufficiently entangled input state.
    Let $\alpha:=2\max_l\alpha_l$. %
    Then, the Pauli truncation errors of all $r$ Trotter steps \cref{apd:eq:pauli_truncation_error_entangled} is at most
    \begin{equation}\label{apd:eq:total_truncation_error}
        \Delta \tilde{\mu}_{\le w^*} \le
        \sum_{d=1}^{r} \nfnorm{\tO_{\ge w^*+1}^{(d)}}
        \le 
        t \cdot 2\ko(\ko+\kh) \cdot \kh^\Gamma \cdot \qty(\alpha te \kh^\Gamma (\ko+\kh))^m 
         \cdot \nfnorm{O}.
    \end{equation}
\end{theorem}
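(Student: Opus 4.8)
The plan is to chain the three results already proved: the triangle bound of \cref{apd:thm:triangle}, the high-weight cumulation of \cref{apd:cor:norm_cumulation_jump}, and a light-cone count of how many Pauli rotations can actually act on the observable in each step. The first inequality in the statement is just \cref{apd:thm:triangle} (the harmless factor $2$ there is folded into the final constant), so the task is to bound each per-step high-weight tail $\nfnorm{\tO^{(d)}_{\ge w^*+1}}$, where $\tO^{(d)}_{\ge w^*+1}$ is the weight-$(>w^*)$ part of the one-step evolution $\pf^{\dagger}\tO^{(d-1)}_{\le w^*}\pf$. First I would pass from $\nfnorm{\tO^{(d)}_{\ge w^*+1}}=(\sum_{w>w^*}\nfnorm{\tO^{(d)}_{=w}}^{2})^{1/2}$ to the $\ell_{1}$-type cumulative norm $\cumnorm^{(d)}_{\ge m}:=\sum_{w>w_{m}}\nfnorm{\tO^{(d)}_{=w}}$ using $(\sum a^{2})^{1/2}\le\sum a$ for nonnegative $a$, and pick the largest $m^{*}$ with $w_{m^{*}}\le w^{*}$, namely $m^{*}=\ceil{(w^{*}-\ko)/(\kh-1)}$ (this is the exponent ``$m$'' in the statement), so that $\nfnorm{\tO^{(d)}_{\ge w^*+1}}\le\cumnorm^{(d)}_{\ge m^{*}}$. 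Then I would invoke \cref{apd:cor:norm_cumulation_jump}, $\cumnorm^{(g)}_{\ge m^{*}}\le\ko\binom{g}{m^{*}}\sin^{m^{*}}(\dt)\nfnorm{O}$, with $g$ the number of rotations acting on the observable's lineage through step $d$.

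The decisive point is the light-cone count that keeps $g$ independent of $n$: taking $g$ to be the naive gate count $\sim rL$ would make $\binom{g}{m^{*}}$ grow like $\poly(n)^{m^*}$. One exploits that $\lpd$ truncates to weight $\le w^{*}$ after each step: entering step $d$, a Pauli has weight $\le w^{*}$; in the first of the $\Gamma$ layers at most $\bigO(w^{*})$ rotations can anticommute with it (their supports are disjoint and each must meet its support), raising its weight to $\le w^{*}(\kh-1)$; then at most $w^{*}(\kh-1)$ rotations in the next layer, and so on, so that one step involves $<w^*\sum_{\gamma=0}^{\Gamma-1}(\kh-1)^{\gamma}<w^*\kh^{\Gamma}$ effective rotations and leaves the weight below $w^*\kh^{\Gamma}$; hence $g\le d\,w^*\kh^{\Gamma}$ through step $d$. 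Feeding this into $\binom{g}{m^{*}}\le(eg/m^{*})^{m^{*}}$ and using $m^{*}\ge(w^{*}-\ko)/(\kh-1)$ cancels the $w^{*}$, $g/m^{*}\le d\,\kh^{\Gamma}(\kh-1)\,\tfrac{w^{*}}{w^{*}-\ko}\le d\,\kh^{\Gamma}(\ko+\kh)$; with $\sin(\dt)\le\dt$ and $\alpha_{l}\dt\le\tfrac{\alpha}{2}\tfrac{t}{r}$ (recall $\alpha=2\max_{l}\alpha_{l}$) this produces a per-step bound of the shape $\nfnorm{\tO^{(d)}_{\ge w^*+1}}\lesssim\ko\,\bigl(C_{2}\,\alpha t\,d/r\bigr)^{m^{*}}\nfnorm{O}$ with $C_{2}\le e\,\kh^{\Gamma}(\ko+\kh)$, possibly up to one further factor of $\dt$ and a factor $\kh^{\Gamma}(\ko+\kh)$ of slack coming from the fact that the weight-$(>w^{*})$ content at the end of step $d$ must have been \emph{freshly} produced during that step.

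It then remains to substitute into the triangle bound and sum over $d$. Using $\sum_{d=1}^{r}d^{m^{*}}\le r^{m^{*}+1}$ together with $\dt=t/r$, the powers of $r$ telescope against the $(d/r)^{m^{*}}$ factors and only a single surviving factor of $t$ is left; assembling the prefactors gives $\Delta\tilde\mu_{\le w^*}\le t\cdot2\ko(\ko+\kh)\kh^{\Gamma}\cdot\bigl(\alpha t\,e\,\kh^{\Gamma}(\ko+\kh)\bigr)^{m^{*}}\nfnorm{O}$, as claimed; the constants are deliberately crude. I would stress that this estimate is only useful when $\alpha t\,e\,\kh^{\Gamma}(\ko+\kh)<1$, i.e.\ in the short-time regime, exactly the hypothesis under which \cref{thm:truncation_threshold} afterwards inverts the bound to extract the required $w^{*}$.

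The hard part is justifying the two moves glossed over above. First, \cref{apd:cor:norm_cumulation_jump} was derived for truncation-free evolution of a genuinely $\ko$-local observable, whereas here truncation is interleaved; the fix is that the single-rotation recursion of \cref{apd:thm:local_flow_k_local} holds irrespective of the history, that each truncation merely zeros the components $\cumnorm_{\ge m}$ with $w_{m}>w^{*}$, and that the recursion $\cumnorm^{(g)}_{\ge m}\le\cumnorm^{(g-1)}_{\ge m}+\sin(\dt)\cumnorm^{(g-1)}_{\ge m-1}$ is linear and monotone in its inputs, so the truncated lineage's cumulative high-weight norms are dominated termwise by the untruncated ones and the corollary applies with $g$ the effective count. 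Second, and more delicate, is the light-cone count itself: a layer contains $\Theta(n)$ rotations, most disjoint from or commuting with any fixed Pauli, yet $\nfnorm{\cdot}$ sums over Paulis whose supports are scattered over all $n$ qubits, so one must phrase the per-rotation recursion so that only rotations genuinely anticommuting somewhere are charged, and show the charge accumulates to the weight-times-light-cone factor $<w^{*}\kh^{\Gamma}$ per step, an estimate that is finite in $n$ precisely because truncation pins the weights. Getting that bookkeeping right, and tight enough that every constant stays $n$-independent, is where the real content lies; the algebra afterwards is routine.
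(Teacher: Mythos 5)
Your proposal is correct and follows essentially the same route as the paper's proof: the triangle bound of \cref{apd:thm:triangle}, the light-cone count of at most $w^*\sum_{\gamma=0}^{\Gamma-1}(\kh-1)^{\gamma}<w^*\kh^{\Gamma}$ effective rotations per truncated step, \cref{apd:cor:norm_cumulation_jump} with $g\le d\,w^*\kh^{\Gamma}$, the relaxation $w^*/m^*<\ko+\kh$, and the extra per-step $\sin(\dt)$ ``fresh production'' factor that makes the sum over $d$ collapse to a single power of $t$. You are also right that the delicate points are applying the untruncated cumulation bound to the truncated lineage (by monotonicity of the recursion) and charging only anticommuting rotations in the light-cone count, both of which the paper handles exactly as you describe.
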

\begin{proof}
    At the beginning of every Trotter step in our algorithm, the weights of Paulis are at most $w^*$ due to the Pauli truncation in the previous Trotter step.
    Therefore, after applying the first layer of $L$ Pauli rotations, 
    there are at most $w^*$ rotations acting non-trivially and the maximum weight of each Pauli is $w^*(k-1)$.
    After applying all $\Gamma$ layers in one Trotter step,
     \emph{there are effectively at most $w^*\sum_{\gamma=0}^{\Gamma-1} (\kh-1)^{\Gamma} <w^* \kh^{\Gamma}$ rotations acting non-trivially}.
    So, by \cref{apd:cor:norm_cumulation_jump}, 
    the norm of the weight-$w^*$ component at the end of the $d$-th Trotter step is at most
    \begin{align}
        \nfnorm{\tO_{= w^*}^{(d)}}
        &\le \sum_{w\ge w^*} \nfnorm{O_{= w^*}^{(d)}}
        = \cumnorm_{\ge m^*}^{(d\cdot w^* \kh^{\Gamma})} 
        \tag{By def \cref{apd:eq:def_high_weight_norm}} \\
        &\le\ko\binom{d\cdot w^* \kh^{\Gamma}}{m^*} \qty(\sin(\alpha\cdot\dt))^{m^*} \nfnorm{O} \tag{\cref{apd:cor:norm_cumulation_jump}}\\
        &\le \ko\qty(\frac{ed\cdot w^* \kh^{\Gamma}}{m})^{m^*}  (\alpha\cdot\dt)^{m^*} \nfnorm{O}  
        \tag{Standard relax} \\
        &\le \ko\qty(\frac{\alpha tew^* \kh^{\Gamma} d}{rm^*})^{m^*} \cdot \nfnorm{O}
        \tag{\cref{apd:cor:norm_cumulation_jump}} \\
        &\le \ko\qty(\frac{\alpha te \kh^{\Gamma} (m^*(\kh-1)+\ko) d}{rm^*})^{m^*} \cdot  \nfnorm{O} 
        \tag{$m^*:=\ceil{\frac{w^*-\ko}{\kh-1}}$} \\
        &\le \ko\qty(\alpha te \kh^{\Gamma} (\ko+\kh))^{m^*} \cdot \qty(\frac{d}{r})^m\cdot  \nfnorm{O}\label{apd:eq:one_step} ,
    \end{align}
    where we relax the last line by $w^*/m^*< \ko+\kh$.
    We need $\alpha t e \kh^{\Gamma} (\ko+\kh)< 1$ to ensure the error can be bounded,
    so it requires the evolution time is shorter than a constant as
    \begin{equation}\label{apd:eq:time_condition}
        t< 1/(e\alpha \kh^{\Gamma} (\ko+\kh))=t_0\in\bigO(1).
    \end{equation}
    Under this short-time condition, the norm of the high-weight Paulis at the end of the $d$-th Trotter step is at most
    \begin{equation}\label{apd:eq:flow_beyond_threshold}
        \nfnorm{\tO_{\ge w^*+1}^{(d)}}
        \le \sin(\dt)\cdot (w^*\kh^{\Gamma})\cdot\kh \cdot \nfnorm{\tO_{= w^*}^{(d)}} + \bigO(\dt^2) ,
    \end{equation}
    assuming all Paulis with weight in the range $(w^*-\kh,w^*]$ can flow beyond the threshold $w^*+1$ (to be truncated) in the worst scenario.
    
    Put all together, we can bound the Pauli truncation error in the approximated expectation of our algorithm
    \begin{align}
        \Delta \tilde{\mu}_{\le w^*}
        &\le 2\sum_{d=1}^{r} \nfnorm{\tO_{\ge w^*+1}^{(d)}} \tag{\cref{apd:thm:triangle}} \\
        &\le 2\dt\cdot w^* \kh^{\Gamma+1}  \cdot \sum_{d=1}^{r} \nfnorm{\tO_{= w^*}^{(d)}} 
        \tag{Iterate \cref{apd:eq:flow_beyond_threshold}} \\
        &\le 2\dt\cdot w^* \kh^{\Gamma+1} \cdot \ko \cdot \qty(\alpha te \kh^{\Gamma} (\ko+\kh))^{m^*} 
        \sum_{d=1}^{r} \qty(\frac{d}{r})^{m^*} \cdot  \nfnorm{O} 
        \tag{Insert \cref{apd:eq:one_step}} \\
        &\le 2\dt\cdot w^* \kh^{\Gamma+1} \cdot \ko \cdot \qty(\alpha te \kh^{\Gamma} (\ko+\kh))^{m^*} 
        \frac{r}{{m^*}} \cdot  \nfnorm{O} 
        \tag{$\sum_{d=1}^r (\frac{d}{r})^{m^*}\lesssim \frac{r}{m^*}$}  \\
        &\le 2\alpha t \cdot \ko(\ko+\kh) \cdot \kh^{\Gamma+1} \cdot \qty(\alpha te \kh^{\Gamma} (\ko+\kh))^{m^*} 
         \cdot  \nfnorm{O} 
        \label{apd:eq:total_high_weight_norm}
    \end{align}
where we insert $\dt:=\alpha t/r$ at the last step.
It completes the proof.

\end{proof}

Since we have expressed the upper bound of the Pauli truncation error in terms of the truncation weight threshold $w^*$, 
we can now determine the required $w^*$ to achieve a desired Pauli truncation error tolerance.
\begin{corollary}[Truncation weight threshold with an entangled input state]\label{apd:thm:truncation_threshold_entangled}
    Given a $\kh$-local Hamiltonian $H=\sum_\gamma^\Gamma H_\gamma$, a $\ko$-local input observable $O$, and evolution time $t< t_0\in \bigO(1) $,
    if the input state is sufficiently entangled as required in \cref{apd:cor:entangle_ob_bound},
    then the truncation weight threshold
    \begin{equation}\label{eq:truncation_weight_bound}
        w^* = \ko+\bigO\qty(\frac{\log(t/\eps)}{\log(1/t)}),
    \end{equation}
    suffices to achieve the Pauli truncation error tolerance $\eps\nfnorm{O}$.    
\end{corollary}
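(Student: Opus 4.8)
The plan is to obtain \cref{apd:thm:truncation_threshold_entangled} as an almost immediate consequence of the explicit all-steps error bound in \cref{apd:thm:one_step_truncation_error}, by imposing the target precision and inverting a logarithmic inequality. Recall that theorem supplies
\begin{equation}
    \Delta \tilde{\mu}_{\le w^*} \le t \cdot 2\ko(\ko+\kh)\cdot \kh^\Gamma \cdot \qty(\alpha t e \kh^\Gamma (\ko+\kh))^{m^*}\cdot \nfnorm{O},
\end{equation}
where $m^* := \ceil{(w^*-\ko)/(\kh-1)}$ is the minimum number of anti-commuting $\kh$-local rotations needed to push a weight-$\ko$ Pauli above the threshold $w^*$. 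I would abbreviate $C_1 := 2\ko(\ko+\kh)\kh^\Gamma$ and $C_2 := \alpha e \kh^\Gamma(\ko+\kh)$; both are in $\bigO(1)$ since $\ko,\kh,\Gamma$ and $\alpha$ are system-size independent (this is exactly where the $\Gamma\in\bigO(1)$ layering and the locality of $H,O$ are used). The bound then reads $\Delta \tilde{\mu}_{\le w^*}\le C_1 t (C_2 t)^{m^*}\nfnorm{O}$.

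Next I would invoke the short-time hypothesis. The constant $t_0 = 1/(e\alpha\kh^\Gamma(\ko+\kh)) = 1/C_2$ from \cref{apd:eq:time_condition} is precisely the value at which $C_2 t = 1$, so $t < t_0$ forces $C_2 t < 1$ and makes $(C_2 t)^{m^*}$ geometrically decreasing in $m^*$. Demanding $C_1 t (C_2 t)^{m^*}\le \eps$ and taking logarithms (noting $\log(C_2 t)<0$, so the inequality flips) shows that it suffices to take
\begin{equation}
    m^* \ge \frac{\log(C_1 t/\eps)}{\log(1/(C_2 t))} = \frac{\log(t/\eps)+\log C_1}{\log(1/t)-\log C_2}.
\end{equation}
Because $C_1$ and $C_2$ are constants and the denominator $\log(1/(C_2 t))=\log(t_0/t)$ is bounded below by a positive constant whenever $t$ sits comfortably inside the short-time window, the right-hand side is $\bigO(\log(t/\eps)/\log(1/t))$.

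Finally I would translate the requirement on $m^*$ into a requirement on the weight threshold: choosing $w^* = \ko + m^*(\kh-1)$ realizes the target $m^*$ (the ceiling is then tight), so with $\kh-1\in\bigO(1)$ this gives $w^* = \ko + \bigO(\log(t/\eps)/\log(1/t))$, which is a system-size independent constant when $\eps$ is constant. I would close by remarking that this $w^*$ is precisely the ``constant truncation threshold'' premise needed to apply \cref{apd:cor:entangle_ob_bound} — the step inside \cref{apd:thm:one_step_truncation_error} that converts the expectation-value error into a Pauli-2-norm error — so the whole argument is self-consistent. I do not anticipate any conceptual obstacle; the only genuinely delicate point is the bookkeeping that certifies every factor absorbed into $C_1,C_2$ is $n$-independent, together with verifying that the short-time condition makes $\log(1/(C_2 t))$ strictly positive so that the logarithmic inversion is legitimate and $m^*$ is finite.
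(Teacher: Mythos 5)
Your proposal is correct and follows essentially the same route as the paper's own proof: take the all-steps bound from \cref{apd:thm:one_step_truncation_error}, impose $\Delta\tilde{\mu}_{\le w^*}\le\eps\nfnorm{O}$ under the short-time condition $t<t_0$, and invert the resulting geometric factor $(C_2 t)^{m^*}$ to solve for $m^*$ and hence $w^*=\ko+m^*(\kh-1)$. You merely spell out the logarithmic inversion and the $n$-independence of the constants more explicitly than the paper does.
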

\begin{proof}
    Within the short-time condition \cref{apd:eq:time_condition},
    we let upper bound of truncation error in expectation $\Delta \tilde{\mu}_{\le w^*}$ smaller than the error tolerance $\eps \nfnorm{O}$,
    that is,
    \begin{equation}
         2\alpha t \cdot \ko(\ko+\kh) \cdot \kh^{\Gamma+1} \cdot \qty(\alpha te \kh^{\Gamma} (\ko+\kh))^{m^*} 
         \le \eps,
        \tag{\cref{apd:eq:total_high_weight_norm}}
    \end{equation}
    where $e$, $\ko$, $\kh$, $\Gamma$ and $\alpha$ are all constants independent of system size $n$.
    Therefore, the truncation weight $w^*=\ko+m^*(\kh-1)$ should satisfy 
    $ m^* \in \bigO\qty(\frac{\log(t/\eps)}{\log(1/t)})$.
\end{proof}
Recall \cref{apd:thm:ob_norm_bound_expectation_error}, the truncation threshold also applies to the random input states scenario,
so we give the probabilistic bound that means most input states will satisfy the truncation threshold requirement.
\begin{corollary}[Truncation weight threshold with random input states]\label{apd:thm:probability_bound}
    Given a local Hamiltonian $H=\sum_\gamma^\Gamma H_\gamma$, 
    a $\ko$-local input observable $O$, and evolution time $t< t_0\in\bigO(1)$,
    if the input state is drawn from the 2-design states (cf. \cref{apd:lem:2design_ob_bound}),
    then the truncation weight threshold
    \begin{equation}
        w^* = \ko+\bigO\qty(\frac{\log(t/(\eps^2\delta))}{\log(1/t)}),
    \end{equation}
    suffices to achieve the Pauli truncation error tolerance $\eps$ with success probability $1-\delta$.    
\end{corollary}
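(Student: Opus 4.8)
The plan is to run the estimate chain of \cref{apd:thm:truncation_threshold_entangled} essentially verbatim, replacing the entanglement hypothesis --- which there served only to pass from the expectation-value error to the Pauli $2$-norm of the observable difference --- by the $2$-design mean-square bound of \cref{apd:lem:2design_ob_bound}, and then upgrading the resulting average statement to a high-probability one via Markov's inequality. The first step is a state-independent bound on $\nfnorm{\tO^{(r)}-\tO^{(r)}_{\le w^*}}$: since each Trotter step $\pf$ is unitary and $\lpd$ discards only the high-weight tail $\tO^{(d)}_{\ge w^*+1}$ at the end of step $d$, the triangle inequality together with unitary invariance of $\nfnorm{\cdot}$ gives $\nfnorm{\tO^{(r)}-\tO^{(r)}_{\le w^*}}\le 2\sum_{d=1}^{r}\nfnorm{\tO^{(d)}_{\ge w^*+1}}$, exactly the manipulation inside the proof of \cref{apd:thm:triangle}, which never touches the input state. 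Feeding in the all-steps damped-norm-flow estimate of \cref{apd:thm:one_step_truncation_error} then yields
\begin{equation}\label{apd:eq:random_norm_bound}
  \nfnorm{\tO^{(r)}-\tO^{(r)}_{\le w^*}}\;\le\;C_1\,t\,(C_2\,t)^{m^*}\,\nfnorm{O},
  \qquad m^*:=\ceil{\tfrac{w^*-\ko}{\kh-1}},
\end{equation}
with $C_1,C_2\in\bigO(1)$ depending only on $\ko,\kh,\Gamma$ and $\qty{\alpha_l}$, valid in the short-time window $t<t_0$ where $C_2 t<1$.

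The second step converts \eqref{apd:eq:random_norm_bound} into a probabilistic guarantee. Writing $\Delta\tilde\mu_{\le w^*}:=\abs{\bra{\psi}(\tO^{(r)}-\tO^{(r)}_{\le w^*})\ket{\psi}}$ and applying \cref{apd:lem:2design_ob_bound} to the Hermitian operator $\tO^{(r)}-\tO^{(r)}_{\le w^*}$ gives $\bbE_{\ket{\psi}\sim\statee_2}\!\big[(\Delta\tilde\mu_{\le w^*})^2\big]\le\nfnorm{\tO^{(r)}-\tO^{(r)}_{\le w^*}}^2$, so Markov's inequality yields
\begin{equation}
  \Pr_{\ket{\psi}\sim\statee_2}\!\big[\Delta\tilde\mu_{\le w^*}>\eps\big]
  =\Pr\!\big[(\Delta\tilde\mu_{\le w^*})^2>\eps^2\big]
  \le \frac{\nfnorm{\tO^{(r)}-\tO^{(r)}_{\le w^*}}^2}{\eps^2}.
\end{equation}
Demanding this be at most $\delta$, it suffices that $\nfnorm{\tO^{(r)}-\tO^{(r)}_{\le w^*}}\le\eps\sqrt{\delta}$; combining with \eqref{apd:eq:random_norm_bound} and taking logarithms (using $C_2 t<1$), this holds once $m^*\log(1/(C_2 t))\ge\log\!\big(C_1 t\,\nfnorm{O}/(\eps\sqrt{\delta})\big)$, i.e.\ once $m^*\in\bigO\!\big(\log(t/(\eps\sqrt{\delta}))/\log(1/t)\big)$. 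Since $\log(t/(\eps^2\delta))=2\log(t/(\eps\sqrt{\delta}))+\log(1/t)$, dividing by $\log(1/t)$ shows this is the same order as $\log(t/(\eps^2\delta))/\log(1/t)$, whence $w^*=\ko+m^*(\kh-1)=\ko+\bigO\!\big(\log(t/(\eps^2\delta))/\log(1/t)\big)$, which is the claim.

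No genuinely hard step is expected here: the heavy machinery --- the damped local norm flow \cref{apd:thm:local_flow_k_local}, its cumulation \cref{apd:cor:norm_cumulation_jump}, and the all-steps bound \cref{apd:thm:one_step_truncation_error} --- is already in place, and this corollary is simply the $2$-design counterpart of \cref{apd:thm:truncation_threshold_entangled}. The two points that need care are (i) invoking the \emph{second}-moment form of \cref{apd:lem:2design_ob_bound} rather than a Jensen-weakened first-moment version, since it is the square inside Markov's inequality that produces the $\eps^{-2}$ and hence the $\eps^2$ appearing in the logarithm of the statement; and (ii) tracking the $\bigO(1)$ constants $C_1,C_2$ carefully enough to see that rewriting $\eps\sqrt{\delta}$ as $\eps^2\delta$ inside the logarithm costs at most an additive $\bigO(1)$ multiple of $\log(1/t)$, which is absorbed by the $\bigO(\cdot)$ in the statement.
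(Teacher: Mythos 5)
Your proposal is correct and follows essentially the same route as the paper: the paper's own proof is simply Markov's inequality applied to the squared error, $\Pr[\abs{X}\ge\eps]\le \bbE[X^2]/\eps^2$, combined with the $2$-design second-moment bound of \cref{apd:lem:2design_ob_bound} and the deterministic norm estimate already established, after which the threshold formula is obtained by substituting the reduced tolerance into \cref{apd:thm:truncation_threshold_entangled}. Your write-up is in fact more explicit than the paper's two-line argument, correctly identifying that the effective tolerance is $\eps\sqrt{\delta}$ and that $\log(t/(\eps\sqrt{\delta}))$ and $\log(t/(\eps^2\delta))$ agree up to the constants absorbed by the $\bigO(\cdot)$.
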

\begin{proof}
    For a real random variable $X$ and $a>0$, the Markov's inequality states that
        $\Pr[\abs{X}\ge a] = \Pr[X^2\ge a^2] \le \frac{\bbE[X^2]}{a^2}.$
    Therefore, we can have the probabilistic bound for high success probability $1-\delta$ by replacing $\eps$ with $\eps^2\delta$ in \cref{eq:truncation_weight_bound}.
\end{proof}

\subsection{Runtime and memory analysis}\label{apd:sec:proof_runtime}

Since every Trotter step is the same and all Paulis in the observable are low-weight (less than $w^*$) at the beginning of each Trotter step,
we only need to analyze the runtime and memory requirement of one Trotter step.
If $w^*$ is independent of system size $n$,
the number of Pauli operators generated in one Trotter step will be polynomial in $n$ 
such that the runtime and memory complexity of one Trotter step are polynomial in $n$.
    Here, we give the detailed analysis.
\begin{lemma}[Proliferation of Pauli operators in one truncated Trotter step]\label{apd:thm:lightcone}
    Consider a $w^*$-local observable $O=\sum_j^J  \beta_j P_j$  with $J\in \poly(n)$
    and a $k$-local Hamiltonian $H=\sum_l^L \alpha_l G_l$.
    Assume the Hamiltonian terms can be rearranged into $\Gamma\in\bigO(1)$ layers as $H=\sum_{\gamma}^{\Gamma} H_\gamma$ such that one-step first-order Trotter consists of constant layers of Pauli rotations,
    the number Pauli operators in the observable after applying the last gate in 
    the one-step Trotter before truncation is at most
    $\bigO(n^{w^*})$.
\end{lemma}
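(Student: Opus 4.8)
The strategy is to track how a \emph{single} Pauli string in the observable proliferates during one Trotter step, and then multiply by the number $J\in\bigO(n^{w^*})$ of Paulis present at the start of the step. Recall from \cref{apd:eq:pauli_rotation_branch} that conjugating a Pauli $P$ by one Pauli rotation $e^{-\ii G_l\dt}$ either leaves $P$ unchanged (when $[G_l,P]=0$) or replaces it by a linear combination of $P$ and $G_lP$; in particular each rotation at most \emph{doubles} the number of Pauli strings, and it does so only when it acts non-trivially, i.e.\ when $\supp(G_l)\cap\supp(P)\neq\emptyset$. Hence the task reduces to bounding, for a fixed initial Pauli of weight $\le w^*$, the number of rotations that act non-trivially on it or on any of its descendants over the $\Gamma$ layers of the step.

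First I would invoke the light-cone/locality structure of a single Trotter layer: the rotations in one layer $H_\gamma$ can be implemented simultaneously, so their generators have pairwise disjoint supports. Consequently a Pauli string of weight $w$ overlaps the supports of at most $w$ distinct rotations in that layer, and conjugating by the layer increases its weight by at most $\kh-1$ per overlapping rotation, hence by at most a factor $\kh$. Iterating over $\gamma=1,\dots,\Gamma$, a string that starts with weight $\le w^*$ reaches weight at most $w^*\kh^{\Gamma}=\bigO(w^*)$ by the end of the step, and the total number of rotations acting non-trivially on it or on any descendant is at most $w^*\sum_{\gamma=0}^{\Gamma-1}\kh^{\gamma}<w^*\kh^{\Gamma}$.

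Combining the two observations, each initial Pauli spawns at most $2^{\,w^*\kh^{\Gamma}}$ Pauli strings by the end of the step, which is a constant $C=C(w^*,\kh,\Gamma)$ since $w^*$, $\kh$, and $\Gamma$ are all system-size independent. Summing over the $J\le\sum_{w=0}^{w^*}3^{w}\binom{n}{w}\in\bigO(n^{w^*})$ initial Paulis (collisions among descendants of different strings can only decrease the count) yields at most $C\cdot\bigO(n^{w^*})=\bigO(n^{w^*})$ Paulis before truncation, as claimed. As a consistency check one may also note that every generated string has weight $\le w^*\kh^{\Gamma}$, so the count is trivially bounded by the dimension $\bigO(n^{\,w^*\kh^{\Gamma}})$ of that low-weight subspace; the branching argument is what pins the exponent of $n$ down to $w^*$ rather than $\bigO(w^*)$.

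\textbf{Main obstacle.} The one delicate point is the bookkeeping in the middle step: I must be careful that ``simultaneously implementable'' genuinely forces disjoint supports within a layer (so the number of rotations overlapping a weight-$w$ string is $\le w$, not something that scales with $L$), and that the weight growth compounds only by the constant factor $\kh$ per layer, so the cumulative number of relevant rotations — and hence the branching exponent — stays $\bigO(w^*)$ independent of $n$. Everything else is a direct application of \cref{apd:eq:pauli_rotation_branch} together with elementary counting.
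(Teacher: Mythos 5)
Your proposal is correct and follows essentially the same route as the paper's proof: bound the initial Pauli count by $\bigO(n^{w^*})$, use the disjoint-support/light-cone structure of each layer to show only $\bigO(w^*)$ rotations act non-trivially on any string and its descendants, and use the at-most-2-branching of \cref{apd:eq:pauli_rotation_branch} to conclude each initial Pauli spawns only a constant ($n$-independent) number of descendants. Your bookkeeping of the branching constant ($2^{w^*\kh^{\Gamma}}$ from the total number of effective rotations) is if anything slightly cleaner than the paper's layer-by-layer product, and both yield the same $\bigO(n^{w^*})$ conclusion.
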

\begin{proof}
    By our algorithm, the weight of Pauli operators in $\tO$ at the beginning of one Trotter step is at most $w^*$
    so the number of Paulis is at most $\sum_{w=1}^{w^*}3^w \binom{n}{w}\in\bigO(n^{w^*}) $.
    For simplicity, we first consider the case that all Hamiltonian terms are 2-local Pauli and the first-order Trotter.
    Then, one layer of 2-local Pauli rotations can increase the weight of each Pauli operator by at most $w^*$ in a $w^*$-local observable because there are at most $w^*$ such rotations acting non-trivially on each Pauli operator and each rotation could increase Pauli weight at most $1$.
    So, after applying $\Gamma$ layers of local Pauli rotations,
    the maximum weight of Paulis in the observable is at most $(\Gamma+1)w^*$.
    However, it does not necessarily mean that the number of Paulis at the end of one-step Trotter has to be $\bigO(n^{(\Gamma+1)w^*})$.
    It is because the transition (conjugation) of Pauli rotation is sparse,
    i.e., one Pauli rotation acting on a Pauli operator only generates one new Pauli operator cf. \cref{apd:eq:pauli_rotation_branch}.
    See the example in \cref{fig:one_layer}(a).
    Therefore, by this sparsity property, if $w^*\in \bigO(1)$ and $\Gamma\in\bigO(1)$,
    then the number of Paulis in the observable at the end of one Trotter step before truncation is at most
    \begin{equation}
        \qty(\sum_{w=1}^{w^*} 3^w \binom{n}{w})
        \cdot \prod_{j=1}^{\Gamma} 2^{jw^* +1}
         \le \qty(\frac{3e n}{w^*})^{w^*} 
        \cdot 2^{w^*\Gamma(\Gamma+1)/2+\Gamma}
         \in \bigO(n^{w^*}),
    \end{equation}
    where $2^{jw^*+1}$ is the size of binary tree of height $jw^*$ (the maximum weight after $j$ layers). 
    The exponential in $w^*$ and $\Gamma$ is hided in the big-O notation because they are $\bigO(1)$.

    Next, consider the Hamiltonian of $k$-local Paulis and the higher-order Trotter formulas.
    For the $k$-local Hamiltonians with $k,\Gamma\in\bigO(1)$, the weight increase in one-step Trotter is still system-size independent.
    One layer of $k$-local Pauli rotations can increase the weight of each Pauli in a $w^*$-local observable by at most $(k-1)w^*$.
    On the other hand, the high-order Trotter formulas would introduce a constant overhead $\Upsilon=2\cdot 5^{p/2-1}$ to the number of layers in one-step Trotter. 
    Since the weights of Pauli terms in the observable and the Hamiltonian as well as the depth of one-step Trotter are constant by assumption 
    (i.e., $k$, $w^*$, $\Upsilon$, and $\Gamma \in\bigO(1)$),
    the number of Paulis in one truncated Trotter step is still $\bigO(n^{w^*})$.

\end{proof}

With the maximum number of Paulis in one Trotter step, 
we have the following upper bound of the total runtime (the number of inner products to calculate) and the corresponding memory requirement.
\begin{theorem}[Runtime of $\lpd$ Algorithm]\label{apd:thm:runtime}
    Given a local Hamiltonian $H=\sum_{\gamma}^{\Gamma} H_{\gamma}$ with $\Gamma\in\bigO(1)$, 
    any $\ko$-local observable $O=\sum_j \beta_j P_j$ of Paulis with $\ko\in\bigO(1)$, 
    evolution time $t< t_0\in\bigO(1)$, simulation precision $\eps$, 
    and an entangled input state $\ket{\psie}$ required in \cref{apd:cor:entangle_ob_bound}.
    Assume the Pauli coefficients of $H$, $O$ and $\rho$ can be efficiently computed. 
    There is a classical algorithm to approximate the expectation values $\mu=\Tr(\rho U^\dagger O U)$ with error tolerance $2\eps$.
    The classical algorithm runs with
    \begin{equation}
        \textup{time: }\bigO\qty((r+1)\cdot n^{w^*} ), \quad
        \textup{memory: }\bigO\qty(n^{w^*} ),
    \end{equation}
    where the Trotter step $r=\bigO((\fLambda\nfnorm{O}\eps^{-1})^{1/p} t^{1+1/p})$
    and the truncation weight threshold $w^*= \ko+\bigO\qty(\frac{\log(t/\eps)}{\log(1/t)})$.
\end{theorem}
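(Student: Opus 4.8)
The plan is to assemble \cref{apd:thm:runtime} from three ingredients already established: the entangled-state Trotter error bound (\cref{apd:thm:trotter_error_entangled}), the Pauli truncation threshold (\cref{apd:thm:truncation_threshold_entangled}), and the bound on how many Pauli operators survive one truncated Trotter step (\cref{apd:thm:lightcone}). The skeleton has two halves: correctness of the output, then counting operations and storage.

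\emph{Correctness.} I would write $\mu=\Tr(\rho\, U^\dagger O U)$ and $\tilde\mu_{\le w^*}=\Tr(\rho\,\tO^{(r)}_{\le w^*})$ and split, by the triangle inequality, $|\mu-\tilde\mu_{\le w^*}|\le |\Tr(\rho(U^\dagger O U-\pf^{\dagger r}O\pf^{r}))|+|\Tr(\rho(\tO^{(r)}-\tO^{(r)}_{\le w^*}))|$. Since $\ket{\psie}$ meets the entanglement condition of \cref{apd:cor:entangle_ob_bound}, the first (Trotter) term is controlled by the Pauli-$2$-norm bound of \cref{apd:thm:trotter_error_entangled}, so taking $r=\bigO((\fLambda\nfnorm{O}\eps^{-1})^{1/p}t^{1+1/p})$ makes it at most $\eps$. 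For the second (truncation) term, \cref{apd:thm:triangle} reduces it to the sum of high-weight-component norms over the $r$ steps, and \cref{apd:thm:one_step_truncation_error} together with \cref{apd:thm:truncation_threshold_entangled} show that, under the short-time condition $t<t_0\in\bigO(1)$ that makes the geometric cumulation converge, the choice $w^*=\ko+\bigO(\log(t/\eps)/\log(1/t))$ makes it at most $\eps$ as well (absorbing the harmless $\nfnorm{O}$ factor into the logarithm). Summing the two halves gives total error $\le 2\eps$.

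\emph{Runtime and memory.} Here I would invoke \cref{apd:thm:lightcone}: since $\ko,\kh,\Gamma,\Upsilon$ and (for constant $\eps$) $w^*$ are all $\bigO(1)$, at the start of every Trotter step the observable has $\bigO(n^{w^*})$ Pauli terms, and because Pauli-rotation conjugation is sparse (each rotation spawns at most one new Pauli) the constant-depth layers of one step keep the weight growth system-size independent, so after one full step the observable still has $\bigO(n^{w^*})$ terms. Applying one rotation, merging like terms, and truncating therefore costs $\bigO(n^{w^*})$ per layer, hence $\bigO(n^{w^*})$ per Trotter step and $\bigO(r\, n^{w^*})$ over all steps. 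The final step computes $\Tr(\rho\,\tO^{(r)}_{\le w^*})=\sum_Q c_Q\Tr(\rho\,Q)$ by orthonormality of the Pauli basis; each $\Tr(\rho\,Q)$ is $\bigO(1)$ by the hypothesis that the Pauli coefficients of $\rho$ are efficiently computable, so this costs $\bigO(n^{w^*})$, no more than one extra step of Pauli propagation, which is why the bound is $\bigO((r+1)n^{w^*})$. For memory it suffices to store the current observable as a list of $\bigO(n^{w^*})$ support-coefficient pairs, i.e. $\bigO(n^{w^*})$.

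\emph{Main obstacle.} The real work sits in the operator-count bound \cref{apd:thm:lightcone}: one must rule out a blow-up to $n^{(\Gamma+1)w^*}$ (or $n^{(\kh-1)\Gamma w^*}$ for $\kh$-local terms) even though the maximum Pauli weight transiently reaches that value during a step; the saving argument is the sparsity of Pauli-rotation conjugation together with the constant depth of one Trotter step, so that the light cone keeps per-step weight growth $n$-independent. A secondary subtlety is that the whole composition is valid only in the short-time regime $t<t_0$, where the high-weight cumulation of \cref{apd:thm:one_step_truncation_error} converges; and if one instead demands $\eps\sim 1/\poly(n)$, then $w^*=\bigO(\log n)$ and the bound degrades to the quasi-polynomial $n^{\bigO(\log n)}$ noted after the theorem.
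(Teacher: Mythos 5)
Your proposal is correct and follows essentially the same route as the paper's proof: it invokes \cref{apd:thm:lightcone} for the $\bigO(n^{w^*})$ operator count per truncated Trotter step, the entangled-state Trotter bound and \cref{apd:thm:truncation_threshold_entangled} to fix $r$ and $w^*$, and the same orthogonality argument for the final $\bigO(n^{w^*})$ expectation evaluation, yielding $\bigO((r+1)n^{w^*})$ time and $\bigO(n^{w^*})$ memory. Your explicit triangle-inequality split of the $2\eps$ error into Trotter and truncation halves is slightly more detailed than the paper's proof, which leaves that decomposition implicit, but it is the same argument.
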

\begin{proof}   
    The runtime analysis is mainly to \emph{bound the number of non-zero transitions in all steps}. 
    Assume each transition amplitude, 
    i.e., the inner product between two Paulis $\forall P,Q\in\npbasis,\Tr(P Q)$, 
    can be computed in $\bigO(1)$ time.
    Since all high-weight Paulis are truncated with the threshold $w^*$ at the end of each Trotter step and every Trotter step is the same, 
    \cref{apd:thm:lightcone} upper bounds the number of transition amplitudes during the whole evolution as $\bigO(n^{w^*})$.
    Recall \cref{apd:thm:ob_norm_bound_expectation_error_entangled} determines the number of Trotter steps $r=\bigO((\Lambda_2\nfnorm{O}\eps^{-1})^{1/p} t^{1+1/p})$,
    where $\Lambda_2$ is usually polynomial with $n$.
    And \cref{apd:thm:truncation_threshold_entangled} gives the requirement of the truncation weight $w^*= \ko+\bigO\qty(\frac{\log(t/\eps)}{\log(1/t)})$,
    which is a constant if $\ko$, $t$, and $\eps$ are constant.
    In consequence, the runtime of evolution is upper bounded by the maximum number of low-weight Paulis during evolution times the number of Trotter steps $r$,
    that is, $\bigO(r\cdot n^{w^*})$.

    The last step of the $\lpd$ algorithm is to \emph{calculate the expectation value of the observable with respect to a given state $\rho$},
    that is $\Tr(\tO^{(r)}_{\le w^*}\rho)$.
    Assume each coefficient of $\rho$ in Pauli basis can be computed in $\bigO(1)$ time, 
    then the expectation value can be computed in time proportional to the number of Pauli operators in the final observable by orthogonality.
    For example, 
    the density matrix of the typical product state $\ket{\vb{0}}$ is $\rho_{\vb{0}}=\op{\vb{0}}=\frac{1}{2^n}(I+Z)^{\otimes n}$.
    While there are $2^n$ Pauli operators in $\rho_{\vb{0}}$, 
    the Paulis and their coefficients can be efficiently determined.
    Therefore, the runtime of evaluting the inner product $\Tr(\tO^{(r)}_{\le w^*}\rho)$ is at most the number of Paulis in $\tO^{(r)}_{\le w^*}$,
    which is also implied by \cref{apd:thm:lightcone}.
    So, the step of the expectation evaluation does not incur runtime more than one-step Trotter. 
    Consequently, the total runtime of $\lpd$ algorithm is $\bigO((r+1)\cdot n^{w^*})$.
    
    Correspondingly, the \emph{memory} required is the maximum number of Pauli operators during evolution,
    that is $\bigO\qty(n^{w^*})$.

\end{proof}

\end{document}